\theoremstyle{plain}
\newtheorem{theorem}{Theorem}[section]
\newtheorem{lemma}[theorem]{Lemma}
\newtheorem{remark}[theorem]{Remark}
\newtheorem{corollary}[theorem]{Corollary}
\newtheorem{observation}[theorem]{Observation}
\theoremstyle{definition}
\begin{document}
	\title{Improved Deterministic Distributed Maximum Weight Independent Set Approximation in Sparse Graphs}
	\author{Yuval Gil\footnote{yuval.gil@campus.technion.ac.il}
	}
	\date{}
	
	\maketitle		
	
	\begin{abstract}
		We design new deterministic CONGEST approximation algorithms for \emph{maximum weight independent set (MWIS)} in \emph{sparse graphs}. As our main results, we obtain new $\Delta(1+\epsilon)$-approximation algorithms as well as algorithms whose approximation ratio depend strictly on $\alpha$, in graphs with maximum degree $\Delta$ and arboricity $\alpha$. For (deterministic) $\Delta(1+\epsilon)$-approximation, the current state-of-the-art is due to a recent breakthrough by Faour et al.\ [SODA 2023] that showed an $O(\log^{2} (\Delta W)\cdot \log (1/\epsilon)+\log ^{*}n)$-round algorithm, where $W$ is the largest node-weight (this bound translates to $O(\log^{2} n\cdot\log (1/\epsilon))$ under the common assumption that $W=\text{poly}(n)$). As for $\alpha$-dependent approximations, a deterministic CONGEST $(8(1+\epsilon)\cdot\alpha)$-approximation algorithm with runtime $O(\log^{3} n\cdot\log (1/\epsilon))$ can be derived by combining the aforementioned algorithm of Faour et al.\  with a method presented by Kawarabayashi et al.\ [DISC 2020].  As our main results, we show the following.
		\begin{itemize}
			\item A deterministic CONGEST algorithm that computes an $\alpha^{1+\tau}$-approximation for MWIS in $O(\log n\log \alpha)$ rounds for any constant $\tau> 0$. To the best of our knowledge, this is the fastest runtime of any deterministic \emph{non-trivial} approximation algorithm for MWIS to date. Furthermore, for the large class of graphs where $\alpha=\Delta^{1-\Theta(1)}$, it implies a deterministic $\Delta^{1-\Theta(1)}$-approximation algorithm with a runtime of $O(\log n\log \alpha)$ which improves upon the result of Faour et al.\ in both approximation ratio (by a $\Delta^{\Theta(1)}$ factor) and runtime (by an $O(\log n/\log \alpha)$ factor). 
			\item A deterministic CONGEST algorithm that computes an $O(\alpha)$-approximation for MWIS in $O(\alpha^{\tau}\log n)$ rounds for any (desirably small) constant $\tau>0$. This improves the runtime of the best known deterministic $O(\alpha)$-approximation algorithm in the case that $\alpha=O(\text{poly}\log n)$. This also leads to a deterministic $\Delta(1+\epsilon)$-approximation algorithm with a runtime of $O(\alpha^{\tau}\log n\log (1/\epsilon))$ which improves upon the runtime of Faour et al.\  in the case that $\alpha=O(\text{poly}\log n)$.
			\item A deterministic CONGEST algorithm that computes a $(\lfloor(2+\epsilon)\alpha\rfloor)$-approximation for MWIS in $O(\alpha\log n)$ rounds. This improves upon the best known $\alpha$-dependent approximation ratio by a constant factor.
			\item A deterministic CONGEST algorithm that computes a $2d^2$-approximation for MWIS in time $O(d^2+\log ^{*}n)$ in a directed graph with out-degree at most $d$. The dependency on $n$ is (asymptotically) optimal due to a lower bound by Czygrinow et al.\  [DISC 2008] and Lenzen and Wattenhofer  [DISC 2008].
		\end{itemize}
		We note that a key ingredient to all of our algorithms is a novel deterministic method that computes a high-weight subset of nodes whose induced subgraph is sparse.
	\end{abstract}

	\section{Introduction}
	\label{section:introduction}
	
	The problem of finding a \emph{maximum independent set (MaxIS)} in a graph is long known to be NP-hard. This is because a MaxIS is the complement set of a minimum vertex cover which is among the $21$ problems that appear in Karp's seminal work \cite{Karp72}. 
	
	Further research showed that assuming $NP\neq ZPP$, MaxIS cannot be approximated within $n^{1-\epsilon}$ efficiently for any constant $\epsilon >0$ \cite{Hastad96}. In graphs with maximum degree $\Delta$, a bound of $\Omega(\Delta /\log ^{2}\Delta)$ on the approximation ratio is shown in \cite{AustrinKS11} assuming the Unique Games Conjecture; whereas a bound of $\Omega(\Delta /\log ^{4}\Delta)$ is shown in \cite{Chan16} assuming that $P\neq NP$. 
	
	On the positive side, efficient approximation algorithms were developed by numerous researchers. An $O(n(\log\log n)^{2}/\log^3 n)$-approximation is presented in \cite{Feige04}. As for approximations depending on $\Delta$, various studies obtained an $O(\Delta\log\log \Delta/\log\Delta)$-approximation (some of which also apply for the weighted case) \cite{AlonK98,Halldorsson98,Halldorsson00,Halperin02,KargerMS98}.
	
	In this paper, we focus on distributed \emph{maximum weight independent set (MWIS)} approximation algorithms. The MWIS problem has been the subject of various studies in both the LOCAL and CONGEST models of distributed computing. 
	
	In the LOCAL model, a deterministic  $(1+\epsilon)$-approximation algorithm for planar graphs in $O(\log^{*}n)$ rounds was presented in \cite{CzygrinowHW08}. For general graphs, the authors of \cite{GhaffariKM17} obtained a randomized $(1+\epsilon)$-approximation algorithm in $\text{poly}\log n$ rounds based on network decomposition. This result was later derandomized in \cite{RozhonG20}. As for lower bounds, it is shown in \cite{CzygrinowHW08} and \cite{LenzenW08} that any deterministic $O(\log^{*}n)$-approximation for (unweighted) MaxIS requires $\Omega(\log^{*}n)$ rounds. This is extended in \cite{KawarabayashiKS20} to an $\Omega(\log ^{*}n)$ lower bound on the number of rounds required for a randomized algorithm to compute an independent set of size $\Omega(n/\Delta)$. 
	
	In the CONGEST model, the main result of \cite{Bar-YehudaCGS17} is a $\Delta$-approximation in $O(\text{MIS}(G)\cdot\log W)$ rounds, where $\text{MIS}(G)$ is the number of rounds required to find a maximal independent set (MIS) in graph $G$ and $W$ is the largest weight (commonly assumed to be polynomial in $n$). MWIS was further studied in \cite{KawarabayashiKS20}, where a randomized $\Delta(1+\epsilon)$-approximation in $\text{poly}(\log \log n)$ rounds is presented. Deterministic MWIS approximation was considered recently in \cite{FaourGGKR22}. Among other results, a deterministic $B(1+\epsilon)$-approximation in $O(\log^{2}(\Delta W)\log(\epsilon^{-1})+\log ^{*}n)$ rounds is presented, where $B$ is the neighborhood independence. On the negative side, hardness results for exact and approximate MaxIS computation in the CONGEST model are shown in \cite{Censor-HillelKP17,BachrachCDELP19,EfronGK20}.
	
	\subsection{Our Objective and Motivation} \label{section:motivation}
	In this paper, our goal is to study deterministic CONGEST algorithms for the MWIS problem in \emph{sparse graphs}. Distributed graph algorithms in sparse graphs have been a focal point of much research since they were considered in the seminal work of Goldberg et al.\ \cite{GoldbergPS88}. In  \cite{BarenboimE10}, Barenboim and Elkin presented fast deterministic algorithms for some classical symmetry-breaking problems on \emph{bounded arboricity} graphs. The arboricity of a graph, denoted by $\alpha$, is the minimum number of edge disjoint forests into which the edges of the graph can be partitioned. The rich class of bounded arboricity graphs include many well-studied and important graph families such as planar graphs, graphs of bounded treewidth,  and graphs excluding a fixed minor.
	
	In recent years, there is a growing interest in the class of bounded arboricity graphs (and its subclasses) in the context of \emph{distributed optimization}. As it turns out, it is often the case that bounded arboricity graphs allow for a better approximation and/or faster algorithms than in general graphs. Hence, a plethora of research has been devoted to distributed algorithms that operate on bounded arboricity graphs for various classical optimization problems such as maximum matching, minimum dominating set, and minimum vertex cover (see, e.g., \cite{abs-2305-01324,Dory0I22,CzygrinowHS09,CzygrinowHW08',CzygrinowHW08,CzygrinowHWW19,LenzenW10,MorganSW21,AmiriSS16,FaourFK21}). 
	
	The motivation for studying MWIS in bounded arboricity graphs is similar --- in the common case, the approximation ratio (and natural barrier) obtained by CONGEST algorithms is linear in $\Delta$ (which bounds $\alpha$ from above). Therefore, approximation algorithms that depend strictly on $\alpha$ are favorable for a large class of graphs. In light of that, Kawarabayashi et al.\ \cite{KawarabayashiKS20} showed that given a CONGEST $\Delta(1+\epsilon)$-approximation algorithm $\mathcal{A}$ with runtime $T(n,\Delta)$ (for $n$-node graph with maximum degree $\Delta$), there is a CONGEST $8(1+\epsilon)\alpha$-approximation algorithm for MWIS in time $O(T(n,O(\alpha))\cdot\log n )$.\footnote{We believe that the approximation ratio stated in \cite{KawarabayashiKS20} can be improved to $4(1+\epsilon)$ without affecting the (asymptotic) runtime by a slightly different choice of constants.} By the work of \cite{KawarabayashiKS20}, this leads to an $O(\log n \cdot \text{poly}(\log\log n)/\epsilon)$-round randomized algorithm. Regarding deterministic algorithms, plugging in the aforementioned result of \cite{FaourGGKR22} implies an $O(\log^{3} n\cdot \log(\epsilon^{-1}))$-round algorithm; whereas plugging in the deterministic $O(\Delta+\log^{*}n)$-round algorithm of \cite{Bar-YehudaCGS17} implies an $O(\log n\cdot (\alpha+\log ^{*}n))$-round algorithm. 
	
	\subsection{Our Contributions} \label{section:contributions}
	In this paper, we present new distributed deterministic approximation algorithms for the \emph{maximum weight independent set (MWIS)} problem. All of our algorithms operate in the CONGEST model (see Section \ref{section:preliminaries} for a definition). A key ingredient in our algorithms is a new procedure called $\mathtt{Sparse\_Set}$ which is outlined below.
	
	\subparagraph{Outline of Procedure $\mathtt{Sparse\_Set}$.} The goal of the $\mathtt{Sparse\_Set}$ procedure is simple: given a graph $G=(V,E)$ with node-weight function $w:V\rightarrow \mathbb{R}_{\geq 0}$, we seek to compute a subset $X\subseteq V$ with large weight $w(X)=\sum_{v\in X}w(v)$ whose induced subgraph $G(X)$ is relatively sparse. To achieve this goal, $\mathtt{Sparse\_Set}$ relies on a new notion called \emph{$\beta$-bounded coloring}. A $\beta$-bounded coloring is a proper node-coloring such that each node has at most $\beta$ neighbors with larger colors for some integer parameter $\beta>0$ (refer to Section \ref{section:preliminaries} for a formal definition). The $\mathtt{Sparse\_Set}$ procedure takes as input a $\beta$-bounded coloring $c$ of graph $G$ and an integer parameter $1\leq f\leq \beta$, and returns a subset $X\subseteq V$. The properties of $\mathtt{Sparse\_Set}$ are specified in the following lemma.
	
	\begin{lemma}\label{lemma:sparse-set-contributions}
		Let $G=(V,E)$ be a graph with node-weight function $w:V\rightarrow \mathbb{R}_{\geq 0}$, let $c$ be a $\beta$-bounded coloring of $G$ and let $1\leq f\leq \beta$ be an integer parameter. Upon termination, $\mathtt{Sparse\_Set}(c,f)$ returns a subset $X\subseteq V$ that satisfies: (1)  $|X\cap L(v)|< \beta/f$ for each selected node $v\in X$, where $L(v)=\{u\in N(v)\mid c(u)>c(v)\}$ is the set of $v$'s neighbors with larger color; (2) $ f\cdot w(X)\geq OPT(G)$, where $OPT(G)$ is the weight of a MWIS in $G$; and (3) $2\cdot f\cdot w(X)\geq w(V)$. The runtime of $\mathtt{Sparse\_Set}(c,f)$ is $O(k)$, where $k$ is the total number of distinct colors assigned by the coloring $c$.
	\end{lemma}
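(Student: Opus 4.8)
The plan is to implement $\mathtt{Sparse\_Set}(c,f)$ as a single sweep over the color classes $C_1,\dots,C_k$ of $c$, processing them from the largest color down to the smallest. Because $c$ is proper, each $C_i$ is an independent set, so all nodes of $C_i$ can act simultaneously in $O(1)$ rounds, which yields the claimed $O(k)$ runtime. The selection rule will be engineered so that property~(1) holds by construction: when $v\in C_i$ is processed, every neighbor in $L(v)$ has a larger color and has therefore already been handled, so the value $|X\cap L(v)|$ is already frozen. I would have the procedure maintain up to $f$ disjoint candidate groups and place $v$ into a group currently containing few vertices of $L(v)$; since $|L(v)|\le\beta$, a pigeonhole argument over the $f$ groups exhibits a group holding at most $\lfloor\beta/f\rfloor$ vertices of $L(v)$, and declaring the eventual output $X$ to be one such group forces $|X\cap L(v)|<\beta/f$ for all $v\in X$. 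The one subtlety is the boundary case where $f\mid\beta$ and all $\beta$ already-placed neighbors of $v$ happen to be spread perfectly evenly among the $f$ groups (so $\lfloor\beta/f\rfloor=\beta/f$ and strictness would fail); there I would drop a single, carefully chosen (lightest available) vertex and argue separately that such drops cost little weight.

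For properties~(2) and~(3) — the two statements that $X$ is heavy — the plan is to carry a weight-charging argument along the same high-to-low color order, in the style of the local-ratio method. Both will follow from a bound on the total weight of vertices the procedure discards (those ending up outside every group): if the discarded weight is at most $w(V)/2$, then the heaviest of the $f$ groups has weight at least $w(V)/(2f)$, giving~(3); and if the discarded weight is at most $w(V)-OPT(G)$, then the $f$ groups jointly retain weight at least $OPT(G)$, so the heaviest one has weight at least $OPT(G)/f$, giving~(2). The second bound is the real content. To establish it I would exhibit, implicitly, a feasible solution to the dual of the MWIS LP relaxation (a fractional cover of the node-weights) of total value at most $f\cdot w(X)$: each vertex $v$ left out by the procedure has enough selected larger-colored neighbors that the edges joining $v$ to them can absorb $w(v)$ at rate $1/f$, while each selected vertex is charged $O(f)$ times its own weight in total; LP weak duality then yields $OPT(G)\le f\cdot w(X)$. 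The weight-aware part of the selection rule — which admissible group to use, and when to leave a vertex out entirely — is precisely what makes this dual feasible while keeping the groups heavy.

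I expect the main obstacle to be exactly this interaction: enforcing property~(1) forces some vertices to be discarded, and the discarding must be done so that neither weight lower bound is spoiled and a valid dual certificate for the $OPT(G)$ bound still exists. The delicate point is the asymmetry inherent in a $\beta$-bounded coloring — a vertex has at most $\beta$ larger-colored neighbors but may have arbitrarily many smaller-colored ones — so every charge must be routed toward the larger-color side, where out-degree is controlled by $\beta$; making the constants line up (in particular the factor $f$ in~(2) against the factor $2f$ in~(3)) is where the bulk of the work, and the ``novel deterministic method'' promised in the abstract, will sit.
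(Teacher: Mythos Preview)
Your plan diverges from the paper's and has a real gap.

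The paper does \emph{not} maintain $f$ candidate groups and output the heaviest one. Its $\mathtt{Sparse\_Set}$ runs in two passes over the colour classes. A first pass from \emph{small} colours to \emph{large} ones computes, for every node $v$, a residual weight $\lambda(v)=\max\{0,\,w(v)-\sum_{u\in S(v)}y_{u,v}\}$ and sets the dual variables on the edges to $L(v)$ to $\lambda(v)\cdot f/|L(v)|$; a node with $\lambda(v)=0$ is eliminated already here. A second pass from large colours to small ones then decides selection: an undecided $v$ is eliminated iff at least $|L(v)|/f$ of its neighbours in $L(v)$ are already selected, and the output $X$ is the single set of selected nodes. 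Property~(1) is immediate from this rule, and properties~(2) and~(3) are proved by showing (after adding a virtual zero-weight apex to cover nodes with $L(v)=\emptyset$) that the dual $\mathbf{y}$ is feasible and has objective at most $f\cdot w(X)$; the latter uses a partition-and-charge argument in which each stage-2-eliminated node splits $L'(v)$ into $|X\cap L'(v)|$ blocks of size $\le f$ and routes each block's dual mass to one selected neighbour. The bottom-up pass is essential: it is what makes the dual feasible at \emph{every} node, including the selected ones, and it is entirely absent from your sketch.

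The concrete gap in your $f$-group plan is that the ``weight-aware part of the selection rule'' carries all the difficulty and you have not said what it is. Without it the scheme fails already for $f=\beta$: take a star with centre $v$ of colour $1$ and leaves $u_1,\dots,u_\beta$ of colours $2,\dots,\beta{+}1$. The leaves are processed first with $L(u_i)=\emptyset$, so any placement is admissible; if they land in distinct groups, then when $v$ arrives every group meets $L(v)$ and $v$ must be discarded. Making $w(v)$ huge and the leaves light breaks~(2). Your dual sketch does not rescue this: you only cover the constraint of a \emph{discarded} vertex, but weak duality needs feasibility at every vertex, in particular at the selected ones and at the vertices sitting in the other $f-1$ groups, and you assign no $y$-mass for those. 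Using $2f$ groups instead of $f$ would eliminate all discards (pigeonhole now gives a group with $<\beta/f$ members of $L(v)$) and yield~(3), but then the heaviest group is only guaranteed $w(V)/(2f)\ge OPT(G)/(2f)$, missing~(2) by a factor of $2$. Getting the exact factor $f$ in~(2) is precisely what forces the paper's bottom-up residual-weight pass and its explicit dual construction.
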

	Section \ref{section:sparse-set} is mostly dedicated to proving Lemma \ref{lemma:sparse-set-contributions}. This is done by means of a \emph{primal-dual} approach. 
	
	\subparagraph{Our bounds.} In Sections \ref{section:approx-arboricity} and \ref{section:maxis-approximation-directed}, we use $\mathtt{Sparse\_Set}$ in various ways to construct new approximation algorithms. Refer to Table \ref{table:comparison} for a list of our bounds.

	\begin{table}
		\begin{center}
			\begin{tabular}{| l| l| l| }
				\hline
				\textbf{Approx.} & \textbf{Runtime}  & \textbf{Notes} \\
				
				\hline
				$\alpha^{1+\tau}$  & $O(\log n\log\alpha)$ & for any constant $\tau>0$
				\\
				\hline
				$\Delta^{1-\Theta(1)}$ & $O(\log n\log \alpha)$ &  restricted to graphs where $\alpha=\Delta^{1-\Theta(1)}$;\\
				&&  improves approx.\ ratio of \cite{FaourGGKR22} by a $\Delta^{\Theta (1)}$ factor;\\
				&& improves runtime of \cite{FaourGGKR22} by an $O(\log n/\log \alpha)$ factor
				\\
				\hline
				$O(\alpha)$ & $O(\alpha^{\tau} \log n)$& for any constant $\tau>0$;\\
				&&improves runtime of \cite{KawarabayashiKS20,FaourGGKR22} when $\alpha=O( \text{poly}\log n)$
				\\
				\hline
				$\Delta(1+\epsilon)$ & $O(\alpha^{\tau}  \log n\log (1/\epsilon))$ & for any constant $\tau>0$;\\
				&&improves runtime of \cite{FaourGGKR22} when $\alpha=O( \text{poly}\log n)$
				\\
				\hline
				$\lfloor (2+\epsilon)\cdot \alpha\rfloor$ & $O(\alpha \log n)$ & improves approx.\  ratio of \cite{KawarabayashiKS20}
				\\
				\hline
				$O(\alpha^{2})$ & $O(\log n+\sqrt {\alpha\log n}$ &\\
				& $+\alpha^{3/4}\log \alpha)$ &
				\\
				\hline
				$2d^{2}$ & $O(d^2+\log ^{*}n)$  &directed graphs with $\text{out-degree}\leq d$;\\ 
				&& $O(\log ^{*}n)$ is necessary due to \cite{CzygrinowHW08,LenzenW08}
				\\
				\hline
			\end{tabular}
		\end{center}
		\caption{A list of our MWIS approximations. Here, $\alpha$ denotes the arboricity and $\Delta$ denotes the maximum degree. Runtime improvements for $O(\alpha)$-approximation algorithm are compared with the deterministic $O(\log^{3}n)$-round algorithm (assuming $W=\text{poly}(n)$) derived from \cite{KawarabayashiKS20} and \cite{FaourGGKR22} (see Section \ref{section:motivation} for more details).}
		\label{table:comparison}
	\end{table}
	
	In Section  \ref{section:approx-arboricity}, we consider $\alpha$-arboricity graphs. First, using $\mathtt{Sparse\_Set}$ in a somewhat straightforward manner, we get the following theorem.
	\begin{theorem}\label{theorem:simple-arboricity-contributions}
		For any constant $\epsilon>0$, there exists a deterministic CONGEST algorithm that computes a $(\lfloor (2+\epsilon)\cdot \alpha\rfloor)$-approximation for MWIS in $O(\alpha\log n)$ rounds.
	\end{theorem}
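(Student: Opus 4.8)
The plan is to reduce the theorem to a single invocation of $\mathtt{Sparse\_Set}$: the key observation is that if the input coloring is $\beta$-bounded with $\beta=\lfloor(2+\epsilon)\alpha\rfloor$ and we call $\mathtt{Sparse\_Set}(c,f)$ with $f=\beta$, then property~(1) of Lemma~\ref{lemma:sparse-set-contributions} forces the returned set $X$ to be independent, while property~(2) guarantees $w(X)\ge OPT(G)/\beta$. So the whole work is in (a) producing, fast, a $\beta$-bounded coloring with $\beta$ as small as $\lfloor(2+\epsilon)\alpha\rfloor$ and (b) keeping its number of colors $k$ small enough that the $O(k)$-round $\mathtt{Sparse\_Set}$ call fits in $O(\alpha\log n)$.

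For step (a) I would use the classical Nash--Williams/peeling forest decomposition. Set $\beta=\lfloor(2+\epsilon)\alpha\rfloor$ and repeatedly, in parallel, delete every node whose current degree is at most $\beta$; let $V_i$ be the set of nodes deleted in the $i$-th round. Since a graph of arboricity $\alpha$ has average degree below $2\alpha$, a Markov-type estimate shows that at least a constant fraction $\delta=\delta(\epsilon)>0$ of the surviving nodes have degree $\le\beta$ (this is precisely where the slack in the ``$+\epsilon$'' is spent: one needs $\beta+1>2\alpha/(1-\delta)$, and $\beta+1\ge(2+\epsilon)\alpha$ gives $\delta=\epsilon/(2+\epsilon)$), so after $\ell=O(\log n)$ rounds all nodes are gone. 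By construction each $v\in V_i$ has at most $\beta$ neighbours in $\bigcup_{j\ge i}V_j$; in particular $G[V_i]$ has maximum degree at most $\beta=O(\alpha)$. Now assign layer $i$ the colour block $\{(i-1)(\beta+1)+1,\dots,i(\beta+1)\}$ and, in parallel over all layers, compute a proper colouring of each $G[V_i]$ using only the colours of its own block; a deterministic $(\Delta'+1)$-colouring of a graph with $\Delta'=O(\alpha)$ runs in $O(\alpha\log n)$ CONGEST rounds by standard techniques (e.g.\ Kuhn--Wattenhofer on top of a Linial colouring), so this costs $O(\alpha\log n)$ rounds and yields a proper colouring $c$ with $k=\ell(\beta+1)=O(\alpha\log n)$ colours. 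Ordering colours by numerical value, a node $v\in V_i$ has larger-coloured neighbours only among its neighbours in layers $\ge i$, of which there are at most $\beta$; hence $c$ is $\beta$-bounded.

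For step (b): invoke $\mathtt{Sparse\_Set}(c,f)$ with $f=\beta$ (a valid choice since $1\le f\le\beta$), costing $O(k)=O(\alpha\log n)$ rounds. Lemma~\ref{lemma:sparse-set-contributions}(1) gives $|X\cap L(v)|<\beta/f=1$, i.e.\ $X\cap L(v)=\emptyset$ for every $v\in X$; since $c$ is proper, an edge inside $X$ would place one endpoint in the other's $L(\cdot)$, so $X$ is an independent set. Lemma~\ref{lemma:sparse-set-contributions}(2) gives $\beta\cdot w(X)=f\cdot w(X)\ge OPT(G)$, hence $w(X)\ge OPT(G)/\lfloor(2+\epsilon)\alpha\rfloor$, i.e.\ $X$ is a $(\lfloor(2+\epsilon)\alpha\rfloor)$-approximate MWIS. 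Adding the $O(\log n)$ rounds for the peeling and the $O(\alpha\log n)$ rounds for the intra-layer colourings, the total is $O(\alpha\log n)$.

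I expect the only real obstacle to be calibrating step (a): we need $\beta$ to be \emph{exactly} $\lfloor(2+\epsilon)\alpha\rfloor$ (this value is the approximation ratio, via $f=\beta$) \emph{and} $k=O(\alpha\log n)$ (this value dictates the runtime of $\mathtt{Sparse\_Set}$), so the peeling threshold and the per-layer palette sizes must be chosen carefully, and one must check that deleting all nodes of current degree $\le\beta$ still removes a constant fraction of whatever subgraph survives. Everything else — parallelising the intra-layer colourings across layers, and the final arithmetic turning Lemma~\ref{lemma:sparse-set-contributions} into the stated ratio — is routine. A minor caveat: the nodes need to know $\alpha$ (or a constant-factor upper bound) to set $\beta$, which is a standard assumption in this line of work.
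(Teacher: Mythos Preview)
Your proposal is correct and follows essentially the same route as the paper: your peeling procedure is exactly Barenboim--Elkin's $\mathtt{BE\_Partition}$, your per-layer $(\Delta'+1)$-colouring combined with the layer index is precisely the paper's colouring $c(v)=(i,\varphi_i(v))$, and your final call $\mathtt{Sparse\_Set}(c,\beta)$ with $\beta=\lfloor(2+\epsilon)\alpha\rfloor$ is the paper's line~\ref{line:c-delta} (the special case $f=\beta$ is isolated in the paper as Lemma~\ref{lemma:beta-approximation}). The only cosmetic differences are that the paper cites $\mathtt{BE\_Partition}$ as a black box and uses a faster intra-layer colouring (Barenboim's $O(\Delta^{3/4}\log\Delta+\log^{*}n)$), but this does not change the overall $O(\alpha\log n)$ bound.
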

	By a simple modification, we also establish the following theorem.
	\begin{theorem}\label{theorem:trade-off-alpha-contributions}
		There exists a deterministic CONGEST algorithm that computes an $O(\alpha^{2})$-approximation for MWIS in $O(\log n+\mathtt{COL}(n,(2+\epsilon)\cdot \alpha)+\sqrt {\alpha\log n})$ rounds, where $\mathtt{COL}(n,\Delta)$ is the runtime of $(\Delta+1)$-coloring an $n$-node graph with maximum degree $\Delta$.
	\end{theorem}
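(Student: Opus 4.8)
The plan is to reuse the coloring machinery from the proof of Theorem~\ref{theorem:simple-arboricity-contributions} and then apply Procedure $\mathtt{Sparse\_Set}$ in two nested rounds, trading an extra $O(\alpha)$ factor in the approximation ratio for a near‑quadratic speedup of the color sweep that dominates the running time there. Concretely, I would first build, exactly as in Theorem~\ref{theorem:simple-arboricity-contributions}, an $H$‑partition of $G$ into $\ell=O(\log n)$ layers (in $O(\log n)$ rounds) and then properly color each layer's induced subgraph in parallel using disjoint palettes; since every layer has maximum degree below $(2+\epsilon)\alpha$, this costs $O(\log n+\mathtt{COL}(n,(2+\epsilon)\alpha))$ rounds and yields a $\beta$‑bounded coloring $c$ of $G$ with $\beta=\lfloor(2+\epsilon)\alpha\rfloor$ and $k=O(\alpha\log n)$ colors.

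Next I would cut the color set $\{1,\dots,k\}$ into $r=\lceil\sqrt{k}\rceil$ consecutive blocks of at most $r$ colors each, and let $V_j$ be the set of nodes whose $c$‑color lies in block $j$. For every $j$ in parallel I would invoke $\mathtt{Sparse\_Set}$ on the induced subgraph $G(V_j)$ with the coloring $c$ restricted to $V_j$ and with parameter $f=\beta$. The restricted coloring is still $\beta$‑bounded on $G(V_j)$ (passing to an induced subgraph only shrinks each set $L(v)$) and uses only $O(r)$ distinct colors, so by Lemma~\ref{lemma:sparse-set-contributions} each of these runs finishes in $O(r)=O(\sqrt{\alpha\log n})$ rounds; since $f=\beta$, part~(1) gives $|X_j\cap L(v)|<\beta/\beta=1$ for every selected $v\in X_j$, i.e.\ $v$ has no neighbor in $X_j$ of larger color, and as one endpoint of any edge carries the larger color this forces $X_j$ to be an independent set of $G(V_j)$, while part~(3) gives $2\beta\cdot w(X_j)\ge w(V_j)$. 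Setting $X=\bigcup_j X_j$ and summing over the (disjoint) blocks yields $2\beta\cdot w(X)\ge w(V)\ge OPT(G)$. The point of forcing the $X_j$ to be independent is that the \emph{block index} is now a proper coloring $c'$ of $G(X)$ with only $O(\sqrt{k})$ colors, and $c'$ is $\beta$‑bounded on $G(X)$ because a neighbor of $v\in X_j$ lying in a higher block automatically has a larger $c$‑color, so the $c'$‑larger neighbors of $v$ form a subset of $L(v)$ in $G$. I would then run $\mathtt{Sparse\_Set}$ once more, on $G(X)$ with the coloring $c'$ and $f=\beta$; this again costs $O(\sqrt{k})=O(\sqrt{\alpha\log n})$ rounds and, by part~(1), returns an independent set $Y\subseteq X$ of $G(X)$ — hence of $G$ — with $2\beta\cdot w(Y)\ge w(X)$ by part~(3).

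Chaining the two weight inequalities gives $w(Y)\ge OPT(G)/(4\beta^2)=OPT(G)/O(\alpha^2)$, so $Y$ is the desired $O(\alpha^2)$‑approximation, and the total running time is $O(\log n+\mathtt{COL}(n,(2+\epsilon)\alpha)+\sqrt{\alpha\log n})$ — accounting respectively for the $H$‑partition, the within‑layer coloring, and the two $\mathtt{Sparse\_Set}$ invocations. The only delicate points, and where I expect the write‑up to spend its effort, are verifying that each within‑block set $X_j$ genuinely is independent (so that the block index is a truly \emph{proper} coloring of $G(X)$) and that the bound parameter handed to $\mathtt{Sparse\_Set}$ on $G(V_j)$ and on $G(X)$ may legitimately be taken to be the original $\beta$; both reduce to the elementary fact that $L(\cdot)$ can only shrink when we pass to an induced subgraph or coarsen the color classes into blocks. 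Everything else is a direct substitution into Lemma~\ref{lemma:sparse-set-contributions} together with the $H$‑partition construction already used for Theorem~\ref{theorem:simple-arboricity-contributions}.
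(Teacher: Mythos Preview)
Your proposal is correct and follows essentially the same idea as the paper: build the $\beta$-bounded coloring $c$ with $k=O(\alpha\log n)$ colors exactly as in Theorem~\ref{theorem:simple-arboricity-contributions}, split the color range into $O(\sqrt{k})$ blocks, and run $\mathtt{Sparse\_Set}$ twice (once per coordinate of the split), losing a factor $O(\beta)$ each time. The one difference is the \emph{order} of the two sweeps. The paper (Algorithm~\ref{alg:trade-off-beta}) first runs $\mathtt{Sparse\_Set}$ with the block-index coloring $c_1$ on the subgraph of $c_1$-bichromatic edges, obtaining a set $X$ in which all surviving edges are within a single block, and then runs $\mathtt{Sparse\_Set}$ with the within-block coloring $c_2$ on $G(X)$. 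You do the reverse: first run $\mathtt{Sparse\_Set}$ within each block in parallel (handling the within-block edges), so that the block index becomes a proper coloring of $G(X)$, and then run $\mathtt{Sparse\_Set}$ with the block-index coloring. Both orderings are valid; the paper's order uses Property~(2) of Lemma~\ref{lemma:sparse-set-contributions} in the first call (since $OPT(G_1)\ge OPT(G)$) and Property~(3) in the second, yielding a $2\beta^2$-approximation, whereas your ordering relies on Property~(3) twice and gives $4\beta^2$ --- still $O(\alpha^2)$. The verifications you flag (that $L(\cdot)$ only shrinks on induced subgraphs and that the block index is proper on $G(X)$) are exactly the points the paper checks in its proof of Lemma~\ref{lemma:trade-off}, just for the complementary coordinate.
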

	For example, plugging in the $(\Delta+1)$-coloring algorithm of \cite{Barenboim16} leads to a runtime bound of  $O(\log n+\alpha^{3/4}\log \alpha+\sqrt {\alpha\log n})$.

	We then relax the $\lfloor (2+\epsilon)\cdot \alpha\rfloor$ approximation ratio of Theorem \ref{theorem:simple-arboricity-contributions} in favor of faster algorithms. Using $\mathtt{Sparse\_Set}$ in a slightly more elaborate way and pairing it with an arbdefective coloring (refer to Section \ref{section:preliminaries} for a definition) algorithm of \cite{BarenboimEG22}, we obtain the following lemma.
	
	\begin{lemma}\label{lemma:template-contributions}
		For any integer $k>0$, there exists a deterministic CONGEST algorithm that computes an $(8^{k}\cdot \alpha)$-approximation for MWIS in $O(k\cdot \alpha^{1/k} \cdot \log n)$ rounds.
	\end{lemma}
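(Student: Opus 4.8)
The plan is to run $\mathtt{Sparse\_Set}$ iteratively for $k$ phases, peeling off at each phase an induced subgraph whose arboricity is smaller by roughly an $\alpha^{1/k}$ factor while paying only an $O(\alpha^{1/k})$ factor in weight, so that after $k$ phases the surviving set of nodes is itself an independent set of $G$ of weight $\Omega\big(OPT(G)/(8^{k}\alpha)\big)$. Set $U_{0}=V$, $G_{0}=G$, $\alpha_{0}=\alpha$. In phase $i=1,\dots,k$ we are given $G_{i-1}=G(U_{i-1})$, which will have arboricity at most $\alpha_{i-1}$; using the arbdefective coloring algorithm of \cite{BarenboimEG22} we compute a $\beta_{i}$-bounded coloring $c_{i}$ of $G_{i-1}$ with $\beta_{i}=O(\alpha_{i-1})$ and with $O(\alpha^{1/k}\log n)$ distinct colors, in $O(\alpha^{1/k}\log n)$ rounds. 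We then invoke $\mathtt{Sparse\_Set}(c_{i},f_{i})$ on $G_{i-1}$, with $f_{i}=\Theta(\alpha^{1/k})$ for $i<k$ and $f_{k}=\beta_{k}$ in the last phase, and let $U_{i}$ be the returned set; by Lemma~\ref{lemma:sparse-set-contributions} each such call costs $O(\alpha^{1/k}\log n)$ rounds, so the total round complexity is $O(k\,\alpha^{1/k}\log n)$.

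The analysis is driven by the three guarantees of Lemma~\ref{lemma:sparse-set-contributions}. Property~(1) gives $|U_{i}\cap L(v)|<\beta_{i}/f_{i}$ for every $v\in U_{i}$, so $c_{i}$ restricted to $U_{i}$ is a bounded coloring of $G_{i}=G(U_{i})$ that certifies degeneracy, hence arboricity, $O(\beta_{i}/f_{i})=O(\alpha_{i-1}/\alpha^{1/k})$; iterating, $\alpha_{i}\le\alpha^{1-i/k}$ up to a constant-to-the-$i$ factor, so after $k-1$ phases $\alpha_{k-1}=O(\alpha^{1/k})$, and then the choice $f_{k}=\beta_{k}$ forces $|U_{k}\cap L(v)|<1$, i.e.\ $U_{k}$ is an independent set of $G$. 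For the weight, property~(2) in phase~$1$ gives $f_{1}\,w(U_{1})\ge OPT(G)$ and property~(3) in phases $2,\dots,k$ gives $w(U_{i})\ge w(U_{i-1})/(2f_{i})$, so chaining yields $w(U_{k})\ge OPT(G)\big/\big(2^{k-1}\prod_{i=1}^{k}f_{i}\big)$. Since the $f_{i}$ telescope against the arboricities, $\prod_{i=1}^{k}f_{i}=\alpha\cdot 2^{O(k)}$, hence $w(U_{k})\ge OPT(G)/(2^{O(k)}\alpha)$; a careful choice of the constants in the bounded colorings and of the $f_{i}$'s keeps this $2^{O(k)}$ factor at most $8^{k}$, making $U_{k}$ the desired $(8^{k}\alpha)$-approximation. (Alternatively, one may stop after $k-1$ phases and invoke $\mathtt{Sparse\_Set}$ once more on the constant-arboricity graph $G(U_{k-1})$, exactly as in Theorem~\ref{theorem:simple-arboricity-contributions}, to extract the final independent set.)

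I expect the main obstacle to be the per-phase coloring step. $\mathtt{Sparse\_Set}$ requires a genuine \emph{proper} $\beta$-bounded coloring, and a near-clique of size $\Theta(\alpha_{i-1})$ inside $G_{i-1}$ forces $\Omega(\alpha_{i-1})$ colors into any proper coloring of $G_{i-1}$, so a naive call on $G_{i-1}$—whose running time is proportional to the number of colors—would be far too slow already in the first phase. The role of the arbdefective coloring of \cite{BarenboimEG22} is precisely to first shatter the dense parts into a small number of color classes, each of proportionally smaller arboricity, and then (recursively) layer those classes into a bounded coloring that uses only $O(\alpha^{1/k}\log n)$ colors while keeping the number of larger-colored neighbors at $O(\alpha_{i-1})$; arranging this so that the number of colors, the boundedness parameter, and the round complexity all come out right—deterministically and in CONGEST—is the technical heart of the argument. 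The remaining pieces—tracking the geometric decay of the arboricity, bookkeeping the constants so the accumulated loss stays below $8^{k}$, and checking that every subroutine runs in CONGEST—are routine.
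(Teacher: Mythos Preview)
Your high-level plan---iterate $k$ phases, each reducing the arboricity by a factor $\Theta(\alpha^{1/k})$ at a cost of a factor $O(\alpha^{1/k})$ in weight, then finish with the basic algorithm of Theorem~\ref{theorem:simple-arboricity-contributions}---is exactly the paper's Algorithm~4. The gap is in how a single phase is carried out.

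You correctly identify the obstacle: $\mathtt{Sparse\_Set}$ needs a \emph{proper} $\beta$-bounded coloring, and any proper coloring of $G_{i-1}$ requires $\Omega(\alpha_{i-1})$ colors (a clique on $2\alpha_{i-1}$ vertices has arboricity $\alpha_{i-1}$). But your proposed resolution---``shatter via arbdefective coloring, then recursively layer those classes into a bounded coloring that uses only $O(\alpha^{1/k}\log n)$ colors''---cannot produce a proper coloring with that many colors, for exactly the reason you stated; no amount of layering changes the chromatic lower bound. So as written, the phase either uses $\Omega(\alpha_{i-1})$ colors (too slow) or is not proper (so Lemma~\ref{lemma:sparse-set-contributions} does not apply).

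The paper's fix is not to seek a proper coloring of $G(X_t)$ at all. One takes $c_t=(\text{BE layer},\text{arbdefective color})$, which uses $O(\alpha^{1/k}\log n)$ colors and is $(3\alpha_t)$-bounded but is \emph{not} proper on $G(X_t)$. Then $\mathtt{Sparse\_Set}(c_t,4\alpha^{1/k})$ is run only on the subgraph $(X_t,B_t)$ of $c_t$-\emph{bichromatic} edges, on which $c_t$ is trivially proper. The arboricity of $G(X_{t+1})$ is then bounded as a sum: the bichromatic edges surviving in $X_{t+1}$ have arboricity $<3\alpha_t/(4\alpha^{1/k})$ by Property~(1), while the monochromatic edges have arboricity at most the arbdefective parameter, chosen to be $\tfrac14\alpha_t/\alpha^{1/k}$. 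These add to $\alpha_t/\alpha^{1/k}$, giving the geometric decay. For the weight, since $\mathtt{Sparse\_Set}$ is applied to $(X_t,B_t)$ rather than $G(X_t)$, Property~(2) only compares against $OPT((X_t,B_t))$; the paper therefore uses Property~(3) in every phase, together with the trivial $OPT(G)\le w(V)=w(X_0)$, to get $w(X_0)\le(8\alpha^{1/k})^k w(X_k)=8^k\alpha\,w(X_k)$.
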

	As a consequence of Lemma \ref{lemma:template-contributions}, we obtain the following four theorems.
	\begin{theorem}\label{theorem:linear-in-alpha-contributions}
		For any constant $\tau>0$, there exists a deterministic CONGEST algorithm that computes an $O(\alpha)$-approximation for MWIS in $O(\alpha^{\tau}\log n)$ rounds.
	\end{theorem}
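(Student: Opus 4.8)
The plan is to derive Theorem~\ref{theorem:linear-in-alpha-contributions} as an immediate corollary of Lemma~\ref{lemma:template-contributions} by instantiating the free integer parameter $k$ appropriately. Given the constant $\tau>0$, I would set $k=\lceil 1/\tau\rceil$, which is a fixed positive integer depending only on $\tau$. Running the deterministic CONGEST algorithm guaranteed by Lemma~\ref{lemma:template-contributions} with this choice of $k$ yields an $(8^{k}\cdot\alpha)$-approximation for MWIS in $O(k\cdot\alpha^{1/k}\cdot\log n)$ rounds.

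It then remains to check that the two quantities match the claimed bounds. For the approximation ratio, since $k=\lceil 1/\tau\rceil$ is a constant (because $\tau$ is constant), the factor $8^{k}$ is a constant, so $8^{k}\cdot\alpha=O(\alpha)$. For the runtime, the factor $k$ is again constant, and since $k\geq 1/\tau$ we have $1/k\leq\tau$; as $\alpha\geq 1$ for any graph containing an edge (and MWIS is trivial otherwise), this gives $\alpha^{1/k}\leq\alpha^{\tau}$, hence $O(k\cdot\alpha^{1/k}\cdot\log n)=O(\alpha^{\tau}\log n)$. Both the approximation factor and the round complexity therefore have the stated form, and the algorithm inherits its CONGEST implementation directly from Lemma~\ref{lemma:template-contributions}.

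The only minor point to be careful about is that the constants hidden in the $O$-notation of both the approximation ratio and the runtime depend on $\tau$ (through $k$ and $8^{k}$), which is acceptable since $\tau$ is treated as a constant. There is no genuine obstacle here: the substance of the theorem lies entirely in Lemma~\ref{lemma:template-contributions}, and this step is purely a matter of choosing $k$ so as to balance the $\alpha^{1/k}$ term in the runtime against the $8^{k}$ blow-up in the approximation guarantee.
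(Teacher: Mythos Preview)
Your proposal is correct and matches the paper's own proof essentially verbatim: the paper also sets $k=\lceil 1/\tau\rceil$ in Lemma~\ref{lemma:template-contributions} (equivalently, Algorithm~\ref{alg:template}), obtaining approximation $8^{\lceil 1/\tau\rceil}\cdot\alpha=O(\alpha)$ and runtime $O(k\cdot\alpha^{1/k}\cdot\log n)=O(\alpha^{\tau}\log n)$.
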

	\begin{theorem}\label{theorem:delta-epsilon-contributions}
		Let $\epsilon>0$ be a parameter. For any constant $\tau>0$, there exists a deterministic CONGEST algorithm that computes a $\Delta(1+\epsilon)$-approximation for MWIS in $O(\alpha^{\tau}\log n \log (1/\epsilon))$ rounds.
	\end{theorem}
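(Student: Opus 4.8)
The plan is to route everything through Lemma~\ref{lemma:template-contributions} and to split on the ratio $\Delta/\alpha$. Write $c_\tau:=8^{\lceil 1/\tau\rceil}$, a constant depending only on $\tau$, and recall $\alpha=O(\Delta)$ (indeed $\alpha\le\Delta$). If $c_\tau\cdot\alpha\le(1+\epsilon)\Delta$, I would simply invoke Lemma~\ref{lemma:template-contributions} with $k=\lceil 1/\tau\rceil$: its approximation ratio is $8^{\lceil 1/\tau\rceil}\alpha=c_\tau\alpha\le(1+\epsilon)\Delta$, and its running time is $O\big(\lceil 1/\tau\rceil\cdot\alpha^{1/\lceil 1/\tau\rceil}\log n\big)=O(\alpha^\tau\log n)$, using $\alpha^{1/\lceil 1/\tau\rceil}\le\alpha^\tau$ and that $\tau$ is constant. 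Since $O(\alpha^\tau\log n)\subseteq O(\alpha^\tau\log n\log(1/\epsilon))$, this already settles the theorem in this regime (and, notably, does not even use the $\log(1/\epsilon)$ slack). In particular this is the case whenever $\alpha=\Delta^{1-\Omega(1)}$, or more generally $\alpha\le(1+\epsilon)\Delta/c_\tau$, where the claim follows already from Theorem~\ref{theorem:linear-in-alpha-contributions}.

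The remaining and genuinely harder regime is $c_\tau\cdot\alpha>(1+\epsilon)\Delta$, equivalently $\Delta<c_\tau\alpha/(1+\epsilon)$, where $\alpha$ and $\Delta$ agree up to a $\tau$-dependent constant and the arboricity gives no extra leverage. Here I would fall back on a $\Delta(1+\epsilon)$-approximation that is not designed to exploit sparsity; the point is that now $\alpha^\tau=\Theta_\tau(\Delta^\tau)$, so the budget is $\Omega_\tau(\Delta^\tau\log n\log(1/\epsilon))$, and a short case analysis on the magnitude of $\Delta$ shows that one of the following always fits: (i) for $\Delta\ge(\log n)^{1/\tau}$, the primal--dual algorithm of Faour et al.\ [SODA 2023], whose $O(\log^2(\Delta W)\log(1/\epsilon)+\log^\ast n)=O(\log^2 n\cdot\log(1/\epsilon))$ rounds (for $W=\mathrm{poly}(n)$) then lie within $O(\Delta^\tau\log n\log(1/\epsilon))$ since $\Delta^\tau\ge\log n$; (ii) for $\Delta<(\log n)^{1/\tau}$ and $\epsilon\ge1/\Delta$, a deterministic $(\Delta+1)$-coloring in $\mathrm{poly}(\log\Delta)+O(\log^\ast n)=\mathrm{poly}_\tau(\log\log n)+O(\log^\ast n)$ rounds, whose heaviest color class has weight $\ge w(V)/(\Delta+1)\ge OPT(G)/(\Delta+1)$ and is thus a $(1+\epsilon)\Delta$-approximation; (iii) for small $\Delta$, and for handling $\epsilon<1/\Delta$ (where $\log(1/\epsilon)$ enlarges the budget), the $O(\Delta+\log^\ast n)$-round $\Delta$-approximation of Bar-Yehuda et al.\ [DISC 2017]. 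It is in this regime that the $\log(1/\epsilon)$ factor is genuinely incurred.

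Combining the two regimes yields, for every $\alpha\le\Delta$, an independent set of weight at least $OPT(G)/((1+\epsilon)\Delta)$ computed in $O(\alpha^\tau\log n\log(1/\epsilon))$ rounds. The step I expect to demand the most care is the second regime: checking that the thresholds on $\Delta$ used to pick among (i)--(iii) tile the entire admissible range while keeping the running time inside the budget --- in particular near $\Delta\approx(\log n)^{1/\tau}$, where Faour's $\log^2 n$-type bound only just matches $\Delta^\tau\log n$, and in the sliver where $\epsilon<1/\Delta$ (so one needs a genuine $\Delta$-approximation rather than a mere $(\Delta+1)$-approximation) while $\Delta$ is already too large for the trivial $O(\Delta+\log^\ast n)$ bound to be affordable. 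If such a sliver resists off-the-shelf tools, the alternative I would pursue is to feed the $O(\alpha)$-approximation primitive of Theorem~\ref{theorem:linear-in-alpha-contributions} into a local-ratio/weight-scaling loop of $O(\log(1/\epsilon))$ iterations, each run on a residual weight function, and argue that the telescoping sum of the extracted independent sets loses only a factor $(1+\epsilon)\Delta$ against $OPT(G)$; there the obstacle is ensuring the residual instances still have arboricity $O(\alpha)$ (so each iteration costs $O(\alpha^\tau\log n)$) and that the accumulated set is independent.
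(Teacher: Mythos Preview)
Your ``alternative'' at the very end is exactly the paper's proof, and the worries you raise about it are unfounded. The paper simply observes that since $\alpha\le\Delta$, the algorithm of Theorem~\ref{theorem:linear-in-alpha-contributions} is in particular an $O(\Delta)$-approximation, and then invokes \cite[Lemma~4.6]{FaourGGKR22} (which packages the local-ratio boosting of \cite{KawarabayashiKS20}): any black-box $O(\Delta)$-approximation $\mathcal{A}$ can be upgraded to a $\Delta(1+\epsilon)$-approximation at the price of $O(\log(1/\epsilon))$ sequential calls to $\mathcal{A}$. Each call runs on (an induced subgraph of) $G$ with modified node weights, so the arboricity never exceeds $\alpha$ and every iteration still costs $O(\alpha^{\tau}\log n)$; independence of the output is guaranteed by the framework itself. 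That is the entire argument---no case split on $\Delta/\alpha$, no appeal to the $O(\log^2 n)$ primal--dual algorithm of \cite{FaourGGKR22}, no threshold juggling.

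Your main route is therefore both unnecessary and, as you yourself flag, incomplete: for small constant $\tau$ (say $\tau<1/2$) there is indeed a nonempty window around $\Delta\approx(\log n)^{1/\tau}$ with $\epsilon<1/\Delta$ in which none of (i)--(iii) fits the $O(\alpha^{\tau}\log n\log(1/\epsilon))$ budget. A secondary issue is that deciding the split $c_\tau\alpha\lessgtr(1+\epsilon)\Delta$ presupposes global knowledge of $\Delta$. Rather than patching the cases, drop them and go straight through the local-ratio boosting; it is uniform across all regimes and is two lines long.
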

	\begin{theorem}\label{theorem:one-plus-tau-contributions}
		For any constant $\tau>0$, there exists a deterministic CONGEST algorithm that computes an $\alpha^{1+\tau}$-approximation for MWIS in  $O(\log \alpha \log n)$ rounds.
	\end{theorem}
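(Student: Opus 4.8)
The plan is to derive Theorem~\ref{theorem:one-plus-tau-contributions} as a direct corollary of Lemma~\ref{lemma:template-contributions}, by instantiating its free integer parameter $k$ as a suitable function of the arboricity $\alpha$ and the constant $\tau$. Concretely, I would take $k=\lfloor (\tau/3)\log_2\alpha\rfloor$, so that $3k\le \tau\log_2\alpha$ and hence $8^{k}=2^{3k}\le \alpha^{\tau}$. Lemma~\ref{lemma:template-contributions} then yields a deterministic CONGEST algorithm whose approximation ratio is $8^{k}\cdot\alpha\le \alpha^{\tau}\cdot\alpha=\alpha^{1+\tau}$, which is exactly the target. Choosing $k$ this way requires knowing $\alpha$, or a constant-factor estimate of it; such an estimate is available by standard techniques (and in fact the arbdefective-coloring machinery underlying Lemma~\ref{lemma:template-contributions} already operates with an estimate of $\alpha$), and a multiplicative slack of $c^{\tau}$ from a $c$-estimate can be absorbed by running with a slightly smaller effective $\tau$.

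It then remains to check that the runtime $O(k\cdot\alpha^{1/k}\cdot\log n)$ of Lemma~\ref{lemma:template-contributions} collapses to $O(\log\alpha\cdot\log n)$. On the one hand, $k\le(\tau/3)\log_2\alpha=O(\log\alpha)$ since $\tau$ is constant. On the other hand, once $\alpha$ exceeds a threshold depending only on $\tau$ we also have $k\ge(\tau/3)\log_2\alpha-1=\Omega(\log\alpha)$, whence $\alpha^{1/k}\le\alpha^{O(1/\log\alpha)}=2^{O(1)}$; that is, $\alpha^{1/k}$ is bounded by a constant depending only on $\tau$. Multiplying the two estimates gives the claimed $O(\log\alpha\cdot\log n)$ bound. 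Finally, Lemma~\ref{lemma:template-contributions} needs $k\ge 1$, which forces $\alpha\ge 2^{3/\tau}$; for smaller --- hence constant --- arboricity, $\alpha^{1+\tau}=\Theta(1)$ and a suitable constant-factor approximation can be obtained in $O(\log n)$ rounds by elementary means (e.g.\ a bounded-arboricity routine along the lines of Theorem~\ref{theorem:simple-arboricity-contributions}), so this regime is not an obstacle.

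I expect the only delicate point to be the tension between the two roles of $k$: forcing $8^{k}\cdot\alpha\le\alpha^{1+\tau}$ pins $k$ to a fixed (small) constant fraction of $\log\alpha$, while keeping the factor $\alpha^{1/k}$ bounded also demands $k=\Theta(\log\alpha)$ --- and these two requirements are compatible precisely because $\tau$ is a positive constant, so the hidden constants in the runtime are permitted to depend on $\tau$. If $\tau$ were allowed to shrink with $n$ or $\alpha$, the term $\alpha^{1/k}$ would blow up and the argument would break; as stated, however, the two displayed bounds multiply cleanly and the theorem follows.
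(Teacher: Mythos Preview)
Your proof is correct and follows essentially the same approach as the paper: both instantiate Lemma~\ref{lemma:template-contributions} with $k=\lfloor(\tau/3)\log_2\alpha\rfloor$, verify that $8^{k}\le\alpha^{\tau}$, and observe that $\alpha^{1/k}=2^{O(1/\tau)}=O(1)$ so the runtime collapses to $O(\log\alpha\cdot\log n)$. Your write-up is in fact more careful than the paper's two-line proof, which does not explicitly discuss the small-$\alpha$ regime or the knowledge-of-$\alpha$ issue at this point.
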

	\begin{theorem}\label{theorem:delta-conditional-contributions}
		For any graph $G=(V,E)$ with arboricity $\alpha$ and maximum degree $\Delta$ such that $\alpha=\Delta^{1-\Theta(1)}$, there exists a deterministic CONGEST algorithm that computes a $\Delta^{1-\Theta(1)}$-approximation for MWIS in $O(\log \alpha\log n)$ rounds.
	\end{theorem}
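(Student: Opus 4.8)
The plan is to obtain Theorem~\ref{theorem:delta-conditional-contributions} as an immediate corollary of Theorem~\ref{theorem:one-plus-tau-contributions}, choosing the constant $\tau$ small enough relative to the constant hidden in the hypothesis $\alpha=\Delta^{1-\Theta(1)}$. First I would make the hypothesis explicit: it asserts that there is a fixed constant $c\in(0,1)$, independent of $n$ and $\Delta$, with $\alpha\le\Delta^{1-c}$ (and also $\alpha\ge\Delta^{\Omega(1)}$, though only the upper bound is needed). I then set $\tau:=c/2$, which is again a positive constant independent of the instance, and apply Theorem~\ref{theorem:one-plus-tau-contributions} with this value of $\tau$. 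This already yields a deterministic CONGEST algorithm that computes an $\alpha^{1+\tau}$-approximation for MWIS in $O(\log\alpha\log n)$ rounds, i.e.\ exactly the runtime claimed in the theorem.

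It then remains to check that $\alpha^{1+\tau}$ has the form $\Delta^{1-\Theta(1)}$ under the hypothesis. Using $\alpha\le\Delta^{1-c}$,
\[
\alpha^{1+\tau}\;\le\;\Delta^{(1-c)(1+\tau)}\;=\;\Delta^{\,1-\left(c-\tau(1-c)\right)},
\]
and plugging in $\tau=c/2$ gives $c-\tau(1-c)=\tfrac{c(1+c)}{2}\ge\tfrac{c}{2}>0$, so the approximation ratio is at most $\Delta^{1-c/2}=\Delta^{1-\Theta(1)}$, as required. Since $\alpha\le\Delta$ holds in every graph, the natural baseline in this regime is a $\Delta$-approximation, so this is a genuine improvement by a $\Delta^{\Theta(1)}$ factor; together with the runtime $O(\log\alpha\log n)$ this matches the comparison with \cite{FaourGGKR22} recorded in Table~\ref{table:comparison}.

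There is no real technical obstacle here; the only point that needs attention is bookkeeping around constants. One must ensure that the value $\tau$ fed into Theorem~\ref{theorem:one-plus-tau-contributions}, and hence the exponent saving $c/2$ in the final ratio, is derived \emph{solely} from the fixed constant in the hypothesis and does not degrade with the instance size --- which is precisely what the $\Theta(1)$ in $\alpha=\Delta^{1-\Theta(1)}$ guarantees. As is standard in this line of work (and as needed by the algorithm behind Lemma~\ref{lemma:template-contributions}), I would also assume the nodes know a constant-factor upper bound $\hat\alpha=\Theta(\alpha)$ on the arboricity; replacing $\alpha$ by $\hat\alpha$ throughout affects nothing above.
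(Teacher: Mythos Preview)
Your proposal is correct and follows exactly the paper's approach: the paper states that Theorem~\ref{theorem:delta-conditional-contributions} ``follows directly from Theorem~\ref{theorem:one-plus-tau}'' (the restatement of Theorem~\ref{theorem:one-plus-tau-contributions}) without spelling out the constant bookkeeping, and you simply make that derivation explicit by choosing $\tau=c/2$ and verifying $\alpha^{1+\tau}\le\Delta^{1-c/2}$. Your remark about the knowledge-of-$\alpha$ assumption also aligns with the paper's Section~\ref{section:knowledge}.
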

	We note that the time complexity stated in Theorems \ref{theorem:linear-in-alpha-contributions} and \ref{theorem:delta-epsilon-contributions} improve the state-of-the-art for approximations of $O(\alpha)$ and $\Delta(1+\epsilon)$, respectively, on graphs with $\alpha=O(\text{poly}\log n)$. Additionally, to the best of our knowledge, the $O(\log \alpha \log n)$ runtime of Theorems \ref{theorem:one-plus-tau-contributions} and \ref{theorem:delta-conditional-contributions} is the fastest known runtime of a deterministic non-trivial approximation algorithm for MWIS.
	
	Finally, in Section \ref{section:maxis-approximation-directed}, we consider directed graphs and obtain the following theorem.
	\begin{theorem}\label{theorem:directed-approximation-contributions}
		For a directed graph $G=(V,E)$ with out-degree at most $d$, there exists a deterministic CONGEST algorithm that computes a $2d^2$-approximation for MWIS in $O(d^2+\log ^{*}n)$ rounds.
	\end{theorem}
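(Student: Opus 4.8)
The plan is to mirror the strategy behind Theorem~\ref{theorem:simple-arboricity-contributions} --- build a suitable bounded coloring and then make a single call to $\mathtt{Sparse\_Set}$ --- while exploiting the given out-orientation so as to replace its $\Theta(\log n)$-round forest-decomposition step by a $\Theta(\log^{*}n)$-round coloring step. Concretely, I would (i) produce, in $O(\log^{*}n)$ rounds, a $\beta$-bounded coloring $c$ of $G$ with $\beta=O(d^{2})$ that uses only $O(d^{2})$ colors, and then (ii) run $\mathtt{Sparse\_Set}(c,f)$ once, which by Lemma~\ref{lemma:sparse-set-contributions} takes $O(k)=O(d^{2})$ rounds.

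For step~(i): the out-orientation hands us, for free, a partition of $E$ into $d$ pseudoforests (the $j$-th pseudoforest consists of the $j$-th out-edge of every node). Each pseudoforest can be properly colored with $O(1)$ colors in $O(\log^{*}n)$ rounds by a Cole--Vishkin-type procedure, and these $d$ colorings can be merged into a single proper coloring of $G$; by controlling the merge, one obtains $O(d^{2})$ colors and, crucially, ensures that every node ends up with $O(d^{2})$ (say, at most $2d^{2}$) neighbors of larger color. The point is that although a node may have very large in-degree in $G$, its in-edges are spread out over the $d$ pseudoforests, so within each pseudoforest the local picture is tree-like. Establishing this boundedness quantitatively --- and keeping the number of colors at $O(d^{2})$ so that the $\mathtt{Sparse\_Set}$ invocation stays within $O(d^{2})$ rounds --- is the step I expect to be the main obstacle; it is the only place where something beyond routine color-reduction machinery is needed, since for an \emph{arbitrary} out-orientation one cannot hope to compute, e.g., an $O(d)$-bounded coloring quickly.

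For step~(ii): given a $\beta$-bounded coloring $c$ with $\beta=O(d^{2})$ and $O(d^{2})$ colors, invoke $\mathtt{Sparse\_Set}(c,f)$. Taking $f:=\beta\le 2d^{2}$, property~(1) of Lemma~\ref{lemma:sparse-set-contributions} gives $|X\cap L(v)|<\beta/f=1$, i.e.\ $|X\cap L(v)|=0$ for every $v\in X$; since $c$ is proper this means no edge of $G$ has both endpoints in $X$, so $X$ is an independent set, and property~(2) gives $f\cdot w(X)\ge OPT(G)$, hence $w(X)\ge OPT(G)/(2d^{2})$, i.e.\ $X$ is the desired $2d^{2}$-approximation. (Alternatively, starting from a $d^{2}$-bounded coloring one can run $\mathtt{Sparse\_Set}(c,f)$ with $f\approx d^{2}/2$, so that $|X\cap L(v)|<2$ and hence $G(X)$ --- oriented by increasing color --- is a forest presented as an $O(d^{2})$-level in-forest; computing its bipartition and the heavier side of each tree by propagating and aggregating from the $c$-local maxima costs $O(d^{2})$ rounds and, since the trees are exactly the components of $G(X)$, again loses only a factor of $2$.)

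Putting the pieces together, the round complexity is $O(\log^{*}n)$ for the coloring plus $O(k)=O(d^{2})$ for $\mathtt{Sparse\_Set}$, i.e.\ $O(d^{2}+\log^{*}n)$ in total; the $\Omega(\log^{*}n)$ term cannot be removed because of the lower bounds of \cite{CzygrinowHW08,LenzenW08}. Everything except the construction and boundedness analysis in the second paragraph is a direct application of Lemma~\ref{lemma:sparse-set-contributions}, so that is where the bulk of the work --- and the only real subtlety --- lies.
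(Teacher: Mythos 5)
Your step (i) is not merely ``the main obstacle''---with the stated parameters it is impossible, because the object you want need not exist. A $\beta$-bounded coloring of all of $G$ with $\beta=O(d^{2})$ that uses only $O(d^{2})$ colors can be ruled out already for $d=1$: take the complete $(\beta+1)$-ary tree of depth $D=\Theta(\log n/\log(\beta+1))$ with every edge oriented from child to parent (out-degree at most $1$, but unbounded in-degree). In any proper coloring in which each node has at most $\beta$ neighbors of larger color, every internal node has at least one child of strictly smaller color (it has $\beta+1$ children, none sharing its color, and at most $\beta$ neighbors of larger color), so following such children from the root yields a strictly color-decreasing path of $D+1$ nodes and hence at least $D+1=\Omega(\log n/\log(\beta+1))$ distinct colors. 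Thus for small $d$ any such coloring needs $\Omega(\log n/\log d)$ colors; this contradicts the $O(d^{2})$-color requirement, and since $\mathtt{Sparse\_Set}$ runs in $O(k)$ rounds it would by itself already push the runtime to $\Omega(\log n/\log d)$. Computing such a coloring is in any case an $\mathtt{BE\_Partition}$-type task, not a $\log^{*}n$ color-reduction task, and the pseudoforest-merging idea cannot rescue it: no control over the merge bounds how many of a node's (possibly $n-1$) in-neighbors end up with larger colors.

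The paper never builds a bounded coloring of $G$. It computes a plain proper $O(d^{2})$-coloring $c$ in $O(\log^{*}n)$ rounds (Linial's reduction run with respect to out-neighbors only, which already gives properness since every edge is an out-edge of one endpoint), and then invokes $\mathtt{Sparse\_Set}$ \emph{twice} with $f=d$: first on the subgraph $G_{inc}=(V,E_{inc})$ of edges directed toward the larger color---there $c$ \emph{is} $d$-bounded, because a larger-colored $G_{inc}$-neighbor must be an out-neighbor---yielding $X$ with $d\cdot w(X)\geq OPT(G_{inc})\geq OPT(G)$; and then on $G(X)$ with the reversed coloring $-c$, which is $d$-bounded on $G(X)$ precisely because the first pass eliminated every in-neighbor of smaller color (such a node cannot lie in $X$ together with $v$, as in Observation \ref{observation:directed(-c)}). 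By Lemma \ref{lemma:beta-approximation} the second call returns an independent set $X'$ of $G$ with $2d\cdot w(X')\geq w(X)$, hence $2d^{2}\cdot w(X')\geq OPT(G)$, and both calls use $O(d^{2})$ colors, so the total runtime is $O(d^{2}+\log^{*}n)$. This two-pass trick with $E_{inc}$ and $-c$ is the missing idea; without it (or some substitute for it), your single-invocation plan cannot reach the claimed bounds.
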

	We remark that the $O(\log ^{*}n)$ term in the time complexity is asymptotically tight due to an existing lower bound presented in \cite{CzygrinowHW08,LenzenW08} (refer to Section \ref{section:maxis-approximation-directed} for more details).

	\section{Preliminaries}
	\label{section:preliminaries}
	Consider a graph $G=(V,E)$ and denote $n=|V|$ and $m=|E|$. For each node $v\in V$, we denote by $N(v)$ the set of $v$'s \emph{neighbors} in $G$. Let $\deg(v)=|N(v)|$ be the \emph{degree} of node $v$ and let us denote by $\Delta=\max_{v\in V}\{\deg (v)\}$ the largest degree in the graph. If $G$ is directed, then we use the notation $(u\rightarrow v)$ to reflect that the edge $(u,v)\in E$ is directed from $u$ to $v$. In this context, the notation $(u,v)$ (or $(v,u)$) refers to an edge between $u$ and $v$ that could be directed in either direction. For a node $v\in V$, we say that a node $u\in N(v)$ is an \emph{incoming} (resp., \emph{outgoing}) neighbor of $v$ if there exists an edge $(u\rightarrow v)\in E$ (resp., $(v\rightarrow u)\in E)$). The \emph{in-degree} (resp., \emph{out-degree}) of node $v\in V$ is defined to be the number of $v$'s incoming (resp., outgoing) neighbors. For a node subset $U\subseteq V$, let $G(U)$ denote the subgraph induced by $U$. 
	
	\subparagraph*{The CONGEST model.}
	Our algorithms operate in the CONGEST model \cite{Peleg00}, where a communication network is abstracted by an $n$-node graph $G=(V,E)$. Each node $v\in V$ is equipped with a unique $O(\log n)$-bit identifier. Computation progresses in synchronous communication rounds, where each node $v\in V$ may send a message of size $O(\log n)$ to each neighbor $u\in N(v)$. If $G$ is directed, then the direction of each edge $(u,v)\in E$ is encoded to the endpoints $u$ and $v$ by means of a consistent orientation function. Notice that communication may occur on both directions of the edge $(u,v)$ regardless of its orientation.

	\subparagraph*{Maximum weight independent set.}
	Consider a graph $G=(V,E)$. A subset $X\subseteq V$ of nodes is said to be an \emph{independent set} of $G$ if it holds that  $(u,v)\notin E$ for all $u,v\in X$. For a node-weight function $w:V\rightarrow \mathbb{R}_{\geq 0}$, a \emph{maximum weight independent set (MWIS)} is an independent set $X\subseteq V$ that maximizes $w(X):=\sum_{v\in X}w(v)$.\footnote{Throughout this paper, we stick to the common assumption that all assigned weights can be represented using $O(\log n)$ bits and thus can be sent by means of  a single message in the CONGEST model.} We denote by $OPT(G)$ the weight of a MWIS in $G$. As usual, for a parameter $q\geq 1$, we say that a subset $X\subseteq V$ of nodes is a \emph{$q$-approximation} for MWIS if it is an independent set that satisfies $q\cdot w(X)\geq OPT(G)$.
	
	In a natural linear program (LP) relaxation of MWIS, each node $v\in V$ is associated with a variable $x_{v}$. The objective is to maximize $\sum_{v\in V}w(v)\cdot x_{v}$, subject to the constraints $x_{u}+x_{v}\leq 1$ for each edge $(u,v)\in E$; and $x_{v}\geq 0$ for each node $v\in V$. In the dual LP, each edge $e\in E$ is associated with a variable $y_{e}$. The dual objective is to minimize $\sum_{e\in E}y_{e}$, subject to the constraints $\sum_{u\in N(v)}y_{u,v}\geq w(v)$ for each node $v\in V$; and $y_{e}\geq 0$ for each edge $e\in E$. The \emph{weak duality} theorem \cite[Chapter 12, Theorem 12.2]{VaziraniBook} implies that $w(X)\leq \sum_{e\in E}y_{e}$ for any independent set $X\subseteq V$ and feasible dual solution $\mathbf{y}=\{y_{e}\}_{e\in E}$ (i.e., a dual solution that satisfies all constraints).

	\subparagraph*{$\beta$-Bounded coloring.} 
	Given a graph $G=(V,E)$ and integers $\beta,k>0$, we say that a node-coloring $c:V\rightarrow[k]$ is \emph{$\beta$-bounded} if the following conditions are satisfied: (1) $c$ is a \emph{proper} coloring, i.e., $c(u)\neq c(v)$ for all $(u,v)\in E$; and (2) each node $v\in V$, has at most $\beta$ neighbors with larger color, i.e., the set $L(v)=\{u\in N(v)\mid c(u)>c(v)\}$ satisfies $|L(v)|\leq \beta$ for all $v\in V$.\footnote{We sometimes naturally extend this definition to a coloring that assigns colors chosen from some ordered set of $k$ elements (and not necessarily $[k]$).}

	\subparagraph*{Arboricity.}
	Given an undirected graph $G=(V,E)$, the \emph{arboricity} of $G$ is defined to be the smallest integer $\alpha>0$ for which there exists a partition $E_{1},\dots ,E_{\alpha}$ of the edges into $\alpha$ pairwise-disjoint sets such that $(V,E_{i})$ is a forest for each $i\in [\alpha]$.
	
	\subparagraph*{Barenboim and Elkin's Partition Procedure.} In Section \ref{section:approx-arboricity} we make use of a partition procedure presented by Barenboim and Elkin in \cite{BarenboimE10}. This procedure takes as parameters the graph's arboricity $\alpha$ and a constant $\epsilon>0$. We shall refer to this procedure as $\mathtt{BE\_Partition}(\alpha,\epsilon)$. Procedure $\mathtt{BE\_Partition}(\alpha,\epsilon)$ partitions the nodes of graph $G$ into $\ell=O(\log n)$ layers $V=H_{1}\dot{\cup}\dots\dot{\cup}H_{\ell}$. For all $i\in[\ell]$, it is guaranteed that each node $v\in H_{i}$ has at most $\lfloor (2+\epsilon)\cdot \alpha\rfloor$ neighbors in $\cup_{j=i}^{\ell}H_j$. As established in \cite{BarenboimE10},  $\mathtt{BE\_Partition}$ takes $O(\log n)$ rounds in the CONGEST model.

	\subparagraph*{Arbdefective coloring.}
	The notion of \emph{$d$-arbdefective} coloring was introduced by Barenboim and Elkin in \cite{BarenboimE10'}. We say that a coloring $c:V\rightarrow[k]$ of graph $G=(V,E)$ is $d$-arbdefective if for every color $i\in [k]$, the subgraph $G(V_{i})$ induced by the subset $V_{i}=\{v\in V\mid c(v)=i\}$, has arboricity at most $d$. In \cite{BarenboimEG22}, it is shown that for any $1\leq p\leq \Delta$, a $(\Delta/p)$-arbdefective coloring that uses $O(p)$ colors can be computed in $O(p +\log ^{*}n)$ rounds in the CONGEST model.
	
	\section{The $\mathtt{Sparse\_Set}$ Procedure}
	\label{section:sparse-set}
	In this section, we present a simple procedure referred to as $\mathtt{Sparse\_Set}$ (Algorithm \ref{alg:sparsify}). Let $G=(V,E)$ be a graph with node-weight function $w:V\rightarrow \mathbb{R}_{\geq 0}$ and let $\beta\in \mathbb{Z}_{>0}$. The procedure takes as input a graph $G$, along with a $\beta$-bounded $k$-coloring $c$, and an integer parameter $f$ (encoded to each node) such that $1\leq f\leq \beta$, and returns a subset $X\subseteq V$ of \emph{selected} nodes. The properties of the set $X$ as well as the runtime of procedure $\mathtt{Sparse\_Set}$ are captured by the following lemma.
	
	\begin{lemma}\label{lemma:sparsify}
		Upon termination, $\mathtt{Sparse\_Set}(c,f)$ returns a subset $X\subseteq V$ that satisfies: (1)  $|X\cap L(v)|< \beta/f$ for each selected node $v\in X$, where $L(v)=\{u\in N(v)\mid c(u)>c(v)\}$ is the set of $v$'s neighbors with larger color; (2) $ f\cdot w(X)\geq OPT(G)$; and (3) $2\cdot f\cdot w(X)\geq w(V)$. The runtime of $\mathtt{Sparse\_Set}(c,f)$ is $O(k)$, where $k$ is the total number of distinct colors assigned by the coloring $c$.
	\end{lemma}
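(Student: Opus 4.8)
My plan is to prove Lemma~\ref{lemma:sparsify} by a primal--dual argument in which $\mathtt{Sparse\_Set}(c,f)$ builds the set $X$ together with a \emph{feasible} solution $\mathbf{y}=\{y_e\}_{e\in E}$ of the dual LP, and the three stated properties are then read off from $\mathbf{y}$. First I would describe the procedure: it sweeps the color classes of $c$ one at a time, spending $O(1)$ rounds per class, which gives the claimed $O(k)$ runtime; since $c$ is $\beta$-bounded, processing colors from largest to smallest ensures that when a node $v$ is handled, the set $L(v)$ of its larger-colored neighbors, their $X$-status, and the dual already placed on the edges from $v$ into $L(v)$ are all fixed. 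For each node $v$ define its residual $d(v)=\max\{0,\ w(v)-\sum_{u\in L(v)}y_{\{v,u\}}\}$, i.e.\ the part of $w(v)$ not yet paid for by dual mass coming from larger-colored neighbors; $v$ then raises dual on its incident edges to smaller-colored neighbors (or, if it has none, back onto its edges into $L(v)$) so as to make its own dual constraint $\sum_{u\in N(v)}y_{\{v,u\}}\ge w(v)$ hold. A one-line induction over the sweep then shows $\mathbf{y}$ is dual-feasible, and that the objective telescopes to $\sum_{e\in E}y_e=\sum_{v\in V}d(v)$.

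The set $X$ is taken to be the set of nodes $v$ with $d(v)>0$ that additionally pass the sparsity test $|X\cap L(v)|<\beta/f$ at the time $v$ is reached, so Property~(1) holds by construction (and because $X\cap L(v)$ is frozen by then). For Property~(2) I would prove $\sum_{e}y_e\le f\cdot w(X)$; together with weak duality this yields $OPT(G)\le\sum_e y_e\le f\cdot w(X)$. The part of $\sum_e y_e$ coming from selected nodes is $\sum_{v\in X}d(v)\le w(X)$; the part coming from non-selected nodes with $d(v)>0$---which are exactly the nodes that failed the sparsity test, hence have at least $\beta/f$ selected larger-colored neighbors---is handled by a charging argument: $d(v)$ is distributed among these $\ge\beta/f$ selected larger-colored neighbors in proportion to the dual mass $v$ received from each, and one checks each selected node absorbs total charge $O(f\cdot w(\cdot))$, so this part is $O(f\cdot w(X))$ too, with a careful choice of constants collapsing $O(f)$ to exactly $f$. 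Finally Property~(3) is essentially free: summing the per-node inequalities $w(v)\le d(v)+\sum_{u\in L(v)}y_{\{v,u\}}$ over all $v$ gives $w(V)\le\sum_v d(v)+\sum_e y_e=2\sum_e y_e\le 2f\cdot w(X)$.

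The main obstacle is the charging argument for the non-selected-but-positive-residual nodes, together with pinning down the constant in Property~(2). The trouble is that a single selected node $z$ can have unboundedly many smaller-colored neighbors that were rejected by the sparsity test, so a naive ``split $d(v)$ evenly among its $\ge\beta/f$ larger selected neighbors'' charge piles up arbitrarily on $z$; the fix is to charge $v$'s residual to those neighbors proportionally to the dual $z$ itself pushed onto the edge $\{z,v\}$, which caps $z$'s incurred charge by $f\cdot d(z)\le f\cdot w(z)$, but one must simultaneously check that these proportional charges really cover $d(v)$---this is exactly where $|X\cap L(v)|\ge\beta/f$ together with $|L(v)|\le\beta$ are used. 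Making the two requirements meet with the clean factor $f$ (rather than $f+1$) is the crux of the analysis and is what dictates the precise form of the dual-raising rule.
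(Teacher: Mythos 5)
There is a genuine gap, and it lies exactly where you flag ``the crux'': the coverage step of your charging argument is not just unproved, it is false for the algorithm you sketch, because you have reversed the direction in which the dual flows relative to the direction in which the selection rule looks. In your scheme the residual $d(v)$ is measured against dual received \emph{from larger-colored} neighbors and is then pushed \emph{downward}, while Property~(1) forces the selection test to look \emph{upward} at $X\cap L(v)$. Consequently, for a rejected node $v$ with $d(v)>0$ you need $f\cdot\sum_{z\in X\cap L(v)}y_{\{z,v\}}\geq d(v)$, but nothing ties the dual that the selected larger neighbors happened to push onto the edges into $v$ to the size of $w(v)$; the facts $|X\cap L(v)|\geq\beta/f$ and $|L(v)|\leq\beta$ bound only a \emph{count}, not dual mass. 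Concretely, take a node $v$ of weight $W$ with the smallest color, whose only neighbors are $\beta$ larger-colored nodes of weight $\varepsilon$ each (and no smaller neighbors). In your top-down sweep each tiny neighbor has positive residual $\varepsilon$ and an empty selected-upper-neighborhood, so all $\beta$ of them enter $X$; then $v$ fails the sparsity test and is rejected, so $w(X)=\beta\varepsilon$ while $OPT(G)\geq W$, violating Property~(2) for small $\varepsilon$. So the single largest-to-smallest sweep computing residuals from above cannot work: the ``suppression'' exerted by a heavy node on its neighbors must travel from smaller colors to larger colors, which your ordering makes impossible.

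The paper's proof avoids this by decoupling the two directions into two stages and orienting the dual the other way. In the first (bottom-up) stage each node computes $\lambda(v)=\max\{0,w(v)-\sum_{u\in S(v)}y_{u,v}\}$ from its \emph{smaller}-colored neighbors and pushes $\lambda(v)\cdot f/|L(v)|$ \emph{uniformly} onto every edge into $L(v)$ (an $f$-fold overshoot); nodes with $\lambda(v)=0$ are eliminated immediately. In the second (top-down) stage a node is selected iff fewer than $|L(v)|/f$ of its larger neighbors were selected. The uniform upward push is what makes your coverage condition automatic: a node eliminated in the second stage has at least a $1/f$ fraction of its pushed dual sitting on edges into \emph{selected} larger neighbors, and each selected node $z$ can absorb that charge because incoming dual from below was already deducted when forming $\lambda(z)$, giving $\lambda(z)\cdot f+\rho(z)\leq f\cdot w(z)$. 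In the star example above, the heavy node pushes $fW/\beta$ up to each tiny neighbor, zeroing their residuals so they are eliminated in stage one, and it is then selected in stage two --- the outcome your scheme cannot produce. (The paper also handles nodes with $L(v)=\emptyset$ not by pushing dual back onto real edges, as you propose, but by adding a virtual zero-weight node in the analysis so that $\mathbf{y}'$ is feasible for an auxiliary graph $G'$ with $OPT(G')=OPT(G)$; your ``push back onto $L(v)$'' variant would also need repair, since that extra mass is not reflected in the already-fixed residuals of the larger neighbors.) Your runtime argument and the derivation of Property~(3) from feasibility plus the bound $f\cdot w(X)\geq\sum_e y_e$ are fine in outline, but both rest on the Property~(2) machinery that is missing.
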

	
	We now describe the procedure $\mathtt{Sparse\_Set}$. Refer to Algorithm \ref{alg:sparsify} for a pseudocode description.

	\begin{algorithm}
		\scriptsize
		\caption{Procedure  $\mathtt{Sparse\_Set}(c,f)$ from the perspective of node $v\in V$.}
		\label{alg:sparsify}
		\hspace*{\algorithmicindent} \textbf{Input:} integer $1\leq f\leq \beta$;\\ \hspace*{\algorithmicindent} colors $c(v)$ and $c(u)$ for each $u\in N(v)$ ($c$ is a $\beta$-bounded coloring)
		\begin{algorithmic}[1]
			\State $v.y_{u,v}=\bot$ for all $u\in N(v)$ \Comment{$v$'s dual variables}
			\State $L(v)=\{u\in N(v)\mid c(u)>c(v)\}$ \Comment{$v$'s neighbors with larger color}
			\State $S(v)=N(v)- L(v)$ \Comment{$v$'s neighbors with smaller color}
			\State $IN(v)=OUT(v)=\emptyset$
			\State $\lambda (v)=\bot$
			\State $v.status=undecided$
			\While{$\lambda (v)==\bot$} \label{line:first-stage}\Comment{first stage}
			\For {each $u.y_{u,v}$ received from a neighbor $u\in S(v)$}
			\State $v.y_{u,v}=u.y_{u,v}$
			\EndFor
			\If{$v.y_{u,v}\neq \bot$ for all $u\in S(v)$}
			\State $\lambda (v)=\max\{0,w(v)-\sum_{u\in S(v)}v.y_{u,v}\}$
			\State $v.y_{u,v}=\frac{\lambda (v)\cdot f}{|L(v)|}$ for all $u\in L(v)$
			\State send $v.y_{u,v}$ to all $u\in L(v)$
			\If {$\lambda (v)==0$}
			\State $v.status=eliminated$
			\State send 'eliminated' to all $u\in S(v)$
			\EndIf
			\EndIf
			\label{line:first-stage-fin}
			\EndWhile
			\While{$v.status== undecided$ } \Comment{second stage}\label{line:second-stage}
			\For {each 'eliminated' received from a neighbor $u\in L(v)$}
			\State $OUT(v)=OUT(v)\cup\{u\}$ 
			\EndFor
			\For {each 'selected' received from a neighbor $u\in L(v)$}
			\State $IN(v)=IN(v)\cup\{u\}$ 
			\EndFor
			\If {$IN(v)\cup OUT(v)==L(v)$} \Comment{if all nodes in $L(v)$ are decided}
			\If {$|IN(v)|\geq \frac{|L(v)|}{f}$}
			\State $v.status=eliminated$
			\State send 'eliminated' to all $u\in S(v)$
			\Else
			\State $v.status=selected$
			\State send 'selected' to all $u\in S(v)$
			\EndIf
			\EndIf
			\label{line:second-stage-fin}
			\EndWhile
		\end{algorithmic}
	\end{algorithm}

	\subparagraph*{Overview of Algorithm \ref{alg:sparsify}.}
	Throughout the execution of procedure $\mathtt{Sparse\_Set}$, each node $v\in V$ maintains a status $v.status\in \{undecided,selected,eliminated\}$ and a dual variable $y_{u,v}$ for each neighbor $u\in N(v)$. We emphasize that the dual variables are used for analysis purpose. In particular, upon termination the dual solution $\mathbf{y}$ computed during procedure $\mathtt{Sparse\_Set}$ might be infeasible (we address that later in the analysis). Initially, each node sets its status to $undecided$ and all dual variables to $\bot$.  For each node $v\in V$, define $L(v)=\{u\in N(v)\mid c(u)>c(v)\}$ and $S(v)=N(v)-L(v)$ to be the sets of $v$'s neighbors with larger and smaller colors, respectively. Additionally, each node $v$ maintains a numerical value $\lambda(v)$ initialized to $\bot$.
	
	The execution of the algorithm can be divided into two subsequent stages: the first stage is dedicated to computing the dual variables (although some nodes may also be eliminated at this stage); whereas the goal of the second stage is for each undecided node to decide whether it becomes eliminated or selected.
	
	During the first stage (lines \ref{line:first-stage}--\ref{line:first-stage-fin}), each node $v\in V$ waits until it receives a dual variable $y_{u,v}$ from every neighbor $u\in S(v)$ (notice that if $S(v)=\emptyset$, then $v$ does not need to wait). Upon receiving all dual variables, $v$ sets $\lambda(v)=\max\{0, w(v)-\sum_{u\in S(v)}y_{u,v}\}$. Then, for each $u\in L(v)$, $v$ sets $y_{u,v}=\frac{\lambda (v)\cdot f}{|L(v)|}$ and informs $u$. If $\lambda(v)=0$, then $v$ becomes eliminated and informs all of its neighbors $u\in S(v)$.
	
	In the second stage (lines \ref{line:second-stage}--\ref{line:second-stage-fin}), all nodes become eliminated or selected. Each undecided node $v$ waits until it receives a message $\mu_{u}\in \{\text{'eliminated'},\text{'selected'}\}$ from every neighbor $u\in L(v)$. After receiving a message from every $u\in L(v)$, each undecided node $v$ becomes eliminated if at least $|L(v)|/f$ of its neighbors in $L(v)$ became selected; otherwise, $v$ becomes selected (notice that if $L(v)=\emptyset$, then $v$ becomes selected in the second stage if and only if it was not eliminated in the first stage). After changing its status, $v$ informs its neighbors $u\in S(v)$ of the new status. 
	
	We now analyze the procedure $\mathtt{Sparse\_Set}$. Let us denote by $X=\{v\in V\mid v.status=selected\}$ the set of nodes that were selected during the algorithm  and by $\mathbf{y}=\{y_{u,v}\mid (u,v)\in E\}$ the dual solution.

	Observe that the correctness of Property (1) of Lemma \ref{lemma:sparsify} follows directly from the construction of $\mathtt{Sparse\_Set}$ and the fact that $|L(v)|\leq \beta$. Hence, our goal now is to prove properties (2) and (3). To that end, for the sake of analysis, let us define a graph $G'=(V',E')$ as the graph obtained from $G$ by adding a virtual zero-weighted node $z$ and connecting it by edges to all nodes $v\in V$ with $L(v)=\emptyset$. That is, $G'$ consists of node set $V'=V\cup \{z\}$, edge set $E'=E\cup \{(z,v)\mid L(v)=\emptyset\}$, and weights $w'(z)=0$ and $w'(v)=w(v)$ for all $v\in V$. We also extend the definitions of $\mathbf{y}$ and the sets $L(v)$ to the graph $G'$ as follows. For each node $v\in V$ with $L(v)=\emptyset$ in $G$, we define $L'(v)=\{z\}$ and $y'_{v,z}=\lambda (v)\cdot f$; for nodes $v\in V$ with $L(v)\neq \emptyset$ in $G$, we keep $L'(v)=L(v)$ and $y'_{u,v}=y_{u,v}$ for each $u\in N(v)$. Let $N'(v)$ denote the set of $v$'s neighbors in $G'$ for each node $v\in V'$.

	\begin{observation}\label{observation:optg'}
		$OPT(G)= OPT(G')$.
	\end{observation}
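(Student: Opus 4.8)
The plan is to establish the two inequalities $OPT(G)\le OPT(G')$ and $OPT(G')\le OPT(G)$ separately, exploiting the only two facts that distinguish $G'$ from $G$: the new node $z$ has weight $0$, and every new edge is incident to $z$.

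For the direction $OPT(G)\le OPT(G')$, I would take a MWIS $X\subseteq V$ of $G$ and argue that $X$ is still an independent set in $G'$. Indeed, $E\subseteq E'$, and every edge in $E'\setminus E$ is of the form $(z,v)$; since $z\notin X$, none of these edges has both endpoints in $X$, so $X$ is independent in $G'$. As $w'$ agrees with $w$ on $V$, this gives $OPT(G')\ge w'(X)=w(X)=OPT(G)$.

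For the converse direction $OPT(G')\le OPT(G)$, I would take a MWIS $X'\subseteq V'$ of $G'$ and set $X=X'\setminus\{z\}\subseteq V$. Since $E\subseteq E'$, the set $X$ is independent in $G$. Because $w'(z)=0$, removing $z$ (if it is present) does not change the weight, so $w(X)=w'(X')=OPT(G')$, and therefore $OPT(G)\ge w(X)=OPT(G')$. Combining the two inequalities yields $OPT(G)=OPT(G')$.

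There is no real obstacle here; the statement is a routine "adding a zero-weight node (together with edges incident only to it) does not change the optimum" observation. The only point worth stating carefully is that attaching the new edges $(z,v)$ cannot violate independence of a vertex set $X\subseteq V$, precisely because $z\notin X$.
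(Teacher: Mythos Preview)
Your proof is correct and follows essentially the same approach as the paper: both directions are established by observing that any independent set of $G$ is independent in $G'$, and that removing $z$ from a MWIS of $G'$ yields an independent set of $G$ with the same weight since $w'(z)=0$. Your write-up is slightly more explicit about why the new edges incident to $z$ do not affect independence, but the argument is identical.
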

	\begin{proof}
		Clearly, $OPT(G)\leq OPT(G')$ as any independent set of $G$ is also an independent set of $G'$. As for the other direction, consider a MWIS $I\subseteq V'$ of $G'$. The set $I\cap V$ is an independent set of $G$ with weight $w(I\cap V)=w(I)$. Thus, we get that $OPT(G)\geq w(I\cap V)=w(I)= OPT(G')$. 
	\end{proof}
	
	We note that Observation \ref{observation:optg'} implies that to establish Property (2) of Lemma \ref{lemma:sparsify}, it is sufficient to bound $w(X)$ in terms of $OPT(G')$. Furthermore, by weak duality, it is sufficient to bound $w(X)$ in terms of a feasible dual solution for $G'$. To that end, we show the following.
	
	\begin{observation}\label{observation:dualg'}
		$\mathbf{y}'$ is a feasible dual solution for $G'$.
	\end{observation}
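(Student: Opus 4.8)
The goal is to show that $\mathbf{y}'$ satisfies every dual constraint of $G'$, namely that $\sum_{u\in N'(v)} y'_{u,v}\ge w'(v)$ for every node $v\in V'$, together with nonnegativity of all the $y'_{u,v}$. Nonnegativity is immediate: for nodes with $L(v)\ne\emptyset$ we have $y_{u,v}=\lambda(v)\cdot f/|L(v)|\ge 0$ since $\lambda(v)\ge 0$ by definition, and for nodes with $L(v)=\emptyset$ we have $y'_{v,z}=\lambda(v)\cdot f\ge 0$; the virtual node $z$ has $w'(z)=0$ so its constraint is vacuous. So the content is in the per-node covering constraint.

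The plan is to fix a node $v\in V$ and split the incident dual mass according to the $S(v)/L(v)$ partition (using $L'(v)$ in place of $L(v)$ when $L(v)=\emptyset$, in which case $S(v)=N'(v)\setminus\{z\}$). I would write $\sum_{u\in N'(v)} y'_{u,v} = \sum_{u\in S(v)} y'_{u,v} + \sum_{u\in L'(v)} y'_{u,v}$. The second sum is exactly $\lambda(v)\cdot f$ in both cases: if $L(v)\ne\emptyset$ each of the $|L(v)|$ terms equals $\lambda(v)\cdot f/|L(v)|$, and if $L(v)=\emptyset$ the single term $y'_{v,z}$ equals $\lambda(v)\cdot f$. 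Now by the definition of $\lambda(v)$ in line 11, $\lambda(v)=\max\{0,\,w(v)-\sum_{u\in S(v)} y_{u,v}\}\ge w(v)-\sum_{u\in S(v)} y'_{u,v}$, so $\sum_{u\in S(v)} y'_{u,v} + \lambda(v) \ge w(v) = w'(v)$. Since $f\ge 1$, we get $\sum_{u\in S(v)} y'_{u,v} + \lambda(v)\cdot f \ge \sum_{u\in S(v)} y'_{u,v} + \lambda(v) \ge w'(v)$, which is the desired inequality.

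The one point requiring care — and the main obstacle, such as it is — is making sure every quantity in this computation is actually well-defined, i.e.\ that $\lambda(v)\ne\bot$ and every $y'_{u,v}$ for $u\in S(v)$ has been assigned a real value by the time the algorithm terminates. This needs a termination/liveness argument: I would argue by induction on the color $c(v)$ (from the largest color downward, or equivalently on the rank of $v$ in $L$-chains) that every node exits the first-stage while-loop with $\lambda(v)$ set to a nonnegative real. Concretely, a node with the largest color has $L(v)=\emptyset$, so the condition in line 10 is vacuously satisfied and it sets $\lambda(v)$ immediately; inductively, once all neighbors in $S(v)$... wait — the dependency runs the other way: $v$ waits for neighbors in $S(v)$ (smaller color) to send their $y_{u,v}$, and a neighbor $u\in S(v)$ sends $y_{u,v}=v\text{-component}$ to $v\in L(u)$ only after $u$ itself has set $\lambda(u)$. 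So the induction is on $c(v)$ increasing: nodes of the smallest color have $S(v)=\emptyset$, immediately set $\lambda(v)$ and push values up to their $L$-neighbors; assuming all nodes of color $<c(v)$ have terminated the first stage, node $v$ eventually receives $y_{u,v}$ from every $u\in S(v)$ and then sets $\lambda(v)$. Because the coloring is proper and uses $k$ colors, this propagation completes within $O(k)$ rounds, which simultaneously gives the runtime bound.

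Once well-definedness is in hand, the inequality chain above closes the proof. I would present it as: first dispatch nonnegativity and the vacuous constraint at $z$; then do the single generic computation for $v\in V$, handling the $L(v)=\emptyset$ case by noting $\sum_{u\in L'(v)} y'_{u,v}=y'_{v,z}=\lambda(v)\cdot f$ so that the same two lines apply verbatim; and finally remark that the well-definedness of all terms follows from the termination analysis (which can be cited forward or folded in as a short claim). No step involves a genuinely hard estimate — the design of line 11 builds feasibility in by construction, and the factor $f\ge 1$ on the $L$-side only helps.
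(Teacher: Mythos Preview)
Your argument is correct and follows essentially the same route as the paper's proof: dispatch the constraint at $z$ via nonnegativity and $w'(z)=0$, then for $v\in V$ compute $\sum_{u\in L'(v)} y'_{u,v}=\lambda(v)\cdot f\ge \lambda(v)$ and combine with $\lambda(v)\ge w(v)-\sum_{u\in S(v)} y'_{u,v}$. The only difference is organizational: the well-definedness/termination argument you include here (induction on increasing color to show every $\lambda(v)$ is eventually set) is handled by the paper separately, in the runtime portion of the proof of Lemma~\ref{lemma:sparsify}, rather than inside this observation.
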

	
	\begin{proof}
		First consider the virtual node $z$. Since $\mathbf{y}'$ is non-negative, it follows that $$\sum_{u\in N'(z)}y_{u,z}\geq w'(z)=0\ .$$ As for nodes $v\neq z$, we note that by construction, it follows that $$\sum_{u\in L'(v)}y_{u,v}=\sum_{u\in L'(v)}\frac{\lambda (v)\cdot f}{|L'(v)|}=\lambda (v)\cdot f\geq \lambda (v)\ .$$ Feasibility follows since $\lambda (v)=\max\{0,w(v)-\sum_{u\in S(v)}v.y_{u,v}\}\geq w(v)-\sum_{u\in S(v)}v.y_{u,v}$, which implies $\sum_{u\in N'(v)}y'_{u,v}=\sum_{u\in S(v)}y'_{u,v}+\sum_{u\in L'(v)}y'_{u,v}\geq w(v)$ for each node $v\in V$.
	\end{proof}
	
	We partition the nodes of $V-X$ into two sets $R_{1}=\{v\in V-X\mid \sum_{u\in S(v)}y'_{u,v}\geq w(v)\}$ and $R_{2}=V-X-R_{1}$. Observe that $R_{1}$ and $R_{2}$ are the sets of nodes that were eliminated in the first and second stage, respectively. 
	
	The following simple observation follows directly from the construction of $\mathbf{y}'$.
	\begin{observation}\label{observation:r1}
		Consider a node $v\in R_{1}$. For each $u\in L'(v)$, it holds that $y'_{u,v}=0$.
	\end{observation}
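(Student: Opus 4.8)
The plan is to unwind the definition of $R_{1}$ and trace how the dual variables $y'_{u,v}$ are set during the first stage of Algorithm \ref{alg:sparsify} (lines \ref{line:first-stage}--\ref{line:first-stage-fin}). First I would note that a node $v\in V-X$ belongs to $R_{1}$ exactly when it is eliminated in the first stage: the assignment $\lambda(v)=\max\{0,w(v)-\sum_{u\in S(v)}v.y_{u,v}\}$ shows that the defining inequality $\sum_{u\in S(v)}y'_{u,v}\geq w(v)$ of $R_{1}$ is equivalent to $\lambda(v)=0$, which is precisely the condition under which $v$ sets $v.status=eliminated$ in the first stage. Along the way I would also observe that every node in $V-X$ reaches the line that sets $\lambda(v)$, so that $\lambda(v)$ is well defined and the partition of $V-X$ into $R_{1}$ and $R_{2}$ is meaningful.

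The second step is simply to read off the outgoing dual variables of $v$ once $\lambda(v)=0$. If $L(v)\neq\emptyset$, then $v$ sets $y_{u,v}=\frac{\lambda(v)\cdot f}{|L(v)|}=0$ for every $u\in L(v)=L'(v)$, and since $\mathbf{y}'$ coincides with $\mathbf{y}$ on these edges we get $y'_{u,v}=0$. If $L(v)=\emptyset$, then by the definition of $\mathbf{y}'$ on the auxiliary graph $G'$ we have $L'(v)=\{z\}$ and $y'_{v,z}=\lambda(v)\cdot f=0$. In either case $y'_{u,v}=0$ for all $u\in L'(v)$, which is exactly the claim.

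I do not anticipate a genuine obstacle here; the statement is essentially definitional. The only two points that require a little care are (i) reconciling the set-theoretic description of $R_{1}$ with the operational notion of ``eliminated in the first stage'', and (ii) not overlooking the degenerate case $L(v)=\emptyset$, which is handled through the virtual node $z$. Both amount to bookkeeping rather than to any substantive argument.
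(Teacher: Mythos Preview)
Your proposal is correct and matches the paper's approach: the paper does not even supply a proof, stating only that the observation ``follows directly from the construction of $\mathbf{y}'$'', and your argument is precisely the unpacking of that construction (including the degenerate case $L(v)=\emptyset$ handled via the virtual node $z$).
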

	From Observation \ref{observation:r1}, we can derive the following equality regarding the dual objective value.
	\begin{corollary}\label{corollary:r1isneg}
		$\sum_{e\in E'}y'_{e}=\sum_{v\in X\cup R_2}\sum_{u\in L'(v)}y'_{u,v}$
	\end{corollary}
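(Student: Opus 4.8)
The plan is to show that the dual objective $\sum_{e\in E'}y'_{e}$ decomposes as a sum over nodes in which each node $v$ contributes exactly the mass $\sum_{u\in L'(v)}y'_{u,v}$ it places on its edges toward higher-colored neighbors, and then to kill the contribution of $R_{1}$ using Observation~\ref{observation:r1}.

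First I would record that, by the construction of $\mathtt{Sparse\_Set}$ and its extension to $G'$, the value of every edge $(u,v)\in E'$ is assigned by exactly one of its endpoints, namely the one with the smaller color: for an edge $(u,v)\in E$ with $c(v)<c(u)$, the value $y_{u,v}$ is set by $v$ in line~12 (and $u\in L'(v)$), while for an edge $(z,v)\in E'\setminus E$ the value $y'_{v,z}=\lambda(v)\cdot f$ is set by $v$ (and $z\in L'(v)$). Equivalently, orienting every edge toward the endpoint of larger color — with $z$ treated as having the maximal color, so that $z$ is a sink with $L'(z)=\emptyset$ — is a valid orientation, and charging each edge to its (unique) owner yields
\[
\sum_{e\in E'}y'_{e}=\sum_{v\in V'}\sum_{u\in L'(v)}y'_{u,v}=\sum_{v\in V}\sum_{u\in L'(v)}y'_{u,v}.
\]
Next I would use the partition $V=X\cup R_{1}\cup R_{2}$ into the three status classes to split the right-hand side into three sums, and apply Observation~\ref{observation:r1}: for every $v\in R_{1}$ and every $u\in L'(v)$ we have $y'_{u,v}=0$, so the sum over $R_{1}$ vanishes and what remains is exactly $\sum_{v\in X\cup R_{2}}\sum_{u\in L'(v)}y'_{u,v}$, as claimed.

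The only delicate point — and the main (and rather minor) obstacle — is justifying the first displayed identity cleanly: one must verify that $\{L'(v)\}_{v\in V'}$ really does assign each edge of $E'$ to exactly one endpoint (i.e., that the ``orient toward larger color, $z$ maximal'' rule is a genuine orientation with $z$ a sink), and that every $y'_{e}$ is a well-defined real number, which in turn requires that every node exits the first stage and sets all of its outgoing dual variables — including a node that is eliminated in the first stage, which still executes lines~12--13 before becoming $\mathit{eliminated}$. Both facts follow from $c$ being a proper coloring together with the structure of the two while-loops, and involve no computation.
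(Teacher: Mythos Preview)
Your proposal is correct and follows essentially the same route as the paper: the paper's proof also first writes $\sum_{e\in E'}y'_{e}=\sum_{v\in V}\sum_{u\in L'(v)}y'_{u,v}$ (stated there without further justification), then splits $V=X\cup R_{1}\cup R_{2}$ and applies Observation~\ref{observation:r1} to drop the $R_{1}$ term. Your additional discussion of why the ``orient toward larger color, $z$ maximal'' rule gives a genuine partition of $E'$ is more explicit than the paper's one-line ``notice that,'' but the argument is the same.
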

	\begin{proof}
		First, notice that $\sum_{e\in E'}y'_{e}=\sum_{v\in V}\sum_{u\in L'(v)}y'_{u,v}$. The assertion follows since $$\sum_{v\in V}\sum_{u\in L'(v)}y'_{u,v}=\sum_{v\in X\cup R_2}\sum_{u\in L'(v)}y'_{u,v}+\sum_{v\in R_1}\sum_{u\in L'(v)}y'_{u,v}=\sum_{v\in X\cup R_2}\sum_{u\in L'(v)}y'_{u,v}$$
	\end{proof}
	We make the following observation regarding $R_{2}$.
	\begin{observation}\label{observation:r2}
		For each node $v\in R_{2}$, it holds that $|X\cap L'(v)|=|X\cap L(v)|\geq \frac{|L'(v)|}{f}$.
	\end{observation}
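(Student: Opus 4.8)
The plan is to read off the two defining features of a node $v\in R_{2}$ --- that it was \emph{not} eliminated in the first stage and that it \emph{was} eliminated in the second stage --- and match them against the rules of Procedure $\mathtt{Sparse\_Set}$. The first step is to establish that $L(v)\neq\emptyset$ for every $v\in R_{2}$, which instantly yields $L'(v)=L(v)$ by the way $G'$ was defined, hence the claimed equality $|X\cap L'(v)|=|X\cap L(v)|$ together with $|L'(v)|=|L(v)|$. To see $L(v)\neq\emptyset$: membership in $R_{2}$ means $v\in V-X$ with $\sum_{u\in S(v)}y'_{u,v}<w(v)$, so $\lambda(v)=\max\{0,w(v)-\sum_{u\in S(v)}v.y_{u,v}\}$ equals $w(v)-\sum_{u\in S(v)}v.y_{u,v}>0$ and $v$ is not eliminated in the first stage; but if $L(v)=\emptyset$, then by the behavior of the second stage $v$ becomes selected (a node with no larger-colored neighbor that survived the first stage is never eliminated in the second stage), contradicting $v\notin X$.

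The second step is the identification $IN(v)=X\cap L(v)$. Every $u\in L(v)$ has $c(u)>c(v)$, hence $v\in S(u)$, and $u$ sends the message \emph{'selected'} to all of $S(u)$ precisely when it becomes selected; since a node's status is permanent once assigned, the set $IN(v)$ that $v$ uses when it takes its second-stage decision is exactly $\{u\in L(v)\mid u.status=selected\}=X\cap L(v)$.

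For the last step, recall that $v\in R_{2}\subseteq V-X$ and, by the first step, $v$ is not eliminated in the first stage, so it must be eliminated in the second stage; this only happens when the guard $|IN(v)|\geq |L(v)|/f$ holds. Chaining the three steps gives $|X\cap L'(v)|=|X\cap L(v)|=|IN(v)|\geq |L(v)|/f=|L'(v)|/f$, which is the statement. I expect the one place that needs care is precisely the corner case $L(v)=\emptyset$: it cannot be absorbed into the virtual-edge bookkeeping, because a node whose only $G'$-neighbor in $L'$ is the zero-weight node $z$ would satisfy $|X\cap L'(v)|=0<|L'(v)|/f$; ruling that case out via the first-stage/second-stage dichotomy above is the crux, and everything else is a literal reading of the algorithm.
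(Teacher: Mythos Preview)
Your proof is correct and follows essentially the same route as the paper's: rule out $L(v)=\emptyset$ for $v\in R_{2}$ (so that $L'(v)=L(v)$), then read off the second-stage elimination criterion. You supply more detail than the paper does --- in particular the explicit identification $IN(v)=X\cap L(v)$ and the argument that $\lambda(v)>0$ for $v\in R_{2}$ --- but the structure and the key observation (that a node with $L(v)=\emptyset$ which survives the first stage is always selected, hence cannot lie in $R_{2}$) are identical.
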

	\begin{proof}
		First, observe that nodes $v\in V$ with $L'(v)=\{z\}\neq L(v)$ are never eliminated in the second stage (they are either eliminated in the first stage or selected in the second stage). Hence, every node $v\in R_{2}$ satisfies $L(v)=L'(v)$ and thus $|X\cap L'(v)|=|X\cap L(v)|$. Now, for each node $v\in R_{2}$, the inequality $|X\cap L(v)|\geq \frac{|L'(v)|}{f}$ follows immediately from the fact that $v$ was eliminated in the second stage.
	\end{proof}
	
	We are now ready to bound $w(X)$ in terms of the objective value of $\mathbf{y}'$.
	\begin{lemma}\label{lemma:dual-bound}
		$f\cdot w(X)\geq\sum_{e\in E}y'_{e} $.
	\end{lemma}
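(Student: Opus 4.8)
The plan is to sandwich both sides between $f\sum_{v\in X\cup R_2}\lambda(v)$. For the right-hand side, I would combine the identity already used in the proof of Observation~\ref{observation:dualg'}, namely $\sum_{u\in L'(v)}y'_{u,v}=\lambda(v)\cdot f$, with Corollary~\ref{corollary:r1isneg} to get
\[
\sum_{e\in E}y'_e\;\le\;\sum_{e\in E'}y'_e\;=\;\sum_{v\in X\cup R_2}\sum_{u\in L'(v)}y'_{u,v}\;=\;f\sum_{v\in X\cup R_2}\lambda(v),
\]
where the first inequality holds because $E\subseteq E'$ and $\mathbf{y}'$ is non-negative. Hence it suffices to prove $w(X)\ge\sum_{v\in X\cup R_2}\lambda(v)$; in fact this line of argument yields the (slightly stronger, and for Property~(2) the relevant) bound $f\cdot w(X)\ge\sum_{e\in E'}y'_e$.

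Next I would decompose $w(X)=\sum_{v\in X}w(v)$ via the first-stage update. Every selected node $v\in X$ survived the first stage, so $\lambda(v)\ne 0$, which forces the maximum in the definition of $\lambda(v)$ to be attained by $w(v)-\sum_{u\in S(v)}y'_{u,v}$; hence $w(v)=\lambda(v)+\sum_{u\in S(v)}y'_{u,v}$. Summing over $X$,
\[
w(X)=\sum_{v\in X}\lambda(v)+\sum_{v\in X}\sum_{u\in S(v)}y'_{u,v},
\]
so the remaining task reduces to showing $\sum_{v\in X}\sum_{u\in S(v)}y'_{u,v}\ge\sum_{v\in R_2}\lambda(v)$.

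For this last inequality I would swap the order of summation. Since $u\in S(v)$ is equivalent to $v\in L(u)$, and since $y'_{u,v}=\lambda(u)\cdot f/|L'(u)|$ with $L'(u)=L(u)$ whenever $L(u)\ne\emptyset$, one obtains
\[
\sum_{v\in X}\sum_{u\in S(v)}y'_{u,v}=\sum_{u\,:\,L(u)\neq\emptyset}|X\cap L(u)|\cdot\frac{\lambda(u)\cdot f}{|L(u)|}.
\]
All summands are non-negative, so the sum is at least its restriction to $u\in R_2$ — legitimate because, as noted in the proof of Observation~\ref{observation:r2}, no node with $L(u)=\emptyset$ lies in $R_2$. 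For $u\in R_2$, Observation~\ref{observation:r2} gives $|X\cap L(u)|\ge|L'(u)|/f=|L(u)|/f$, so each such summand is at least $\lambda(u)$, and summing over $R_2$ finishes the proof.

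I do not anticipate a real obstacle: the argument is bookkeeping built on top of Observations~\ref{observation:optg'}--\ref{observation:r2} and Corollary~\ref{corollary:r1isneg}. The one point requiring care is the passage from the inequality $w(v)\le\lambda(v)+\sum_{u\in S(v)}y'_{u,v}$, which holds for every node directly from the definition of $\lambda(v)$ as a maximum, to the \emph{exact} identity needed to split $w(X)$; this is precisely where one must invoke that a node with $\lambda(v)=0$ is eliminated in the first stage and therefore excluded from $X$. A secondary, minor subtlety is keeping the summation swap restricted to nodes $u$ with $L(u)\ne\emptyset$, so that $L'(u)=L(u)$ and the closed form $y'_{u,v}=\lambda(u)f/|L(u)|$ is valid.
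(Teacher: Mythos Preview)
Your argument is correct and follows the same charging strategy as the paper: reduce $\sum_{e\in E'}y'_e$ to $f\sum_{v\in X\cup R_2}\lambda(v)$ via Corollary~\ref{corollary:r1isneg}, and then charge the $R_2$-contribution to $X$ using the elimination criterion of Observation~\ref{observation:r2}. The only difference is in execution: the paper introduces, for each $v\in R_2$, an auxiliary partition of $L'(v)$ into $|X\cap L'(v)|$ blocks of size at most $f$ and uses it to bound the $R_2$-part by $f\sum_{v\in X}\sum_{u\in S(v)\cap R_2}y'_{u,v}$, whereas you bypass this device entirely by swapping the order of summation and applying $|X\cap L(u)|\ge |L(u)|/f$ directly. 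Your route is a bit more economical; the paper's partition is essentially a repackaging of the same inequality. Your remark that one must use $\lambda(v)>0$ for $v\in X$ to get the exact identity $w(v)=\lambda(v)+\sum_{u\in S(v)}y'_{u,v}$ is well taken, and the restriction to $u$ with $L(u)\neq\emptyset$ in the summation swap is handled correctly (such $u$ contribute nothing anyway, and nodes of $R_2$ always have $L(u)=L'(u)\neq\emptyset$).
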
 
	\begin{proof}
		For each node $v\in R_{2}$, (arbitrarily) define a partition of $L'(v)$ into $|X\cap L'(v)|$ disjoint sets of at most $f$ nodes each. Observe that such partition exists since $f\cdot|X\cap L'(v)|\geq |L'(v)|$ by Observation \ref{observation:r2}. We now identify each such set with a node $u\in X\cap L'(v)$, and denote this set by $\mu_{v}(u)$. Now, for every $v\in X$, let $\rho(v)=\sum_{u\in S(v)\cap R_{2}}\sum_{u'\in \mu _{u}(v)}y'_{u,u'}$ and observe that by the construction of $\mathbf{y}'$, it holds that $\rho(v)= \sum_{u\in S(v)\cap R_{2}}|\mu _{u}(v)|\cdot \frac{\lambda (u)\cdot f}{|L'(u)|}\leq \sum_{u\in S(v)\cap R_{2}}f\cdot y'_{u,v}=f\cdot \sum_{u\in S(v)\cap R_{2}}y'_{u,v}$. 
		
		Recall that Corollary \ref{corollary:r1isneg} states that $\sum_{e\in E'}y'_{e}=\sum_{v\in X\cup R_2}\sum_{u\in L'(v)}y'_{u,v}$. Developing further, we get 
		\begin{align*}
			\sum_{v\in X\cup R_2}\sum_{u\in L'(v)}y'_{u,v}=\sum_{v\in X}\sum_{u\in L'(v)}y'_{u,v}+\sum_{v\in  R_2}\sum_{u\in L'(v)}y'_{u,v} &\, =\,  \\ 
			\sum_{v\in X}\sum_{u\in L'(v)}\frac{\lambda (v)\cdot f}{|L'(v)|}+\sum_{v\in  R_2}\sum_{u\in X\cap L'(v)}\sum_{u'\in \mu _{v}(u)}y'_{u',v}
			&\, =\,  \\
			\sum_{v\in X}\lambda (v)\cdot f+\sum_{v\in  X}\sum_{u\in S(v)\cap R_{2}}\sum_{u'\in \mu _{u}(v)}y'_{u',u}	&\, =\, \\ \sum_{v\in X}\lambda (v)\cdot f +\sum_{v\in X}\rho(v) &\,\leq \,\\ f\cdot \sum_{v\in X}\left(\lambda (v)+\sum_{u\in S(v)\cap R_{2}}y'_{u,v}\right)  &\,\leq \,\\ f\cdot \sum_{v\in X} \left(w(v)-\sum_{u\in S(v)\cap R_{2}}y'_{u,v}+\sum_{u\in S(v)\cap R_{2}}y'_{u,v}\right)  &\, =\,
			f\cdot w(X)\text{,}
		\end{align*}
		where the second sum in the second line holds because by definition, $L'(v)=\bigcup_{u\in X\cap L'(v)} \mu_{v}(u)$ for each $v\in R_{2}$, and the penultimate inequality holds because by construction, $\lambda(v)=w(v)-\sum_{u\in S(v)}y'_{u,v}\leq w(v)-\sum_{u\in S(v)\cap R_{2}}y'_{u,v}$ for each $v\in X$.
	\end{proof}
	
	It is now simple to show Property (3).
	\begin{lemma}[Property 3]\label{lemma:property3}
		$2\cdot f\cdot w(X)\geq w(V)$.
	\end{lemma}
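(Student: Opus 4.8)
The plan is to prove Property 3 by summing the dual feasibility constraints over all of $V$ and then relating the resulting quantity to the dual objective value, for which we already have a bound of the form $\sum_{e\in E'}y'_{e}\leq f\cdot w(X)$ coming out of the proof of Lemma \ref{lemma:dual-bound}. This is essentially a ``half of weak duality'' argument: each weight $w(v)$ is charged to the dual mass around $v$, and summing charges over-counts each edge by a factor of at most $2$.

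Concretely, I would start from Observation \ref{observation:dualg'}, which states that $\mathbf{y}'$ is dual-feasible for $G'$, i.e.\ $\sum_{u\in N'(v)}y'_{u,v}\geq w(v)$ for every $v\in V$ (recall the weights on $V$ agree in $G$ and $G'$). Summing this inequality over all $v\in V$ gives $w(V)=\sum_{v\in V}w(v)\leq \sum_{v\in V}\sum_{u\in N'(v)}y'_{u,v}$. The next step is to evaluate that double sum by accounting for the contribution of each edge of $G'$: an edge $(u,v)\in E$ has both endpoints in $V$, so its value $y'_{u,v}$ is added once for $v$ and once for $u$, i.e.\ twice; an edge $(z,v)\in E'\setminus E$ has one endpoint equal to the virtual node $z\notin V$, so its value is added only once (for $v$). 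Hence $\sum_{v\in V}\sum_{u\in N'(v)}y'_{u,v}=2\sum_{e\in E}y'_{e}+\sum_{e\in E'\setminus E}y'_{e}=\sum_{e\in E}y'_{e}+\sum_{e\in E'}y'_{e}\leq 2\sum_{e\in E'}y'_{e}$, where the last step uses $E\subseteq E'$ together with $\mathbf{y}'\geq 0$.

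Finally, I would combine this with the bound $\sum_{e\in E'}y'_{e}\leq f\cdot w(X)$ established within the proof of Lemma \ref{lemma:dual-bound} (its chain of (in)equalities starts from Corollary \ref{corollary:r1isneg}, which is stated in terms of $E'$). Putting the pieces together yields $w(V)\leq 2\sum_{e\in E'}y'_{e}\leq 2\cdot f\cdot w(X)$, which is exactly Property 3. The only place that requires any care is the bookkeeping around the virtual node $z$ in the double-counting step --- in particular, making sure the $z$-edges are charged once rather than twice, and that one invokes the $E'$-version of the bound from Lemma \ref{lemma:dual-bound} rather than the weaker $E$-statement --- but once that is handled the argument is immediate from dual feasibility, so I do not expect a genuine obstacle here.
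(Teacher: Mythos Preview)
Your proposal is correct and follows essentially the same route as the paper: sum the dual feasibility inequalities, double-count edges to pass to $\sum_{e}y'_{e}$, and invoke the bound from Lemma~\ref{lemma:dual-bound}. The only cosmetic difference is that the paper sums over $V'$ (so every edge of $E'$ is counted exactly twice, using $w(V')=w(V)$ since $w'(z)=0$), whereas you sum over $V$ and handle the $z$-edges separately; your observation that one really needs the $E'$-version of the bound established inside the proof of Lemma~\ref{lemma:dual-bound} is spot on.
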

	\begin{proof}
		It follows from Lemma \ref{lemma:dual-bound} that $f\cdot w(X)\geq \sum_{e\in E}y'_{e}$. Thus, it suffices to show that $2\cdot \sum_{e\in E}y'_{e}\geq w(V)$. Since $\mathbf{y}'$ is feasible, it holds that $\sum_{u\in N'(v)}y'_{u,v}\geq w(v)$ for each $v\in V'$. Therefore, we get $2\cdot \sum_{e\in E}y'_{e}=\sum_{v\in V'}\sum_{u\in N'(v)}y'_{u,v}\geq \sum_{v\in V'}w(v)=w(V')=w(V)$, where the first equality holds because in the summation $\sum_{v\in V'}\sum_{u\in N'(v)}y'_{u,v}$, we go over every edge twice.
	\end{proof}
	
	We are now prepared to prove Lemma \ref{lemma:sparsify}.
	\begin{proof}[Proof of Lemma \ref{lemma:sparsify}]
		As noted before, Property (1) follows directly from the $\mathtt{Sparse\_Set}$ construction. Property (2) follows from Observations \ref{observation:optg'} and \ref{observation:dualg'} and Lemma \ref{lemma:dual-bound}. Property (3) is established in Lemma \ref{lemma:property3}. 
		
		Regarding runtime, let us bound the number of rounds in each stage. By a simple inductive argument, it holds that after at most $i$ rounds of the first stage, each node $v$ colored by the $i$-th smallest color receives a dual variable $y_{u,v}$ from each neighbors $u\in S(v)$. Thus, the first stage finishes after at most $k$ rounds. Similarly, after at most $i$ rounds of the second stage, each node $v$ colored by the $i$-th largest color receives a message $\mu_{u}\in \{\text{'eliminated'},\text{'selected'}\}$ from every neighbor $u\in L(v)$. Hence, the second stage also finishes after at most $k$ rounds. Overall, we get that the total runtime is $O(k)$.
	\end{proof}
	
	We make another simple observation regarding the induced subgraph $G(X)$ that would be useful for the results obtained in later sections.
	\begin{observation}\label{observation:sparse-set-arb}
		The arboricity of $G(X)$ is smaller than $\beta/f$.
	\end{observation}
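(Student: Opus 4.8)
The plan is to read the bound straight off Property~(1) of Lemma~\ref{lemma:sparsify}. Since $c$ is a proper coloring, every edge of the induced subgraph $G(X)$ joins two nodes of distinct colors, so I would orient each such edge from its lower-colored endpoint towards its higher-colored endpoint. This orientation is acyclic (the color strictly increases along every directed edge), and the set of out-neighbors of a node $v\in X$ inside $G(X)$ is precisely $X\cap L(v)$. By Property~(1), $|X\cap L(v)|<\beta/f$ for every $v\in X$, so $G(X)$ admits an acyclic orientation whose maximum out-degree is at most $d:=\lceil\beta/f\rceil-1<\beta/f$.

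It then remains to invoke the standard fact that a graph admitting an orientation with maximum out-degree at most $d$ has arboricity at most $d$. For completeness I would spell this out through the Nash--Williams characterization of arboricity: for any $S\subseteq X$ with $|S|\ge 2$, summing the out-degrees restricted to $S$ counts each edge of $G(X)[S]$ exactly once, and the unique node of $S$ carrying the largest color contributes $0$; hence $|E(G(X)[S])|\le (|S|-1)\cdot d$, so $\lceil |E(G(X)[S])|/(|S|-1)\rceil\le d$, and the Nash--Williams formula yields that the arboricity of $G(X)$ is at most $d<\beta/f$.

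There is no real obstacle here; the only point requiring a little care is that Property~(1) is a strict inequality between the integer $|X\cap L(v)|$ and the possibly non-integral real $\beta/f$, which is why one passes to the integer bound $d=\lceil\beta/f\rceil-1$ before applying the arboricity estimate, and only then concludes the (strict) inequality $<\beta/f$. The degenerate regime $\beta/f=1$ (so $d=0$), in which $G(X)$ has no edges at all and arboricity $0$, is subsumed automatically since $0<\beta/f$.
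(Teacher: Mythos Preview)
Your proof is correct and follows essentially the same approach as the paper: orient the edges of $G(X)$ towards the endpoint of larger color, invoke Property~(1) to bound the out-degree strictly below $\beta/f$, and then use the acyclicity of this orientation to conclude the arboricity bound. The only cosmetic difference is in the last step---the paper partitions the outgoing edges at each node into $q<\beta/f$ classes and observes that each class, being acyclic with out-degree at most~$1$, is a forest, whereas you invoke the Nash--Williams formula; both are standard and equivalent here.
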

	\begin{proof}
		Let $\rho$ be the orientation obtained by directing each edge of $G(X)$ towards the endpoint that has a larger color according to $c$. Due to Property (1) of Lemma \ref{lemma:sparsify}, we get that every node $v\in X$ has less than $\beta/f$ outgoing edges in $G(X)$ under the orientation $\rho$. That is, the edges of $G(X)$ can be partitioned into $q<\beta/f$ subsets $E_{1},\dots, E_{q}$ of edges such that for each subset $E_{i}$, the out-degree of each node $v\in X$ is at most $1$. Moreover, by construction, each subgraph $(X,E_{i})$ is acyclic with respect to $\rho$. Thus, each subgraph $(X,E_{i})$ is a forest. Overall, we get that the edges of $G(X)$ can be partitioned into $q$ forests. 
	\end{proof}
	
	Notice that it follows from Lemma \ref{lemma:sparsify} that in the case that $\mathtt{Sparse\_Set}$ is invoked with $f=\beta$, the algorithm returns an independent set $X$ which is a $\beta$-approximation for MWIS.\footnote{We note that this special case of the $\mathtt{Sparse\_Set}$ procedure can be derived from the local-ratio approach presented in \cite{Bar-NoyBFNS01}.} That is, the following lemma is a special case of Lemma \ref{lemma:sparsify}.
	
	\begin{lemma}\label{lemma:beta-approximation}
		Given a $\beta$-bounded coloring $c$ of graph $G=(V,E)$, there exists an algorithm that computes a $\beta$-approximation $X\subseteq V$ for MWIS. Moreover, it holds that $2\cdot\beta\cdot w(X)\geq w(V)$. The runtime of this algorithm is $O(k)$, where $k$ is the total number of distinct colors assigned by $c$.
	\end{lemma}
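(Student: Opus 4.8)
The plan is to obtain this lemma as the immediate specialization of Lemma \ref{lemma:sparsify} to the parameter choice $f=\beta$. Since $c$ is given as a $\beta$-bounded coloring, $f=\beta$ satisfies $1\le f\le\beta$, so running $\mathtt{Sparse\_Set}(c,\beta)$ is legal and all three properties of Lemma \ref{lemma:sparsify} hold for the returned set $X$, as does the $O(k)$ runtime bound. So the only work is to convert Property (1) with $f=\beta$ into the statement that $X$ is an independent set.

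First I would argue independence. Property (1) with $f=\beta$ yields $|X\cap L(v)|<\beta/f=1$ for every selected node $v\in X$, and since this quantity is a nonnegative integer it must equal $0$, i.e.\ $X\cap L(v)=\emptyset$ for all $v\in X$. Now suppose two adjacent nodes $u,v$ both lie in $X$. Because $c$ is proper, $c(u)\ne c(v)$; without loss of generality $c(u)>c(v)$, so $u\in L(v)$ and hence $u\in X\cap L(v)$, contradicting $X\cap L(v)=\emptyset$. Thus $G(X)$ has no edges, so $X$ is an independent set.

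The remaining claims are read off directly: Property (2) with $f=\beta$ gives $\beta\cdot w(X)\ge OPT(G)$, which together with independence means $X$ is a $\beta$-approximation for MWIS; Property (3) with $f=\beta$ gives $2\beta\cdot w(X)\ge w(V)$; and the runtime is the $O(k)$ bound of Lemma \ref{lemma:sparsify} verbatim.

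There is essentially no obstacle here: the lemma is a genuine corollary, and the single point requiring a (one-line) argument is the passage from the local bound $|X\cap L(v)|<1$ to global independence, which uses integrality of $|X\cap L(v)|$ and the fact that ``having a larger color'' induces a total order, so every edge inside $X$ would witness a violation at its lower-colored endpoint.
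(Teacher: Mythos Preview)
Your proposal is correct and matches the paper's own treatment exactly: the paper states this lemma as the special case of Lemma~\ref{lemma:sparsify} with $f=\beta$, and your one-line derivation that $|X\cap L(v)|<1$ forces independence is the only nontrivial step. Nothing to add.
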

	\begin{remark}\label{remark:knowledge-of-beta}
		In fact, to obtain the $\beta$-approximation of Lemma \ref{lemma:beta-approximation}, it suffices for each node $v\in V$ to replace $f$ with $|L(v)|$ (instead of with $\beta$). This modification does not affect the correctness nor the runtime. Hence, Lemma \ref{lemma:beta-approximation} can be accomplished even without the nodes knowing $\beta$.
	\end{remark}
	
	By some simple modifications, we are also able to devise a faster approximation algorithm (Algorithm \ref{alg:trade-off-beta}) while incurring a quadratic increase to the approximation guarantee of Lemma  \ref{lemma:beta-approximation}.  Specifically, we establish the following lemma.
	\begin{lemma}\label{lemma:trade-off}
		Given a $\beta$-bounded coloring $c:V\rightarrow [k]$, Algorithm \ref{alg:trade-off-beta} computes a $2\beta ^{2}$-approximation for MWIS. The runtime of Algorithm \ref{alg:trade-off-beta} is $O(\sqrt k)$.
	\end{lemma}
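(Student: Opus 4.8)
\emph{Proof proposal.} The plan is to shrink the \emph{effective} number of colors from $k$ down to $O(\sqrt k)$ by applying the $\beta$-approximation routine of Lemma~\ref{lemma:beta-approximation} (equivalently, $\mathtt{Sparse\_Set}$ with $f=\beta$; by Remark~\ref{remark:knowledge-of-beta} the nodes need not even know $\beta$) at two levels.

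First I would partition the color set $[k]$ into $t=\lceil\sqrt k\,\rceil$ consecutive blocks $C_1,\dots,C_t$, each an interval of at most $\lceil\sqrt k\,\rceil$ colors, and set $B_j=\{v\in V\mid c(v)\in C_j\}$; every node learns its block index from its own color. Restricting $c$ to $G(B_j)$ still yields a proper coloring that uses at most $\lceil\sqrt k\,\rceil$ colors, and it remains $\beta$-bounded, since for $v\in B_j$ the set $L_{G(B_j)}(v)=\{u\in N(v)\cap B_j\mid c(u)>c(v)\}$ is contained in $L(v)$ and hence has size at most $\beta$. Running the algorithm of Lemma~\ref{lemma:beta-approximation} on all the (vertex-disjoint) subgraphs $G(B_j)$ in parallel therefore costs $O(\sqrt k)$ rounds and produces, for each $j$, an independent set $X_j\subseteq B_j$ of $G(B_j)$ with $\beta\cdot w(X_j)\ge OPT(G(B_j))$.

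Next, let $X=\bigcup_{j=1}^{t}X_j$ and define a coloring $c'\colon X\to[t]$ by $c'(v)=j$ iff $v\in X_j$. The key claim is that $c'$ is a $\beta$-bounded $t$-coloring of $G(X)$: it is proper since any edge of $G(X)$ whose endpoints lie in the same block would be an edge of $G(B_j)$ inside the independent set $X_j$, a contradiction; and it is $\beta$-bounded since a neighbor $u\in N(v)\cap X$ with $c'(u)>c'(v)$ lies in a strictly later block and hence satisfies $c(u)>c(v)$, so $v$ has at most $|L(v)|\le\beta$ neighbors in $X$ with larger $c'$-color. Applying the algorithm of Lemma~\ref{lemma:beta-approximation} to $G(X)$ with the coloring $c'$ then costs $O(t)=O(\sqrt k)$ rounds and returns an independent set $Y\subseteq X$ of $G(X)$ — and therefore of $G$ — satisfying $2\beta\cdot w(Y)\ge w(X)$ by Property~(3) of Lemma~\ref{lemma:beta-approximation}. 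The algorithm outputs $Y$, and the total runtime is $O(\sqrt k)+O(\sqrt k)=O(\sqrt k)$.

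For the approximation ratio, I would argue that $OPT$ decomposes over the blocks: if $I$ is a MWIS of $G$, then each $I\cap B_j$ is independent in $G(B_j)$, so
\[
OPT(G)=\sum_{j=1}^{t}w(I\cap B_j)\le\sum_{j=1}^{t}OPT(G(B_j))\le\beta\sum_{j=1}^{t}w(X_j)=\beta\cdot w(X)\le\beta\cdot 2\beta\cdot w(Y)=2\beta^{2}\cdot w(Y),
\]
which is exactly the claimed guarantee. The main obstacle I foresee is verifying that $c'$ is a legitimate $\beta$-bounded coloring of $G(X)$: properness leans on each $X_j$ being independent within its \emph{own} induced subgraph $G(B_j)$ (not merely within $G$), and the bounded property leans on the blocks being \emph{consecutive} intervals of colors, so that ``later block'' implies ``larger original color'' and the new larger-color neighborhood embeds into $L(v)$. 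A secondary subtlety is that at the first level one must use the $OPT$-decomposition above rather than the cruder Property~(3) estimate $2\beta\, w(X_j)\ge w(B_j)$, since the latter route would only yield a $4\beta^{2}$-approximation; Property~(3) is still used, but only at the second level.
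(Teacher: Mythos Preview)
Your argument is correct and rests on the same two-level $\sqrt{k}$ decomposition as the paper, but you apply the two levels in the opposite order from Algorithm~\ref{alg:trade-off-beta}. The paper first runs $\mathtt{Sparse\_Set}$ with the \emph{coarse} block-index coloring $c_1$ on $G_1=(V,E_1)$ (retaining only inter-block edges) to obtain $X$ with $\beta\cdot w(X)\ge OPT(G_1)\ge OPT(G)$ via Property~(2), and then with the \emph{fine} within-block coloring $c_2$ on $G(X)$ to obtain $X'$ with $2\beta\cdot w(X')\ge w(X)$ via Property~(3). You instead handle the fine level first, running in parallel on the induced blocks $G(B_j)$, and the coarse block-index level second on $G(X)$; your inequality $OPT(G)\le\sum_j OPT(G(B_j))$ plays the role of the paper's $OPT(G_1)\ge OPT(G)$. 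Both routes use the same Property-(2)-then-(3) pattern you correctly flag as necessary for $2\beta^2$ rather than $4\beta^2$, and neither buys anything substantive over the other. One caveat: since the lemma is phrased about Algorithm~\ref{alg:trade-off-beta} specifically, your proof establishes the stated guarantee for a sibling algorithm rather than for the one actually written.
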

	The idea of Algorithm \ref{alg:trade-off-beta} is very simple. We divide the color $c(v)\in [k]$ of each node $v\in V$ to two colors $c_1(v),c_{2}(v)\in \{0,1,\dots,\lfloor\sqrt k\rfloor \}$ such that $c(v)=\lceil \sqrt k \rceil\cdot c_{1}(v)+c_{2}(v)$. Then, the algorithm computes a set $X\subseteq V$ by invoking the algorithm derived from Lemma \ref{lemma:beta-approximation} with the coloring $c_{1}$ on the graph $G_{1}=(V,E_{1})$, where $E_{1}=\{(u,v)\mid c_{1}(u)\neq c_{1}(v)\}$. Finally, to compute the output set $X'$, the algorithm of Lemma \ref{lemma:beta-approximation} is invoked again, this time with the coloring $c_{2}$ on the induced subgraph $G(X)$.
	\begin{algorithm}
		\caption{A $2\beta^{2}$-approximation algorithm for MWIS on a graph $G=(V,E)$ with given $\beta$-bounded coloring $c:V\rightarrow[k]$.}
		\label{alg:trade-off-beta}
		\begin{algorithmic}[1]
			\State let $(c_{1}(v),c_{2}(v))\in \{0,1,\dots,\lfloor\sqrt k\rfloor \}^{2}$ s.t.\ $c(v)=\lceil \sqrt k \rceil\cdot c_{1}(v)+c_{2}(v)$ for each $v\in V$
			\State let $E_1=\{(u,v)\mid c_{1}(u)\neq c_{1}(v)\}$ \Comment{bi-chromatic edges according to $c_{1}$}
			\State run $\mathtt{Sparse\_Set}(c_1,\beta)$ on $G_{1}=(V,E_{1})$ to obtain set $X\subseteq V$
			\State run $\mathtt{Sparse\_Set}(c_2,\beta)$ on $G(X)$ to obtain set $X'$
			\State return $X'$ as a MWIS approximation
		\end{algorithmic}
	\end{algorithm}
	
	We now prove Lemma \ref{lemma:trade-off}.
	\begin{proof}[Proof of Lemma \ref{lemma:trade-off}]
		Regarding runtime, we invoke $\mathtt{Sparse\_Set}$ with $c_{1}$ and $c_{2}$. Since both coloring functions use $O(\sqrt k)$ colors, the runtime complexity is $O(\sqrt k)$.
		
		As for correctness, we start by showing that $c_{1}$ is a $\beta$-bounded coloring on $G_{1}$. Notice that by definition, $c_{1}$ is proper with respect to $G_{1}$. To see that it is $\beta$-bounded, observe that $c_{1}(u)>c_{1}(v)\implies c(u)>c(v)$. Since $c$ is $\beta$-bounded, so is $c_{1}$. We can similarly show that $c_{2}$ is a $\beta$-bounded coloring with respect to $G(X)$. First, note that the edges of $G(X)$ must be monochromatic with respect to $c_{1}$. Since $c$ is a proper coloring, the edges of $G(X)$ must be bi-chromatic with respect to $c_{2}$. Moreover, it follows that $c_{2}(u)>c_{2}(v)\implies c(u)>c(v)$ which implies that $c_{2}$ is $\beta$-bounded in $G(X)$. The approximation ratio now follows since $2\beta \cdot w(X')\geq w(X)\geq OPT(G)/\beta \implies 2\beta^{2}\cdot w(X')\geq OPT(G)$.
	\end{proof}
	
	\section{Approximation Algorithms for $\alpha$-Arboricity Graphs}
	\label{section:approx-arboricity}
	In this section we present new deterministic approximation algorithms for MWIS on graphs $G=(V,E)$ with arboricity $\alpha$. For ease of presentation, we present our algorithms under the assumption that every node $v\in V$ knows the value of $\alpha$. Refer to Section \ref{section:knowledge} for a discussion on how this assumption can be lifted at the cost of a multiplicative factor of at most $O(\log \alpha)$ to the runtime.
	
	\subsection{A Basic $(\lfloor (2+\epsilon)\cdot \alpha\rfloor)$-Approximation Algorithm}
	\label{section:simple-arboricity}
	In this section, we present an algorithm that computes a $(\lfloor (2+\epsilon)\cdot \alpha\rfloor)$-approximation for MWIS. More concretely, we prove the following theorem.
	\begin{theorem}\label{theorem:simple-arboricity}
		For any constant $\epsilon>0$, Algorithm \ref{alg:bounded-coloring} computes a $(\lfloor (2+\epsilon)\cdot \alpha\rfloor)$-approximation $X\subseteq V$ for MWIS. Moreover, $2\cdot(\lfloor (2+\epsilon)\cdot \alpha\rfloor)\cdot w(X)\geq w(V)$. The runtime of Algorithm \ref{alg:bounded-coloring} is $O(\alpha \log n)$.
	\end{theorem}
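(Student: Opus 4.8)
The plan is to obtain Theorem~\ref{theorem:simple-arboricity} as an instance of the $\beta$-approximation guarantee of Lemma~\ref{lemma:beta-approximation} (equivalently, the $f=\beta$ special case of Lemma~\ref{lemma:sparsify}), taking $\beta:=\lfloor(2+\epsilon)\cdot\alpha\rfloor$. Concretely, if Algorithm~\ref{alg:bounded-coloring} first computes a $\beta$-bounded coloring $c$ of $G$ that uses $k=O(\alpha\log n)$ colors and can be produced in $O(\alpha\log n)$ rounds, and then runs $\mathtt{Sparse\_Set}(c,\beta)$, the output $X$ is automatically an independent set: Property~(1) of Lemma~\ref{lemma:sparsify} gives $|X\cap L(v)|<\beta/\beta=1$, hence $0$, for every selected $v$, so no edge can have both endpoints in $X$; and Properties~(2)--(3) give $\beta\cdot w(X)\ge OPT(G)$ and $2\beta\cdot w(X)\ge w(V)$, in $O(k)=O(\alpha\log n)$ rounds. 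Thus the whole theorem reduces to constructing such a coloring within the stated time budget (recall each node is assumed to know $\alpha$).

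For the construction I would invoke $\mathtt{BE\_Partition}(\alpha,\epsilon)$ to get, in $O(\log n)$ rounds, a partition $V=H_1\dot{\cup}\cdots\dot{\cup}H_\ell$ with $\ell=O(\log n)$ in which every $v\in H_i$ has at most $\beta$ neighbors in $\bigcup_{j\ge i}H_j$. In particular $G(H_i)$ has maximum degree at most $\beta$, so it admits a proper $(\beta+1)$-coloring $\psi_i$, computable by a standard deterministic $(\beta+1)$-coloring routine in $O(\beta+\log^{*}n)=O(\alpha+\log^{*}n)$ rounds; since the $H_i$ are vertex-disjoint and each $\psi_i$ uses only intra-layer communication, all of them run simultaneously within that same bound. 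I then set $c(v):=(i,\psi_i(v))$ for $v\in H_i$ and order the colors lexicographically, layer index first (relabelling these $\le\ell(\beta+1)$ pairs as $1,\dots,k$ if one prefers colors in $[k]$).

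It remains to verify that $c$ is a $\beta$-bounded coloring with few colors, which is the only part that needs care. Properness: an edge with endpoints in distinct layers differs in the first coordinate, and an edge inside some $H_i$ differs in the second coordinate since $\psi_i$ is proper. $\beta$-boundedness: for $v\in H_i$, any $u\in N(v)$ with $c(u)>c(v)$ either lies in a layer $j>i$ or lies in $H_i$ with $\psi_i(u)>\psi_i(v)$; in both cases $u\in N(v)\cap\bigcup_{j\ge i}H_j$, a set of size at most $\beta$ by the $\mathtt{BE\_Partition}$ guarantee, so $|L(v)|\le\beta$. Color count: at most $\ell(\beta+1)=O(\log n)\cdot O(\alpha)=O(\alpha\log n)=:k$. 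Feeding $c$ into $\mathtt{Sparse\_Set}(c,\beta)$ and applying Lemma~\ref{lemma:beta-approximation} then gives the approximation ratio and the bound $2\beta\cdot w(X)\ge w(V)$, and adding the three phases --- $O(\log n)$ for the partition, $O(\alpha+\log^{*}n)$ for the intra-layer colorings, $O(k)$ for $\mathtt{Sparse\_Set}$ --- yields runtime $O(\alpha\log n)$. I expect the one genuine subtlety to be exactly this $\beta$-boundedness/color-count step: one must refine each layer by a proper $O(\alpha)$-coloring, since simply ordering the nodes of a layer by identifier would keep $c$ proper but could produce $\Theta(n)$ distinct colors, inflating the $O(k)$ running time of $\mathtt{Sparse\_Set}$ to $\Theta(n)$; everything else is inherited verbatim from Lemma~\ref{lemma:beta-approximation}.
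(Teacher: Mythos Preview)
Your proposal is correct and follows essentially the same approach as the paper: both run $\mathtt{BE\_Partition}$, properly $(\delta+1)$-color each layer, form the lexicographic coloring $c$, verify that $c$ is $\delta$-bounded with $O(\alpha\log n)$ colors, and then invoke Lemma~\ref{lemma:beta-approximation}. The only cosmetic difference is that the paper cites the $O(\delta^{3/4}\log\delta+\log^{*}n)$ coloring of \cite{Barenboim16} rather than a generic $O(\delta+\log^{*}n)$ routine, but either way this term is absorbed into the $O(\alpha\log n)$ bound.
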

	
	We now describe the algorithm. Moving forward, we shall use the notation $\delta=\lfloor(2+\epsilon)\cdot \alpha\rfloor$. Refer to Algorithm \ref{alg:bounded-coloring} for a pseudocode description.
	
	\begin{algorithm}
		\caption{A $\delta$-approximation algorithm for an $\alpha$-arboricity graph $G=(V,E)$ ($\delta=\lfloor (2+\epsilon)\cdot \alpha\rfloor$).}
		\label{alg:bounded-coloring}
		\begin{algorithmic}[1]
			\State run $\mathtt{BE\_Partition}(\alpha,\epsilon)$  \Comment{computes node-partition into layers $H_1,\dots ,H_{\ell}\subseteq V$}
			\State compute $(\delta+1)$-coloring $\varphi_{i}$ of each subgraph $G(H_i)$ in parallel
			\State  each node $v$ of layer $i\in [\ell]$ chooses color $c(v)=(i,\varphi_{i}(v))$
			\State run $\mathtt{Sparse\_Set}(c,\delta)$ on $G$  to obtain set $X\subseteq V$ \label{line:c-delta}
			\State return $X$ as a MWIS approximation
		\end{algorithmic}
	\end{algorithm}

	\subparagraph*{Overview of Algorithm \ref{alg:bounded-coloring}.} 
	Algorithm \ref{alg:bounded-coloring} is very simple. First, $\mathtt{BE\_Partition}(\alpha,\epsilon)$ (see Section \ref{section:preliminaries}) is invoked to obtain layers $H_{1},\dots ,H_{\ell}$ (recall that $\ell=O(\log n)$). Then, a $(\delta+1)$-coloring $\varphi_{i}:H_{i}\rightarrow [\delta+1]$ is computed in every node-induced subgraph $G(H_{i})$ in parallel. Each node $v\in H_{i}$ chooses the color $c(v)=(i,\varphi_{i}(v))$. Finally, $\mathtt{Sparse\_Set}(c,\delta)$ is invoked on $G$ to compute the MWIS approximation $X\subseteq V$, where the ordering of colors required for the notion of $\beta$-bounded coloring is defined naturally as $c(v)=(i,\varphi_{i}(v))>c(u)=(j,\varphi_{j}(u))\iff (i>j)\lor (i=j\land \varphi_{i}(v)>\varphi_{j}(u))$.
	
	We are now ready to prove Theorem \ref{theorem:simple-arboricity}.
	\begin{proof}[Proof of Theorem \ref{theorem:simple-arboricity}]
		We start by analyzing the runtime. First, $\mathtt{BE\_Partition}$ takes $O(\log n)$ time. For the $(\delta+1)$-coloring of all layers, notice that the maximum degree in each subgraph $G(H_{i})$ is $\delta$. Thus, we can employ an algorithm that computes a $(\Delta+1)$-coloring on a graph with maximum degree $\Delta$. By \cite{Barenboim16}, this can be done in time $O(\delta^{3/4}\log \delta+\log ^{*}n)=O(\alpha^{3/4}\log \alpha+\log ^{*}n)$. Finally, the number of colors assigned by the coloring $c$ is $O(\delta\log n)=O(\alpha \log n)$. Thus, the call to $\mathtt{Sparse\_Set}(c,\delta)$ takes $O(\alpha\log n)$ time. Overall, the runtime of Algorithm \ref{alg:bounded-coloring} is $O(\alpha\log n)$.
		
		Towards showing correctness, we shall show that $c$ is a $\delta$-bounded coloring. To that end, first observe that $c$ is a proper coloring. Indeed, for an edge $(u,v)\in E$, if $u$ and $v$ are not in the same layer, then clearly $c(u)\neq c(v)$. Otherwise, let $i$ be the index such that $u,v\in H_{i}$. By the correctness of the coloring performed on $G(H_{i})$, we get that $u$ and $v$ were given different colors $\varphi_{i}(u)\neq \varphi_{i}(v)$ and thus, $c(u)\neq c(v)$.
		
		Consider a node $v\in V$, let $i\in [\ell]$ be the index such that $v\in H_{i}$, and let $L(v)=\{u\in N(v)\mid c(u)>c(v)\}$. To see that $c$ is $\delta$-bounded, notice that by definition of $c$, it follows that $L(v)\subseteq\cup_{j=i}^{\ell}H_j$. Since $v$ has at most $\delta$ neighbors in $\cup_{j=i}^{\ell}H_j$, we get that $|L(v)|\leq \delta$. The correctness of Algorithm \ref{alg:bounded-coloring} now follows directly from Lemma \ref{lemma:beta-approximation}
	\end{proof}
	We observe that replacing the invocation of $\mathtt{Sparse\_Set}$ (line \ref{line:c-delta}) with a call to Algorithm \ref{alg:trade-off-beta} results in the following theorem.
	
	\begin{theorem}\label{theorem:trade-off-alpha}
		There exists an algorithm that computes an $O(\alpha^{2})$-approximation for MWIS in $O(\log n+\mathtt{COL}(n,(2+\epsilon)\cdot \alpha)+\sqrt {\alpha\log n})$ rounds, where $\mathtt{COL}(n,\Delta)$ is the runtime of $(\Delta+1)$-coloring an $n$-node graph with maximum degree $\Delta$.
	\end{theorem}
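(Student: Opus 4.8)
The plan is to leave Algorithm \ref{alg:bounded-coloring} essentially untouched and replace only its last step: instead of invoking $\mathtt{Sparse\_Set}(c,\delta)$ in line \ref{line:c-delta}, I would run Algorithm \ref{alg:trade-off-beta} on $G$ with the very same coloring $c$ and return its output $X'$. Everything needed about $c$ is already established in the proof of Theorem \ref{theorem:simple-arboricity}: the coloring $c(v)=(i,\varphi_i(v))$ for $v\in H_i$ is a proper, $\delta$-bounded coloring of $G$ with $\delta=\lfloor(2+\epsilon)\cdot\alpha\rfloor$, and it uses at most $k=\ell\cdot(\delta+1)=O(\alpha\log n)$ distinct colors, since $\mathtt{BE\_Partition}(\alpha,\epsilon)$ produces $\ell=O(\log n)$ layers and each $\varphi_i$ is a $(\delta+1)$-coloring. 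Also, each $G(H_i)$ has maximum degree at most $\delta\le(2+\epsilon)\cdot\alpha$, so computing all the colorings $\varphi_i$ in parallel costs $\mathtt{COL}(n,\delta)\le\mathtt{COL}(n,(2+\epsilon)\cdot\alpha)$ rounds.

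The runtime then follows by summing: $O(\log n)$ for $\mathtt{BE\_Partition}$, plus $\mathtt{COL}(n,(2+\epsilon)\cdot\alpha)$ for the parallel $(\delta+1)$-colorings, plus $O(\sqrt{k})=O(\sqrt{\alpha\log n})$ for the call to Algorithm \ref{alg:trade-off-beta} by Lemma \ref{lemma:trade-off}. For correctness, Lemma \ref{lemma:trade-off} applies directly: fed the $\delta$-bounded coloring $c$ with $k$ colors, Algorithm \ref{alg:trade-off-beta} returns an independent set $X'$ of $G$ with $2\delta^2\cdot w(X')\ge OPT(G)$, which is an $O(\alpha^2)$-approximation since $\delta=O(\alpha)$. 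No structural claim beyond those already proved for Theorem \ref{theorem:simple-arboricity} is required, so this is really a substitution at the level of the algorithm.

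The only point that needs a little care --- and what I expect to be the main (minor) obstacle --- is that Algorithm \ref{alg:trade-off-beta} and Lemma \ref{lemma:trade-off} are phrased for a coloring into $[k]$, while $c$ takes values in $[\ell]\times[\delta+1]$. I would resolve this by having each node relabel its color $(i,\varphi_i(v))$ to the integer $(i-1)\cdot(\delta+1)+\varphi_i(v)\in[k]$; this is an order-preserving, locally computable map (each node knows its layer index $i$, the value $\delta$, and the number of layers $\ell$), so the relabeled coloring is still $\delta$-bounded and uses at most $k$ colors. Alternatively, one can appeal to the extension of the $\beta$-bounded coloring notion to arbitrary ordered color sets mentioned in the preliminaries, after which Algorithm \ref{alg:trade-off-beta} splits the rank of $c(v)$ exactly as written. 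Either way the round complexity is unchanged, giving the claimed $O(\log n+\mathtt{COL}(n,(2+\epsilon)\cdot\alpha)+\sqrt{\alpha\log n})$ bound.
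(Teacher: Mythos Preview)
Your proposal is correct and follows exactly the approach the paper takes: the paper simply states that replacing the call to $\mathtt{Sparse\_Set}(c,\delta)$ in line~\ref{line:c-delta} of Algorithm~\ref{alg:bounded-coloring} with a call to Algorithm~\ref{alg:trade-off-beta} yields the theorem, and your write-up fills in precisely those details (the $\delta$-boundedness of $c$, the color count $k=O(\alpha\log n)$, and the invocation of Lemma~\ref{lemma:trade-off}). The relabeling issue you flag is indeed minor and is handled by the footnote in the preliminaries extending $\beta$-bounded colorings to arbitrary ordered color sets.
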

	For example, plugging the $(\Delta+1)$-coloring algorithm of \cite{Barenboim16} into Theorem \ref{theorem:trade-off-alpha} leads to a runtime bound of  $O(\log n+\alpha^{3/4}\log \alpha+\sqrt {\alpha\log n})$.

	\subsection{Faster Algorithms}
	In this section we present a generic approximation algorithm (Algorithm \ref{alg:template}) parameterized by an integer $k>0$. Specifically, we show the following.
	\begin{lemma}\label{lemma:template}
		For any integer $k>0$, Algorithm \ref{alg:template} computes an $(8^{k}\cdot \alpha)$-approximation for MWIS.  The runtime of Algorithm \ref{alg:template} is $O(k\cdot \alpha^{1/k} \cdot \log n)$.
	\end{lemma}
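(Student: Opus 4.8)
I would analyze Algorithm~\ref{alg:template} as a $k$-level recursive sparsification. At each level, the current graph $H$ (of arboricity $a$) is broken into $C=O(\alpha^{1/k})$ vertex-disjoint classes $V_1,\dots,V_C$, each inducing a subgraph of arboricity $O(a/\alpha^{1/k})$, by means of the arbdefective coloring of~\cite{BarenboimEG22} (applied after the standard $O(\log n)$-round preprocessing, e.g.\ via $\mathtt{BE\_Partition}$, that reduces to the bounded-degree regime in which the cited bound is stated) invoked with parameter $p:=\alpha^{1/k}$; this costs $O(\log n+\alpha^{1/k}+\log^{*}n)$ rounds. The algorithm then recurses \emph{in parallel} on each $G(V_i)$ and finally recombines the returned independent sets with a single call to $\mathtt{Sparse\_Set}$. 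Since $p=\alpha^{1/k}$ is used at every level, the arboricity drops from $\alpha$ to $O(1)$ after $k$ levels, and the recursion bottoms out by applying the $\lfloor(2+\epsilon)\cdot a'\rfloor$-approximation of Theorem~\ref{theorem:simple-arboricity} on the constant-arboricity leaves, at cost $O(\log n)$.

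The heart of a single recombination step is the following observation. Suppose, recursively, that for each class $V_i$ we obtain an independent set $Y_i\subseteq V_i$ of $G(V_i)$ that is an $r'$-approximation of MWIS in $G(V_i)$, and let $Y=\bigcup_i Y_i$. Because each $Y_i$ is independent, every edge of $G(Y)$ joins two distinct classes, so assigning to each $v\in Y_i$ the color $i$ is a \emph{proper} $C$-coloring of $G(Y)$; viewing it as a $\beta$-bounded coloring for a suitably large $\beta$ (so that $1\le C\le\beta$), I invoke $\mathtt{Sparse\_Set}$ on $G(Y)$ with this coloring and $f=C$. By Lemma~\ref{lemma:sparsify} this returns, in $O(C)=O(\alpha^{1/k})$ rounds, an independent set $Z\subseteq Y$ of $G$ with $2C\cdot w(Z)\ge w(Y)=\sum_i w(Y_i)$ (only properties~(2) and~(3) are needed here). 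Conversely, for any MWIS $I$ of $H$ each $I\cap V_i$ is independent in $G(V_i)$, so $\sum_i OPT(G(V_i))\ge\sum_i w(I\cap V_i)=OPT(H)$; combining with $w(Y_i)\ge OPT(G(V_i))/r'$ yields $2C\cdot w(Z)\ge OPT(H)/r'$. Hence $Z$ is an $r$-approximation for $H$ with $r=2C\cdot r'=O(\alpha^{1/k})\cdot r'$.

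Unrolling this recurrence over the $k$ levels, with Theorem~\ref{theorem:simple-arboricity} contributing one more $O(1)$ factor at the leaves, the overall approximation ratio is a product of $k$ per-level factors each of the form $\Theta(\alpha^{1/k})$; since their $\alpha$-parts telescope, this equals $\Theta(\alpha)$ times the $k$-th power of an absolute constant, and with a careful choice of the constants in the arbdefective coloring and in $\mathtt{Sparse\_Set}$ (so that each level loses a factor of at most $8\alpha^{1/k}$) one obtains the claimed $8^{k}\cdot\alpha$. For the running time, the recursion has depth $k$: the down-pass spends $O(\log n+\alpha^{1/k}+\log^{*}n)$ per level (preprocessing plus arbdefective coloring), the up-pass spends $O(\alpha^{1/k})$ per level (the $\mathtt{Sparse\_Set}$ recombination), and the leaves spend $O(\log n)$; crucially, all subproblems at a fixed level are vertex-disjoint and run concurrently, so the levels add rather than multiply, giving $O\!\big(k(\log n+\alpha^{1/k})\big)=O(k\cdot\alpha^{1/k}\cdot\log n)$.

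I expect the main obstacle to be precisely the bookkeeping that keeps the per-level loss at $O(\alpha^{1/k})$ with \emph{no} extra $\log n$ factor. A naive implementation that simply runs $\mathtt{BE\_Partition}$ and then colors inside each of the $\Theta(\log n)$ layers would split $H$ into $\Theta(\alpha^{1/k}\log n)$ classes, and recombining those via $\mathtt{Sparse\_Set}$ would lose a factor $\Theta(\alpha^{1/k}\log n)$ per level, hence $\Theta(\alpha(\log n)^{k})$ overall --- far from $8^{k}\alpha$. Obtaining genuinely $O(\alpha^{1/k})$ classes per level while still reducing the arboricity by a $\Theta(\alpha^{1/k})$ factor is exactly what the arbdefective coloring of~\cite{BarenboimEG22} provides, and cleanly interfacing it with $\mathtt{Sparse\_Set}$ --- in particular verifying that the relevant guarantees of Lemma~\ref{lemma:sparsify} survive when $f$ equals the number of colors of the recombination coloring --- is the delicate part of the argument.
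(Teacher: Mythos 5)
Your proposal has a genuine gap at its core: the recombination step misuses Lemma~\ref{lemma:sparsify}. Property~(1) of that lemma only yields $|Z\cap L(v)|<\beta/f$, so the output is an independent set \emph{only} when $f=\beta$. In your recombination you color $G(Y)$ by the class index, which is a proper $C$-coloring but is $\beta$-bounded only for $\beta$ as large as the maximum number of larger-indexed neighbors in $G(Y)$ --- potentially the maximum degree, far exceeding $C=O(\alpha^{1/k})$. Invoking $\mathtt{Sparse\_Set}$ with $f=C<\beta$ therefore does \emph{not} return an independent set, while invoking it with $f=\beta$ would cost a weight factor $2\beta$ instead of $2C$, destroying the $8^{k}\alpha$ bound. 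So the per-level invariant ``$Z$ independent and $2C\cdot w(Z)\ge w(Y)$'' is unsupported. A second gap is the decomposition itself: you assume each level splits a graph of arboricity $a$ into $O(\alpha^{1/k})$ classes each of arboricity $O(a/\alpha^{1/k})$, but the arbdefective coloring of \cite{BarenboimEG22} is stated in terms of maximum degree, and after $\mathtt{BE\_Partition}$ the per-layer colorings do not merge into $O(\alpha^{1/k})$ \emph{global} classes with controlled arboricity (inter-layer edges inside a class are uncontrolled). You flag this as the delicate point but assert the cited coloring ``provides exactly'' it; it does not.

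The paper's Algorithm~\ref{alg:template} avoids both issues by not recursing on color classes at all. It maintains a single shrinking set $X_0\supseteq X_1\supseteq\cdots$, where the intermediate $X_t$ are \emph{not} independent: in phase $t$ it builds the layered coloring $c_t(v)=(i,\varphi_i(v))$ with $O(\alpha^{1/k}\log n)$ colors and runs $\mathtt{Sparse\_Set}(c_t,4\alpha^{1/k})$ on the bichromatic edges only. The arboricity of $G(X_{t+1})$ is then bounded by splitting its edges into monochromatic ones (controlled by the arbdefectiveness, $\tfrac14\alpha^{(k-t-1)/k}$) and surviving bichromatic ones (controlled by Property~(1) via Observation~\ref{observation:sparse-set-arb}, $\tfrac34\alpha^{(k-t-1)/k}$), while the weight obeys $w(X_t)\le 8\alpha^{1/k}\,w(X_{t+1})$ by Property~(3); independence is produced only once, at the end, by Algorithm~\ref{alg:bounded-coloring} on the arboricity-$\alpha^{1/k}$ graph, and the ratio follows from $OPT(G)\le w(V)=w(X_0)$. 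Note also that your worry about a $\log n$ loss in the approximation is misplaced: the per-phase loss is $2f$, independent of the number of colors --- the $O(\alpha^{1/k}\log n)$ colors only enter the running time, which the budget $O(k\cdot\alpha^{1/k}\log n)$ already accommodates. This misreading is what pushed you toward the recursive design whose recombination step does not go through.
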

	Later on in the section, we demonstrate the applicability of the generic algorithm (see Theorems \ref{theorem:linear-in-alpha},\ref{theorem:delta-epsilon},\ref{theorem:one-plus-tau}, and \ref{theorem:delta-conditional}). We now give an overview of the algorithm. Refer to Algorithm \ref{alg:template} for a pseudocode description.
	\begin{algorithm} 
		\caption{A $(8^{k}\cdot \alpha)$-approximation algorithm for an $\alpha$-arboricity graph $G=(V,E)$ and parameter $k\in\mathbb{Z}_{\geq 1}$.}
		\label{alg:template}
		\begin{algorithmic}[1]
			\State $X_{0}=V$
			\For{$t=0,\dots, k-2$} 
			\State run $\mathtt{BE\_Partition}(\alpha^{(k-t)/k},\epsilon= 1/100)$ on $G(X_{t})$ to compute layers $H^{t}_{1},\dots,H^{t}_{\ell}$
			\State compute a $(\frac{1}{4}\alpha^{(k-t-1)/k})$-arbdefective $O(\alpha^{1/k})$-coloring  $\varphi_{i}$
			on each $G(H^{t}_i)$ \label{line:arbdefective-col}
			\State each node $v$ of layer $i\in [\ell]$ chooses the color $c_{t}(v)=(i,\varphi_{i}(v))$
			\State let $B_{t}=\{(u,v)\in E\mid u,v\in X_{t},\  c_{t}(u)\neq c_{t}(v)\}$
			\State run $\mathtt{Sparse\_Set}(c_{t},4\alpha^{1/k})$ on subgraph $(X_{t},B_{t})$  to obtain set $X_{t+1}\subseteq V$
			\EndFor
			\State run Algorithm \ref{alg:bounded-coloring} on $G(X_{k-1})$ with parameters $\alpha^{1/k}$ and $\epsilon=1/100$ to obtain set $X_{k}\subseteq V$ \label{line:final-call-template}
			\State return $X_{k}$ as a MWIS approximation
			
		\end{algorithmic}
	\end{algorithm}
	\subparagraph*{Overview of Algorithm \ref{alg:template}.}  
	The algorithm starts by performing $k-1$ phases $t=0,\dots ,k-2$. Starting from $X_{0}=V$, in each phase $t$, the set $X_{t+1}\subseteq X_{t}$ is computed. The goal is for $X_{t+1}$ to not be "too far" from $X_{t}$ in terms of weight while having a considerably sparser induced subgraph. To that end, we start by computing a coloring $c_{t}$ as follows. First, $\mathtt{BE\_Partition}$ is invoked on $G(X_{t})$ to compute layers $H^{t}_{1},\dots,H^{t}_{\ell}$. Then, an arbdefective coloring $\varphi_{i}$ that uses $O(\alpha^{1/k})$ colors is computed on each $G(H^{t}_i)$. The coloring $c_{t}$ is obtained by taking $c_{t}(v)=(i,\varphi_{i}(v))$ for each node $v\in V$. The set $X_{t+1}$ is then computed by an invocation of $\mathtt{Sparse\_Set}(c_{t},4\alpha^{1/k})$ on the subgraph $(X_{t},B_{t})$, where $B_{t}=\{(u,v)\in E\mid u,v\in X_{t},\  c_{t}(u)\neq c_{t}(v)\}$ is the set of bi-chromatic edges (with respect to $c_{t}$) in $G(X_{t})$. After completing the $k-1$ phases, the output set $X_{k}$ is computed by an invocation of Algorithm \ref{alg:bounded-coloring} on the subgraph $G(X_{k-1})$ (notice that in the case of $k=1$, Algorithm \ref{alg:template} is just an invocation of Algorithm \ref{alg:bounded-coloring}).
	
	We now analyze Algorithm \ref{alg:template} starting from the following observation.
	\begin{observation}\label{observation:c-and-alpha}
		Every $0\leq t\leq k-1$, satisfies the following:  (1) the arboricity of $G(X_{t})$ is at most $\alpha^{(k-t)/k}$; and (2) $c_{t}$ is a $(3\alpha^{(k-t)/k})$-bounded coloring of the subgraph $(X_t,B_{t})$.
	\end{observation}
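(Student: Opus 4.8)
The plan is to prove both parts simultaneously by induction on $t$. For the base case $t=0$, part (1) is immediate since $X_0 = V$ and $G$ has arboricity $\alpha = \alpha^{(k-0)/k}$. For the inductive step, I would assume that the arboricity of $G(X_t)$ is at most $\alpha^{(k-t)/k}$ and first derive part (2) for this same $t$, then use part (2) together with the guarantees of $\mathtt{Sparse\_Set}$ to derive part (1) for $t+1$.

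To establish part (2) for a given $t$, I would trace through the construction of $c_t$ exactly as in the proof of Theorem~\ref{theorem:simple-arboricity}. Since $G(X_t)$ has arboricity at most $\alpha^{(k-t)/k}$, running $\mathtt{BE\_Partition}(\alpha^{(k-t)/k}, 1/100)$ on $G(X_t)$ yields layers $H^t_1,\dots,H^t_\ell$ such that each $v \in H^t_i$ has at most $\lfloor(2+1/100)\cdot\alpha^{(k-t)/k}\rfloor \le \lfloor 2.01\,\alpha^{(k-t)/k}\rfloor$ neighbors in $\bigcup_{j\ge i} H^t_j$. Now consider an edge $(u,v) \in B_t$ with $c_t(v) > c_t(u)$. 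By the definition of the order on the colors $(i,\varphi_i(\cdot))$, either $v$ lies in a strictly higher layer than $u$, or $u,v$ lie in the same layer $H^t_i$ with $\varphi_i(v) > \varphi_i(u)$. In either case $v \in \bigcup_{j\ge i} H^t_j$ where $i$ is $u$'s layer. Thus the number of larger-colored $B_t$-neighbors of $u$ is bounded by (the number of neighbors of $u$ in strictly higher layers) plus (the number of neighbors $v$ of $u$ in $u$'s own layer with $\varphi_i(v)>\varphi_i(u)$ and $(u,v)\in B_t$). The first quantity is at most $\lfloor 2.01\,\alpha^{(k-t)/k}\rfloor$ by the $\mathtt{BE\_Partition}$ guarantee, and the second is bounded by the number of higher-colored neighbors of $u$ inside the arbdefective coloring of $G(H^t_i)$ — but here I need to be slightly careful: the relevant bound on larger-colored neighbors within a single color class would follow from the arbdefectiveness only after orienting, so instead the cleaner route is to observe that within layer $H^t_i$, the edges of $B_t$ incident to $u$ going to larger $\varphi_i$-colors are a subset of the neighbors of $u$ in $G(H^t_i)$; since $G(H^t_i)$ has maximum degree at most $\lfloor 2.01\,\alpha^{(k-t)/k}\rfloor$ and the arbdefective coloring uses $O(\alpha^{1/k})$ colors, in fact the within-layer contribution can be as large as a full degree. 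So the sum would be roughly $2\cdot\lfloor 2.01\,\alpha^{(k-t)/k}\rfloor$, which already exceeds $3\alpha^{(k-t)/k}$. This suggests the within-layer bound must actually come from the arbdefectiveness: since $\varphi_i$ is a $(\tfrac14\alpha^{(k-t-1)/k})$-arbdefective coloring, each color class has small arboricity but that does not directly limit larger-colored neighbors either. The honest reading is that $\mathtt{BE\_Partition}$'s guarantee already counts \emph{all} neighbors (including same-layer ones) in $\bigcup_{j\ge i}H^t_j \supseteq H^t_i$, so same-layer larger-colored neighbors are \emph{included} in the $\lfloor 2.01\,\alpha^{(k-t)/k}\rfloor$ count, and hence $|L_{c_t}(u) \cap X_t| \le \lfloor 2.01\,\alpha^{(k-t)/k}\rfloor \le 3\alpha^{(k-t)/k}$. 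Properness of $c_t$ on $(X_t,B_t)$ is immediate since $B_t$ contains only bi-chromatic edges. This gives part (2).

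For part (1) at step $t+1$: the set $X_{t+1}$ is the output of $\mathtt{Sparse\_Set}(c_t, 4\alpha^{1/k})$ run on $(X_t,B_t)$, where $c_t$ is $(3\alpha^{(k-t)/k})$-bounded by part (2), so we are in the setting of Lemma~\ref{lemma:sparsify} with $\beta = 3\alpha^{(k-t)/k}$ (note $f = 4\alpha^{1/k} \le \beta$ holds for $\alpha^{(k-t-1)/k} \ge 4/3$, which is the regime of interest; degenerate small cases can be handled separately). By Observation~\ref{observation:sparse-set-arb}, the arboricity of $(X_t,B_t)(X_{t+1})$ is less than $\beta/f = \tfrac{3}{4}\alpha^{(k-t-1)/k} < \alpha^{(k-t-1)/k}$. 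But $G(X_{t+1})$ may contain the monochromatic edges of $G(X_t)$ restricted to $X_{t+1}$ in addition to these; here I need the arbdefective-coloring guarantee on line~\ref{line:arbdefective-col}: within each layer $H^t_i$, each $\varphi_i$-color class induces a subgraph of arboricity at most $\tfrac14\alpha^{(k-t-1)/k}$, and the monochromatic edges of $G(X_{t+1})$ all lie inside single color classes of $c_t$, i.e., inside single $(\text{layer},\varphi)$ classes. Summing the arboricity contributions — $< \tfrac34\alpha^{(k-t-1)/k}$ from the $\mathtt{Sparse\_Set}$ output plus $\le \tfrac14\alpha^{(k-t-1)/k}$ from the monochromatic part — yields arboricity at most $\alpha^{(k-t-1)/k} = \alpha^{(k-(t+1))/k}$, completing the induction. \textbf{The main obstacle} is getting the bookkeeping on the monochromatic edges exactly right: one must argue that a node's monochromatic neighbors all lie in a single arbdefective color class (which holds because $c_t$ refines the layer-plus-$\varphi$ structure), and that arboricities of edge-disjoint subgraphs add, so that the sparse-set bound and the arbdefective bound combine cleanly rather than multiply.
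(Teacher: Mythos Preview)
Your proposal is correct and follows essentially the same route as the paper: induction on $t$, with part~(2) coming from the $\mathtt{BE\_Partition}$ guarantee that all larger-colored neighbors of $v\in H^t_i$ lie in $\bigcup_{j\ge i}H^t_j$ (your eventual ``honest reading'' is exactly the paper's argument), and part~(1) for $t+1$ coming from splitting the edges of $G(X_{t+1})$ into bichromatic ones (arboricity $<\tfrac34\alpha^{(k-t-1)/k}$ via Observation~\ref{observation:sparse-set-arb}) and monochromatic ones (arboricity $\le\tfrac14\alpha^{(k-t-1)/k}$ via the arbdefective coloring). The detour you take before settling on the correct reading of the $\mathtt{BE\_Partition}$ bound is unnecessary---same-layer neighbors are already counted in $\bigcup_{j\ge i}H^t_j$, so no separate within-layer argument is needed---but you arrive at the right place.
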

	\begin{proof}
		We prove the assertion by induction over $t$. For the base case, consider $t=0$. By definition, the arboricity of $G(X_{0})=G$ is $\alpha=\alpha^{(k-0)/k}$. To see that $c_{0}$ is indeed a $(3\alpha)$-bounded coloring of $(X_{0},B_{0})$, first notice that by the definition of $B_{0}$, it holds that $c_{0}(u)\neq c_{0}(v)$ for every edge $(u,v)\in B_{0}$. That is, $c_{0}$ is a proper coloring of $(X_{0},B_{0})$. Moreover, by the properties of $\mathtt{BE\_Partition}$, it follows that each node $v\in H^{0}_{i}$ has at most $(2+1/100)\alpha<3\alpha$ neighbors in the layers $\cup_{j=i}^{\ell}H^{0}_j$. We can now establish that $c_{0}$ is  $(3\alpha)$-bounded since for each node $v\in H^{0}_{i}$, all of $v$'s neighbors with larger color must be in $\cup_{j=i}^{\ell}H^{0}_j$.
		
		Suppose now that the assertion holds for some $t\geq 0$. First, it is essential to show that the arbdefective coloring of line \ref{line:arbdefective-col} is in fact computable. Recall that by \cite{BarenboimEG22}, it is possible to compute a $(\Delta/p)$-arbdefective coloring that uses $O(p)$ colors in graphs of maximum degree $\Delta$. Notice that by the induction hypothesis, $G(X_{t})$ has arboricity at most $\alpha^{(k-t)/k}$. Hence, by the properties of $\mathtt{BE\_Partition}$, each $G(H^{t}_{i})$ has maximum degree $O(\alpha^{(k-t)/k})$. Now, for a fitting choice of $p=O(\alpha^{1/k})$, we get an arbdefective coloring with the desired parameters.
		
		For the step of the induction, we start by showing that the arboricity of $G(X_{t+1})$ is at most $\alpha^{(k-t-1)/k}$. Let $B_{t}=\{(u,v)\in E\mid u,v\in X_{t},\  c_{t}(u)\neq c_{t}(v)\}$ be the set of bi-chromatic edges (with respect to $c_{t}$) in $G(X_{t})$. We partition the edges of $G(X_{t+1})$ into two disjoint sets $M=\{(u,v)\in E-B_{t}\mid u,v\in X_{t+1}\}$ and $B=\{(u,v)\in B_{t}\mid u,v\in X_{t+1}\}$. 
		
		Observe that by the definition of arbdefective coloring, the arboricity of $(X_{t+1},M)$ is at most $\frac{1}{4}\alpha^{(k-t-1)/k}$. As for the subgraph $(X_{t+1},B)$, first notice that by the induction hypothesis, $c_{t}$ is a $(3\alpha^{(k-t)/k})$-bounded coloring of $(X_{t},B_{t})$. Now, because $X_{t+1}$ is obtained by an invocation of $\mathtt{Sparse\_Set}(c_{t},4\alpha^{1/k})$ on $(X_{t},B_{t})$, it follows from Observation \ref{observation:sparse-set-arb} that the arboricity of the graph $(X_{t+1},B)$ is bounded from above by $\frac {3\alpha^{(k-t)/k}}{4\alpha^{1/k}}=\frac{3}{4}\alpha^{(k-t-1)/k}$. Overall, we get that the arboricity of $G(X_{t+1})$ is at most the sum of the two arboricities which is bounded by  $\frac{1}{4}\alpha^{(k-t-1)/k}+\frac{3}{4}\alpha^{(k-t-1)/k}=\alpha^{(k-t-1)/k}$.
		
		We are left to show that $c_{t+1}$ is a $(3\alpha^{(k-t-1)/k})$-bounded coloring of $(X_{t+1},B_{t+1})$. As we have already shown, the arboricity of $G(X_{t+1})$ is at most $\alpha^{(k-t-1)/k}$. Hence, by similar arguments to the ones presented for the base of the induction, it follows that $c_{t+1}$ is a $(3\alpha^{(k-t-1)/k})$-bounded coloring of $(X_{t+1},B_{t+1})$.
	\end{proof}
	Towards bounding the approximation ratio, we make the following observation. 
	\begin{observation}\label{observation:Xt}
		For every $0\leq t\leq k-1$, it holds that $w(X_{t})\leq 8\alpha^{1/k}\cdot w(X_{t+1})$.
	\end{observation}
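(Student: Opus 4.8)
The plan is to split according to which subroutine produces $X_{t+1}$ from $X_t$, and in each case invoke Property (3) of $\mathtt{Sparse\_Set}$ (Lemma~\ref{lemma:sparsify}), which lower-bounds the weight of the returned set by $w(V)/(2f)$ — note that this bound depends only on the parameter $f$, and not on the bounded-coloring parameter $\beta$.

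\emph{Case $0\le t\le k-2$.} Here $X_{t+1}$ is the output of $\mathtt{Sparse\_Set}(c_t,4\alpha^{1/k})$ run on the subgraph $(X_t,B_t)$, whose vertex set is exactly $X_t$ and whose total vertex weight is therefore $w(X_t)$. By Observation~\ref{observation:c-and-alpha}, $c_t$ is a $\beta$-bounded coloring of $(X_t,B_t)$ for $\beta=3\alpha^{(k-t)/k}$ (hence also for any larger value), so the invocation is legal with $f=4\alpha^{1/k}$; in particular there is no need to compare $4\alpha^{1/k}$ with $3\alpha^{(k-t)/k}$, since once $c_t$ is $\beta_0$-bounded it is $\beta$-bounded for every $\beta\ge\beta_0$ and Property (3) is insensitive to the choice of $\beta$. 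Applying Property (3) of Lemma~\ref{lemma:sparsify} with $f=4\alpha^{1/k}$ yields $2\cdot 4\alpha^{1/k}\cdot w(X_{t+1})\ge w(X_t)$, i.e.\ $w(X_t)\le 8\alpha^{1/k}\cdot w(X_{t+1})$, as required.

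\emph{Case $t=k-1$.} Here $X_k$ is obtained by running Algorithm~\ref{alg:bounded-coloring} on $G(X_{k-1})$ with arboricity parameter $\alpha^{1/k}$ and $\epsilon=1/100$. By Observation~\ref{observation:c-and-alpha}(1) the arboricity of $G(X_{k-1})$ is at most $\alpha^{1/k}$, so $\alpha^{1/k}$ is a genuine upper bound on it and Theorem~\ref{theorem:simple-arboricity} applies (with ``$\alpha$'' instantiated as $\alpha^{1/k}$), giving $2\cdot\lfloor(2+1/100)\alpha^{1/k}\rfloor\cdot w(X_k)\ge w(G(X_{k-1}))=w(X_{k-1})$. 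Since $\lfloor(2+1/100)\alpha^{1/k}\rfloor\le (2+1/100)\alpha^{1/k}\le 4\alpha^{1/k}$, we conclude $8\alpha^{1/k}\cdot w(X_k)\ge w(X_{k-1})$. Finally, when $k=1$ the loop over $t$ is empty and only this second case occurs, so the claim holds for all $0\le t\le k-1$ and for every $k\ge 1$.

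There is no real obstacle here: the argument is a direct bookkeeping application of Property (3) of $\mathtt{Sparse\_Set}$ together with Theorem~\ref{theorem:simple-arboricity}. The only points needing a moment of care are (i) recognizing that Property (3)'s guarantee $2f\cdot w(X)\ge w(V)$ is stated purely in terms of $f$, so the precise $\beta$ from Observation~\ref{observation:c-and-alpha} is irrelevant in this observation; (ii) checking that feeding $\alpha^{1/k}$ as the arboricity parameter to Algorithm~\ref{alg:bounded-coloring} (and hence to $\mathtt{BE\_Partition}$) is sound because it upper-bounds the true arboricity of $G(X_{k-1})$; and (iii) the trivial constant inequality $\lfloor(2+1/100)\alpha^{1/k}\rfloor\le 4\alpha^{1/k}$. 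Integrality of the quantity $4\alpha^{1/k}$ used as the parameter $f$ can be dealt with by replacing it with $\lceil 4\alpha^{1/k}\rceil$ throughout, which affects only the hidden constants.
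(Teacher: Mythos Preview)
Your proof is correct and follows essentially the same approach as the paper: split into the case $t\le k-2$ (apply Property~(3) of Lemma~\ref{lemma:sparsify} with $f=4\alpha^{1/k}$) and the case $t=k-1$ (apply Theorem~\ref{theorem:simple-arboricity} on $G(X_{k-1})$ via Observation~\ref{observation:c-and-alpha}(1)). Your additional remarks about why the precondition $f\le\beta$ is harmless (since a $\beta_0$-bounded coloring is $\beta$-bounded for all $\beta\ge\beta_0$, and Property~(3) does not reference $\beta$) and about the integrality of $f$ are more careful than the paper's own proof, which simply cites Lemma~\ref{lemma:sparsify} without comment.
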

	\begin{proof}
		First, consider $t=k-1$. By Observation \ref{observation:c-and-alpha}, the arboricity of $G(X_{k-1})$ is at most $\alpha^{1/k}$. Since we compute the set $X_{k}$ by running Algorithm \ref{alg:bounded-coloring} on $G(X_{k-1})$, it follows from Theorem \ref{theorem:simple-arboricity} that $w(X_{k-1})\leq 2\cdot(2+1/100)\cdot\alpha^{1/k}\cdot w(X_k)<8\alpha^{1/k}\cdot w(X_k)$. Now, for $t<k-1$, notice that $X_{t+1}$ is obtained by an invocation of $\mathtt{Sparse\_Set}(c_{t},4\alpha^{1/k})$. Thus, it follows from Lemma \ref{lemma:sparsify} that $w(X_{t})\leq 2\cdot4\alpha^{1/k}\cdot w(X_{t+1})=8\alpha^{1/k}\cdot w(X_{t+1})$.
	\end{proof}
	As a consequence of Observation \ref{observation:Xt}, we get the following lemma.
	\begin{lemma}\label{lemma:opt-bound}
		$OPT(G)\leq 8^{k}\cdot \alpha \cdot w(X_{k})$.
	\end{lemma}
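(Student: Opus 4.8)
The plan is to chain together the per-phase weight losses quantified by Observation~\ref{observation:Xt} and combine them with the $\alpha^{1/k}$-bounded approximation guarantee of the final call to Algorithm~\ref{alg:bounded-coloring}. First I would observe that Observation~\ref{observation:Xt} gives, for every $0\leq t\leq k-1$, the inequality $w(X_{t})\leq 8\alpha^{1/k}\cdot w(X_{t+1})$. Iterating this from $t=0$ up to $t=k-1$ telescopes to
\[
w(X_{0})\leq \left(8\alpha^{1/k}\right)^{k}\cdot w(X_{k}) = 8^{k}\cdot \alpha \cdot w(X_{k}).
\]
Since $X_{0}=V$ by construction, and since $OPT(G)\leq w(V)$ trivially (the optimum independent set is a subset of $V$ with non-negative weights), this already yields $OPT(G)\leq w(X_{0})\cdot 1 \leq 8^{k}\cdot\alpha\cdot w(X_{k})$ — but this route is wasteful and I would instead get the bound directly from the approximation guarantees rather than from the crude $OPT(G)\le w(V)$.

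The cleaner argument is as follows. I would track $OPT$ of the induced subgraphs. For $t<k-1$, the set $X_{t+1}$ is produced by $\mathtt{Sparse\_Set}(c_{t},4\alpha^{1/k})$ run on $(X_{t},B_{t})$, so Property~(2) of Lemma~\ref{lemma:sparsify} gives $4\alpha^{1/k}\cdot w(X_{t+1})\geq OPT((X_t,B_t))$. However, the graph $(X_t,B_t)$ drops the monochromatic edges of $G(X_t)$, so an independent set of $G(X_t)$ need not be independent in $(X_t,B_t)$ and this comparison does not immediately relate to $OPT(G(X_t))$. To avoid this subtlety I would rely instead on Observation~\ref{observation:Xt}'s weight bound together with the single clean approximation statement at the end: by Observation~\ref{observation:c-and-alpha}, $G(X_{k-1})$ has arboricity at most $\alpha^{1/k}$, so Theorem~\ref{theorem:simple-arboricity} applied to $G(X_{k-1})$ certifies that $X_k$ is a valid independent set and $\lfloor(2+1/100)\alpha^{1/k}\rfloor\cdot w(X_k)\geq OPT(G(X_{k-1}))$. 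Hence it suffices to show $OPT(G)\leq c\cdot w(X_{k-1})$ for an appropriate factor and then absorb the last phase, or more directly to show $OPT(G) \le 8^k\alpha\, w(X_k)$ by the telescoped weight bound combined with $OPT(G) \le w(V) = w(X_0)$.

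Concretely, the steps I would carry out are: (i) note $OPT(G)\leq w(V)$ since weights are non-negative; (ii) recall $X_0=V$; (iii) apply Observation~\ref{observation:Xt} iteratively for $t=0,1,\dots,k-1$ to obtain $w(X_0)\leq (8\alpha^{1/k})^k w(X_k)$; (iv) simplify $(8\alpha^{1/k})^k = 8^k\alpha$; (v) conclude $OPT(G)\leq w(V) = w(X_0) \leq 8^k\alpha\cdot w(X_k)$. I expect no real obstacle here — this is a short telescoping argument. The only point requiring a moment's care is the bookkeeping of the exponent: that $\prod_{t=0}^{k-1}(8\alpha^{1/k}) = 8^k \cdot \alpha^{k/k} = 8^k\alpha$, which is immediate. (If one wanted the tighter route through $OPT$ of induced subgraphs rather than $w(V)$, the main subtlety would be handling the dropped monochromatic edges in $(X_t,B_t)$, but the $w(V)$ route sidesteps this entirely at no asymptotic cost.)
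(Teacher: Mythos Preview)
Your concrete steps (i)--(v) are exactly the paper's proof: telescope Observation~\ref{observation:Xt} to get $w(X_0)\le(8\alpha^{1/k})^k\,w(X_k)=8^k\alpha\,w(X_k)$, then use $OPT(G)\le w(V)=w(X_0)$. The detour through $OPT$ of induced subgraphs is unnecessary (as you yourself note), and the paper does not attempt it either.
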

	\begin{proof}
		It follows from Observation \ref{observation:Xt} that $w(X_{0})\leq (8\alpha^{1/k})^k\cdot w(X_{k})=8^{k}\cdot\alpha\cdot w(X_{k})$. The assertion follows since $OPT(G)\leq w(V)=w(X_0)$.
	\end{proof}
	We are now prepared to prove Lemma \ref{lemma:template}.
	\begin{proof}[Proof of Lemma \ref{lemma:template}]
		The correctness follows directly from Lemma \ref{lemma:opt-bound}. Regarding the runtime, consider a phase $0\leq t\leq k-2$. The invocation of $\mathtt{BE\_Partition}$ in phase $t$ takes $O(\log n)$ rounds. Then, computing an arbdefective coloring that uses $O(\alpha^{1/k})$ colors can be done in $O(\alpha^{1/k}+\log ^{*} n)$ due to the algorithm of \cite{BarenboimEG22}. Notice that the coloring $c_{t}$ uses $O(\alpha^{1/k}\cdot \ell)=O(\alpha^{1/k}\cdot \log n)$ colors. Thus, the runtime of $\mathtt{Sparse\_Set}(c_{t},4\alpha^{1/k})$ on subgraph $(X_{t},B_{t})$ takes $O(\alpha^{1/k}\cdot \log n)$ rounds. This means that the total runtime of all $k-1$ phases is $O(k\cdot \alpha^{1/k}\cdot \log n)$. Finally, in line \ref{line:final-call-template}, Algorithm \ref{alg:bounded-coloring} is invoked on a graph with arboricity at most $\alpha^{1/k}$. As established in \ref{theorem:simple-arboricity}, this requires $O(\alpha^{1/k}\cdot \log n)$ rounds. Overall, we get that the runtime of Algorithm \ref{alg:template} is $O(k\cdot \alpha^{1/k}\cdot \log n)$.
	\end{proof}
	
	We turn to explore some new bounds that can be derived from Lemma \ref{lemma:template}.
	\begin{theorem}\label{theorem:linear-in-alpha}
		For any constant $\tau>0$, there exists an $O(\alpha^{\tau}\log n)$-round algorithm that computes an $O(\alpha)$-approximation for MWIS.
	\end{theorem}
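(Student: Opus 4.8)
The plan is to obtain Theorem~\ref{theorem:linear-in-alpha} as an immediate instantiation of Lemma~\ref{lemma:template}. Recall that Lemma~\ref{lemma:template} provides, for every integer $k>0$, a deterministic CONGEST algorithm (Algorithm~\ref{alg:template}) computing an $(8^{k}\cdot\alpha)$-approximation for MWIS in $O(k\cdot\alpha^{1/k}\cdot\log n)$ rounds. The idea is simply to fix $k$ to a suitable constant depending only on $\tau$.

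Concretely, given the constant $\tau>0$, I would set $k=\lceil 1/\tau\rceil$, which is a positive integer depending only on $\tau$ (hence a constant). With this choice, $1/k\le\tau$, so $\alpha^{1/k}\le\alpha^{\tau}$, and since $k$ and $8^{k}$ are constants, Lemma~\ref{lemma:template} yields an algorithm with approximation ratio $8^{k}\cdot\alpha=O(\alpha)$ and runtime $O(k\cdot\alpha^{1/k}\cdot\log n)=O(\alpha^{\tau}\log n)$, as desired. (If $\tau\ge 1$ one can even take $k=1$, i.e.\ just run Algorithm~\ref{alg:bounded-coloring}, giving an $O(\alpha)$-approximation in $O(\alpha\log n)\subseteq O(\alpha^{\tau}\log n)$ rounds; but the unified choice $k=\lceil 1/\tau\rceil$ already covers all cases.)

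There is essentially no obstacle here: the only point requiring a word of care is that $k$ must be a positive integer while $1/\tau$ need not be, which is handled by the ceiling, and that the hidden constants in the $O(\cdot)$ notation are allowed to depend on $\tau$ (equivalently on $k$), which is standard since $\tau$ is treated as a constant. Thus the proof is just: invoke Lemma~\ref{lemma:template} with $k=\lceil 1/\tau\rceil$ and simplify the resulting bounds.
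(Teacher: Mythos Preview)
Your proposal is correct and takes essentially the same approach as the paper: invoke Lemma~\ref{lemma:template} with $k=\lceil 1/\tau\rceil$ and read off the resulting $O(\alpha)$ approximation ratio and $O(\alpha^{\tau}\log n)$ runtime. The extra remarks you add (the ceiling, the $\tau\ge 1$ case, dependence of constants on $\tau$) are all fine but not needed beyond what the paper already does.
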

	\begin{proof}
		The algorithm is obtained by running Algorithm \ref{alg:template} with $k=\lceil 1/\tau\rceil$. The approximation ratio achieved is $8^{k}\cdot\alpha=8^{\lceil 1/\tau\rceil}\cdot \alpha =O(\alpha)$. The runtime is $O(k\cdot\alpha^{1/k}\cdot \log n)=O(\alpha^{\tau} \log n)$.
	\end{proof}
	Observe that since $\alpha\leq \Delta$, the algorithm of Theorem \ref{theorem:linear-in-alpha} is also an $O(\Delta)$-approximation algorithm. As described in \cite[Lemma 4.6]{FaourGGKR22} (based on the local-ratio approach of \cite{KawarabayashiKS20}), given an $O(\Delta)$-approximation algorithm $\mathcal{A}$ for MWIS, one can improve the approximation factor to $\Delta(1+\epsilon)$ for any parameter $\epsilon>0$, at the cost of $O(\log (1/\epsilon))$  repetitions of $\mathcal{A}$. This leads to the following theorem.
	\begin{theorem}\label{theorem:delta-epsilon}
		Let $\epsilon>0$ be a parameter. For any constant $\tau>0$, there exists an $O(\alpha^{\tau}\log n \log (1/\epsilon))$-round algorithm that computes a $\Delta(1+\epsilon)$-approximation for MWIS.
	\end{theorem}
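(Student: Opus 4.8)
The plan is to combine the $O(\alpha^{\tau}\log n)$-round $O(\alpha)$-approximation of Theorem~\ref{theorem:linear-in-alpha} with the approximation-boosting framework already used in \cite{FaourGGKR22} (which itself rests on the local-ratio technique of \cite{KawarabayashiKS20}). First I would instantiate Theorem~\ref{theorem:linear-in-alpha} with $k=\lceil 1/\tau\rceil$, obtaining a deterministic CONGEST algorithm $\mathcal{A}$ that, on any $n$-node graph of arboricity at most $\alpha$, returns a $c\alpha$-approximation for MWIS in $O(\alpha^{\tau}\log n)$ rounds, where $c=8^{\lceil 1/\tau\rceil}=O(1)$. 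Since $\alpha\le\Delta$ always holds, $\mathcal{A}$ is in particular a $c\Delta$-approximation algorithm, i.e., an $O(\Delta)$-approximation algorithm, which is exactly the form of input required by the boosting step.

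Next I would invoke \cite[Lemma 4.6]{FaourGGKR22}: from any $O(\Delta)$-approximation algorithm one obtains a $\Delta(1+\epsilon)$-approximation by performing $O(\log(1/\epsilon))$ iterations, each of which runs the given algorithm once (the node-weights are updated between iterations according to the local-ratio schema, and the algorithm is re-run on the residual instance). The only point that needs checking is that each of these $O(\log(1/\epsilon))$ invocations still costs only $O(\alpha^{\tau}\log n)$ rounds. This follows because every iteration runs $\mathcal{A}$ on $G$ itself or on a node-induced subgraph of $G$; arboricity is monotone under taking subgraphs, so such a subgraph has arboricity at most $\alpha$ and at most $n$ nodes, and feeding the original $\alpha$ as the (possibly loose) arboricity parameter to $\mathcal{A}$ is harmless, since $\mathtt{BE\_Partition}$ and the arbdefective colorings underlying Algorithm~\ref{alg:template} remain correct when handed an upper bound on the arboricity, at the cost of only a constant-factor loss in the approximation guarantee that is absorbed into the $O(\cdot)$. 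Multiplying the per-iteration cost $O(\alpha^{\tau}\log n)$ by the number of iterations $O(\log(1/\epsilon))$ yields the claimed bound $O(\alpha^{\tau}\log n\log(1/\epsilon))$.

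I do not expect a genuine obstacle here: the substantive content lies entirely in Theorem~\ref{theorem:linear-in-alpha} and in the already-established boosting lemma, so the proof is essentially a one-line composition together with the routine observation that the boosting recursion only ever decreases the relevant parameters $n$ and $\alpha$. If anything, the most delicate bookkeeping is ensuring the final ratio comes out as exactly $\Delta(1+\epsilon)$ rather than $\Delta(1+O(\epsilon))$; this is governed by the choice of constants inside \cite[Lemma 4.6]{FaourGGKR22} and introduces no new ideas, so I would simply cite it.
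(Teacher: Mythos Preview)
Your proposal is correct and matches the paper's own argument essentially verbatim: the paper also observes that the $O(\alpha)$-approximation of Theorem~\ref{theorem:linear-in-alpha} is an $O(\Delta)$-approximation (since $\alpha\le\Delta$) and then invokes \cite[Lemma 4.6]{FaourGGKR22} to boost it to $\Delta(1+\epsilon)$ at the cost of $O(\log(1/\epsilon))$ repetitions. Your additional remarks about arboricity being monotone under subgraphs are sound and slightly more detailed than what the paper spells out, but the approach is identical.
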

	
	Another direct consequence of Algorithm \ref{alg:template} is the following theorem.
	\begin{theorem}\label{theorem:one-plus-tau}
		For any constant $\tau>0$, there exists an $O(\log \alpha \log n)$-round algorithm that computes an $\alpha^{1+\tau}$-approximation for MWIS.
	\end{theorem}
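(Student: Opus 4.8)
The plan is to instantiate the generic algorithm of Lemma~\ref{lemma:template} with a parameter $k$ that grows logarithmically in $\alpha$. Throughout, let $\log$ denote the base-$2$ logarithm. Concretely, I would run Algorithm~\ref{alg:template} with $k=\max\{1,\lfloor \tau\log\alpha/3\rfloor\}$; note that this is computable locally since, as discussed at the beginning of Section~\ref{section:approx-arboricity}, we may assume each node knows $\alpha$. For this choice, the approximation guarantee of Lemma~\ref{lemma:template} is $8^{k}\cdot\alpha=2^{3k}\cdot\alpha\le 2^{\tau\log\alpha}\cdot\alpha=\alpha^{\tau}\cdot\alpha=\alpha^{1+\tau}$, where the inequality uses $3k\le \tau\log\alpha$. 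Hence the returned set is an $\alpha^{1+\tau}$-approximation for MWIS.

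For the runtime, Lemma~\ref{lemma:template} yields $O(k\cdot\alpha^{1/k}\cdot\log n)$ rounds, so it suffices to argue that $k\cdot\alpha^{1/k}=O_{\tau}(\log\alpha)$. The crucial point is that once $k=\Theta(\log\alpha)$ the factor $\alpha^{1/k}$ collapses to a constant: for all $\alpha$ above a suitable constant threshold depending only on $\tau$ we have $k\ge (\tau/6)\log\alpha$, so $\alpha^{1/k}=2^{(\log\alpha)/k}\le 2^{6/\tau}=O_{\tau}(1)$; combining this with the upper bound $k\le (\tau/3)\log\alpha$ gives $k\cdot\alpha^{1/k}=O_{\tau}(\log\alpha)$ and therefore a total runtime of $O(\log\alpha\cdot\log n)$, as claimed. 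For the remaining (constant) range of $\alpha$ below that threshold, running Algorithm~\ref{alg:template} with $k=1$ --- equivalently, Algorithm~\ref{alg:bounded-coloring} --- produces by Theorem~\ref{theorem:simple-arboricity} a $\lfloor(2+\epsilon)\alpha\rfloor$-approximation in $O(\alpha\log n)=O(\log n)$ rounds; since both $\alpha^{1+\tau}$ and $\log\alpha$ are $\Theta(1)$ in this regime, the stated bounds hold up to constant factors.

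The one genuine subtlety --- and the step I would flag as the main obstacle --- is the calibration of $k$ so that two competing demands are met at the same time. Forcing $8^{k}\le\alpha^{\tau}$ requires $k\le(\tau/3)\log\alpha$, i.e.\ $k=O(\log\alpha)$; but keeping the runtime at $O(\log\alpha\log n)$ requires $\alpha^{1/k}=O(1)$, which forces $k=\Omega(\log\alpha)$. The window $k=\Theta_{\tau}(\log\alpha)$ is precisely where both conditions coexist, and the heart of the argument is checking that on this window $\alpha^{1/k}$ really does become a ($\tau$-dependent) constant rather than a growing function of $\alpha$ --- everything else is a direct application of Lemma~\ref{lemma:template} and Theorem~\ref{theorem:simple-arboricity}.
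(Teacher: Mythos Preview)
Your proposal is correct and follows essentially the same approach as the paper: instantiate Lemma~\ref{lemma:template} with $k=\lfloor(\tau/3)\log\alpha\rfloor$, so that $8^{k}\le\alpha^{\tau}$ gives the approximation bound and $\alpha^{1/k}=2^{O(1/\tau)}$ keeps the runtime at $O(\log\alpha\log n)$. You are in fact more explicit than the paper about the small-$\alpha$ regime (the paper's choice of $k$ can be $0$ there), though your remark that the approximation ``holds up to constant factors'' in that regime slightly undershoots the exact $\alpha^{1+\tau}$ claim of the theorem.
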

	\begin{proof}
		The algorithm is obtained by running Algorithm \ref{alg:template} with $k=\lfloor \frac{\tau}{3}\log \alpha\rfloor$. The approximation ratio achieved is $8^{k}\cdot\alpha\leq 8^{\frac{\tau}{3}\log \alpha}\cdot\alpha=\alpha^{\tau}\cdot \alpha=\alpha^{1+\tau}$. The runtime is $O(k\cdot\alpha^{1/k}\cdot \log n)=O(\log \alpha \log n)\cdot 2^{O(1/\tau)}=O(\log \alpha \log n)$.
	\end{proof}
	
	The following theorem follows directly from Theorem \ref{theorem:one-plus-tau}.
	\begin{theorem}\label{theorem:delta-conditional}
		For any graph $G=(V,E)$ with arboricity $\alpha$ and maximum degree $\Delta$ such that $\alpha=\Delta^{1-\Theta(1)}$, there exists an algorithm that computes a $\Delta^{1-\Theta(1)}$-approximation for MWIS in $O(\log \alpha\log n)$ rounds.
	\end{theorem}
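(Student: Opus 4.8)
The plan is to derive this as an immediate corollary of Theorem \ref{theorem:one-plus-tau} by making the exponents match. Suppose $\alpha = \Delta^{1-c}$ for some constant $c \in (0,1)$ (this is what $\alpha = \Delta^{1-\Theta(1)}$ encodes). First I would fix a target: I want an approximation ratio of the form $\Delta^{1-c'}$ for some constant $c' > 0$, and I want the runtime to be $O(\log\alpha\log n)$. The key algebraic observation is that $\alpha^{1+\tau} = \Delta^{(1-c)(1+\tau)}$, so as long as I choose the constant $\tau>0$ small enough that $(1-c)(1+\tau) < 1$, the ratio $\alpha^{1+\tau}$ is at most $\Delta^{1-c'}$ with $c' = 1-(1-c)(1+\tau) > 0$ a constant. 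Concretely, any $\tau$ with $0 < \tau < \frac{c}{1-c}$ works, e.g.\ $\tau = \frac{c}{2(1-c)}$, which is a constant since $c$ is.

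The main steps, in order: (1) Invoke Theorem \ref{theorem:one-plus-tau} with the constant $\tau$ chosen above; this yields a deterministic CONGEST algorithm computing an $\alpha^{1+\tau}$-approximation in $O(\log\alpha\log n)$ rounds. (2) Observe that $\alpha^{1+\tau} = \Delta^{(1-c)(1+\tau)} = \Delta^{1-c'}$ where $c' = 1-(1-c)(1+\tau)$ is a positive constant by the choice of $\tau$; hence the computed independent set is also a $\Delta^{1-c'}$-approximation, i.e.\ a $\Delta^{1-\Theta(1)}$-approximation. (3) The runtime bound $O(\log\alpha\log n)$ is inherited verbatim from Theorem \ref{theorem:one-plus-tau}, since $\tau$ is a constant and the hidden $2^{O(1/\tau)}$ factor there is absorbed into the $O(\cdot)$.

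I do not anticipate any real obstacle here: the entire content is the exponent bookkeeping in step (2), namely checking that $(1-c)(1+\tau)<1$ for a suitable constant $\tau$, which holds because $1-c$ is bounded away from $1$. One minor point worth a sentence in the writeup: the algorithm of Theorem \ref{theorem:one-plus-tau} does not need to know $\Delta$ or $c$ — it only uses $\alpha$ (and, per the remark in Section \ref{section:approx-arboricity}, even the dependence on knowing $\alpha$ can be handled) — so the hypothesis $\alpha = \Delta^{1-\Theta(1)}$ is used purely in the analysis to reinterpret the guarantee, not in the algorithm itself.
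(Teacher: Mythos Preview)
Your proposal is correct and matches the paper's approach: the paper simply states that Theorem~\ref{theorem:delta-conditional} follows directly from Theorem~\ref{theorem:one-plus-tau}, and your exponent bookkeeping (choosing $\tau<\frac{c}{1-c}$ so that $(1-c)(1+\tau)<1$) is exactly the omitted verification.
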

	
	\subsection{Lifting the Knowledge of Arboricity Assumption}
	\label{section:knowledge}
	Recall that the algorithms presented in Section \ref{section:approx-arboricity} assume that the nodes know the value of $\alpha$. In this section, we show how this assumption can be lifted. First, we address the procedure $\mathtt{BE\_Partition}$ that is assumed to take $\alpha$ as input. We describe a method for lifting the assumption on the knowledge of $\alpha$ while incurring an $O(\log \alpha)$ multiplicative overhead to the number of layers $\ell$ computed during the procedure.
	
	The implementation of Procedure $\mathtt{BE\_Partition}$ without knowledge of $\alpha$ is similar to the one presented in \cite{BarenboimE10} with a small additional modification. The idea of \cite{BarenboimE10} is to run $\lceil\log n\rceil+1$ parallel executions of $\mathtt{BE\_Partition}(2^{i},\epsilon)$ for $i=0,1,\dots,\lceil \log n\rceil$. Each node then chooses the layer to which it was clustered in the execution that minimizes $i$. This leads to a partition $H_{1},\dots H_{\ell}$ of the nodes into $\ell=O(\log \alpha\log n)$ layers such that each node $v\in V$ belonging to layer $H_{j}$ has at most $2\delta$ neighbors in $\cup_{k=j}^{\ell}H_{k}$ (where $\delta=\lfloor (2+\epsilon)\alpha\rfloor$). A problem that arises in this case is that the approximation ratio of Algorithm \ref{alg:bounded-coloring} becomes $2\delta$ as the coloring $c$ obtained by the algorithm is $2\delta$-bounded. 
	
	To avoid increase to the approximation ratio, one can slightly tweak the parameters used in the runs of $\mathtt{BE\_Partition}$. Let $\gamma,\epsilon'>0$ be two constants such that $(2+\epsilon')(1+\gamma)\leq 2+\epsilon $. Now, change the parallel runs to execute $\mathtt{BE\_Partition}((1+\gamma)^{i},\epsilon')$ for $i=0,\dots ,\lceil \log_{1+\gamma}n\rceil$. In the resulting partition $H_1,\dots,H_{\ell}$, each node $v\in H_{j}$ has at most $\delta$ neighbors in $\cup_{k=j}^{\ell}H_{k}$. Moreover, notice that each node $v\in V$ obtains an estimate $\alpha(v)$ which is bounded from above by $\alpha$. We note that in our algorithms, whenever $\alpha$ is used, each node internally can simply use the estimate $\alpha(v)$. This modification does not affect the correctness.
	
	As noted before, when invoking $\mathtt{BE\_Partition}$ without knowledge of $\alpha$, the number of layers becomes $\ell=O(\log \alpha\log n)$. As a consequence, the number of colors in each coloring computed during our algorithms increases by an $O(\log \alpha)$ factor. Since the runtime of $\mathtt{Sparse\_Set}$ depends on the number of colors, we get that apart from Theorem \ref{theorem:trade-off-alpha}, the upper bound on all runtimes stated in Section \ref{section:approx-arboricity} is multiplied by $O(\log \alpha)$; whereas the runtime of the $O(\alpha^{2})$-approximation algorithm stated in Theorem \ref{theorem:trade-off-alpha} changes to $O(\log n+\mathtt{COL}(n,(2+\epsilon)\cdot \alpha)+\sqrt{ \alpha\log\alpha\log n})$.

	\section{Approximation Algorithm for Directed Graphs with Bounded Out-Degree}
	\label{section:maxis-approximation-directed}
	Consider a directed graph $G=(V,E)$ where the out-degree of each node is bounded by an integer $d>0$, and let $w:V\rightarrow\mathbb{R}_{\geq 0}$ be a node-weight function. In this section, we present an algorithm (Algorithm \ref{alg:directed}) that computes a $2d^2$-approximation for MWIS in $G$. More concretely, we prove the following theorem. 
	\begin{theorem}\label{theorem:directed-approximation}
		For a directed graph $G=(V,E)$ with out-degree at most $d$, Algorithm \ref{alg:directed} computes a $2d^2$-approximation for MWIS in time $O(d^2+\log ^{*}n)$.
	\end{theorem}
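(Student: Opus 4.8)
The plan is to obtain, from the given out-orientation and in $O(d^{2}+\log^{*}n)$ rounds, a coloring of $G$ that can be fed to the primal--dual machinery of Section~\ref{section:sparse-set}: with a $\beta$-bounded coloring using $k$ colors in hand, $\mathtt{Sparse\_Set}$ (Lemma~\ref{lemma:sparsify}, Lemma~\ref{lemma:beta-approximation}) yields a $\beta$-approximation in $O(k)$ rounds, while Algorithm~\ref{alg:trade-off-beta} (Lemma~\ref{lemma:trade-off}) yields a $2\beta^{2}$-approximation in $O(\sqrt{k})$ rounds. Since $2d^{2}=2\beta^{2}$ for $\beta=d$, the target amounts to producing a coloring whose forward degree $\max_{v}|L(v)|$ is $O(d)$ and whose number of colors is $d^{O(1)}$, fast; I would then finish through Section~\ref{section:sparse-set} essentially verbatim, so the whole content of Algorithm~\ref{alg:directed} is this coloring step.

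The cheap part of the construction uses the orientation directly: each node labels its at most $d$ out-edges with distinct labels from $\{1,\dots,d\}$, which partitions $E$ into $d$ edge sets whose induced subgraphs each have maximum out-degree $1$ (pseudoforests), at a cost of $O(1)$ rounds. On each of these pseudoforests, symmetry can be broken in $O(\log^{*}n)$ rounds by a Cole--Vishkin / Linial-type reduction that recolors each node from its own color and that of its unique out-neighbor, and all $d$ reductions run in parallel because every edge of $G$ belongs to exactly one of the pseudoforests, so CONGEST bandwidth is respected. The remaining, and I expect harder, task is to combine these per-piece colorings (together with the orientation) into a single coloring $c$ of $G$: the obvious product of the $d$ per-piece colorings is proper but uses $2^{\Theta(d)}$ colors, so the combination must be done more economically, and — more fundamentally — a low-color \emph{proper} coloring of $G$ is in general far from being $O(d)$-bounded (already for $d=1$ a single pseudoforest may require a super-constant number of colors for any $O(1)$-bounded coloring), so the construction has to use the $O(d^{2})$-round budget and the orientation jointly rather than treat the merge as a pure coloring problem; a natural attempt is to interleave $\mathtt{Sparse\_Set}$ passes (in the spirit of Algorithm~\ref{alg:template}) with the fast per-piece colorings and use Observation~\ref{observation:sparse-set-arb} to keep the residual graph sparse, tuning the parameter $f$ fed to $\mathtt{Sparse\_Set}$ at each step so that the cumulative loss stays $O(d^{2})$ rather than exponential in $d$. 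This balancing of the approximation factor against the round budget is, I believe, the crux of the proof.

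Granting such a coloring $c$ of $G$ with $\beta=O(d)$ and $d^{O(1)}$ colors, the rest is routine: invoke $\mathtt{Sparse\_Set}(c,\beta)$ (Lemma~\ref{lemma:beta-approximation}), or the two-pass Algorithm~\ref{alg:trade-off-beta} (Lemma~\ref{lemma:trade-off}) when $c$ uses more than $d^{2}$ colors, to obtain an independent set $X$ with $2d^{2}\cdot w(X)\ge OPT(G)$, and return $X$. The running time is then $O(1)$ for the edge partition, plus $O(\log^{*}n)$ for the $d$ parallel pseudoforest colorings, plus $O(d^{2})$ for the merge and for the final $\mathtt{Sparse\_Set}/$Algorithm~\ref{alg:trade-off-beta} invocation (whose cost is $O(\#\text{colors})$, resp.\ $O(\sqrt{\#\text{colors}})$, with $d^{O(1)}$ colors). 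Finally, the $\Omega(\log^{*}n)$ term in the running time needs no further work: it is matched by the lower bounds of \cite{CzygrinowHW08,LenzenW08}, which apply already for constant $d$.
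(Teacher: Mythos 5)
There is a genuine gap: the step you yourself call ``the crux'' --- how to stay within the $O(d^{2}+\log^{*}n)$ budget while losing only a $2d^{2}$ factor --- is exactly what you do not supply, and the reduction target you set up for it is the wrong one. You aim to build a single coloring of all of $G$ that is $O(d)$-bounded and uses $d^{O(1)}$ colors; as you half-suspect, no such coloring exists in general. Already for $d=1$, take a tree oriented toward its root (out-degree $1$): in any $\beta$-bounded coloring with constant $\beta$ of a complete $(\beta+1)$-ary tree, the root has a smaller-colored child, that child's budget is consumed by its parent, and inductively one finds a root-to-leaf path along which colors strictly decrease, so the number of colors must be at least the depth, i.e.\ it grows with $n$. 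Hence no amount of ``interleaving $\mathtt{Sparse\_Set}$ passes'' can produce the object you want, and even an $O(d\log n)$-color bounded coloring (which is achievable) would blow the round budget for small $d$. The pseudoforest decomposition plus per-piece Cole--Vishkin is also unnecessary machinery: a proper $O(d^{2})$-coloring $c$ of $G$ follows directly from Linial's algorithm in $O(\log^{*}n)$ rounds, and that is all the coloring the proof needs.

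The missing idea in the paper is that one never needs a bounded coloring of $G$ itself; instead one plays the orientation against the color order in two passes. Let $E_{inc}=\{(v\rightarrow u)\in E \mid c(v)<c(u)\}$. On $G_{inc}=(V,E_{inc})$ the plain proper coloring $c$ is already $d$-bounded, because a neighbor with larger color in $G_{inc}$ must be an \emph{out}-neighbor; so $\mathtt{Sparse\_Set}(c,d)$ (Lemma~\ref{lemma:beta-approximation}) yields $X$ with $d\cdot w(X)\geq OPT(G_{inc})\geq OPT(G)$, though $X$ need not be independent in $G$. But every surviving edge inside $G(X)$ is a ``decreasing'' edge, so the reversed coloring $-c$ is $d$-bounded on $G(X)$ (again, larger-colored neighbors are forced to be out-neighbors, using the independence of $X$ in $G_{inc}$), and a second call $\mathtt{Sparse\_Set}(-c,d)$ returns an independent set $X'$ with $2d\cdot w(X')\geq w(X)$, giving the $2d^{2}$ bound; each call costs $O(\#\text{colors})=O(d^{2})$ rounds. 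This has the same two-pass shape as Algorithm~\ref{alg:trade-off-beta}, which you correctly reach for, but the decomposition is by edge direction versus color order rather than by splitting the color space --- and without that device your argument does not go through. Your closing remark about the $\Omega(\log^{*}n)$ lower bound is fine but concerns optimality, not the correctness claim being proved.
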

	\begin{algorithm}
		\caption{A $2d^2$-approximation algorithm for MWIS on a directed graph $G=(V,E)$ with out-degree at most $d$.}
		\label{alg:directed}
		\begin{algorithmic}[1]
			\State compute a proper $O(d^2)$-coloring $c$
			\State let $E_{inc}=\{(v\rightarrow u)\in E\mid c(v)<c(u)\}$
			\State run  $\mathtt{Sparse\_Set}(c,d)$ on $G_{inc}=(V,E_{inc})$ to obtain set $X\subseteq V$
			\State run $\mathtt{Sparse\_Set}(-c,d)$ on $G(X)$ to obtain set $X'\subseteq V$ \label{line:-c}
			\State return $X'$ as a MWIS approximation
		\end{algorithmic}
	\end{algorithm}
	
	\subparagraph*{Overview of Algorithm \ref{alg:directed}.} First, Algorithm \ref{alg:directed} computes a proper coloring $c:V\rightarrow[k]$ of $G$, such that $k=O(d^2)$. Then, let $E_{inc}=\{(v\rightarrow u)\in E\mid c(v)<c(u)\}$ be the set of edges in $G$ directed towards the endpoint with larger color. Algorithm \ref{alg:directed} then invokes $\mathtt{Sparse\_Set}(c,d)$ on the graph $G_{inc}=(V,E_{inc})$ to compute a set $X$. Notice that $X$ is not necessarily an independent set in $G$. To correct that, Algorithm \ref{alg:directed} invokes $\mathtt{Sparse\_Set}(-c,d)$ again, this time on the node-induced subgraph $G(X)$ with the coloring $(-c)$ (defined simply as $-c(v)=(-1)\cdot c(v)$ for all $v\in V$) to obtain the set $X'$ which is returned as a MWIS approximation.
	
	We now analyze Algorithm \ref{alg:directed} starting from the following two simple observations.
	\begin{observation}\label{observation:directed-feasible}
		$X'$ is an independent set in $G$.
	\end{observation}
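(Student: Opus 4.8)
The plan is to show that $X'$ contains no edge of $G$ by arguing that the two invocations of $\mathtt{Sparse\_Set}$ together ``kill'' every possible edge. First I would recall that $\mathtt{Sparse\_Set}$ applied to a graph $H$ with a $\beta$-bounded coloring $c$ and parameter $f=\beta$ returns, by Lemma~\ref{lemma:beta-approximation} (equivalently, the $f=\beta$ case of Lemma~\ref{lemma:sparsify}, see also Remark~\ref{remark:knowledge-of-beta}), a set that is an \emph{independent set of $H$} — indeed Property~(1) with $f=\beta$ gives $|X\cap L(v)|<1$, i.e.\ no selected node has a selected neighbor with a larger color, and since this holds for every selected node, $X$ has no edge of $H$ at all. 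So the key is to check that in both calls the parameter passed ($d$) really equals the relevant $\beta$-bound of the coloring on the graph the procedure is run on.

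The key steps, in order: \emph{(i)} Consider any edge $(u,v)\in E$ of the directed graph $G$; it is directed either $u\rightarrow v$ or $v\rightarrow u$, and since $c$ is a proper coloring the endpoints have distinct colors. Split into two cases according to whether the edge is oriented toward its higher-colored or lower-colored endpoint. \emph{(ii)} If the edge points toward the endpoint with the larger $c$-color, then it belongs to $E_{inc}$, so it is an edge of $G_{inc}$. The coloring $c$ restricted to $G_{inc}$ is $d$-bounded: it is proper on $G_{inc}$, and for any node the set of $G_{inc}$-neighbors with larger color is contained in its set of \emph{outgoing} neighbors in $G$ (those are exactly the edges of $E_{inc}$ leaving it toward a higher color — wait, more carefully: a $G_{inc}$-edge at $v$ going to a higher color is an edge of $E_{inc}$, and every edge of $E_{inc}$ is oriented toward the higher color, so such an edge is oriented \emph{out of} $v$), hence there are at most $d$ of them. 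So $\mathtt{Sparse\_Set}(c,d)$ is run with $f=\beta=d$, and by Lemma~\ref{lemma:beta-approximation} the returned set $X$ contains no edge of $G_{inc}$; in particular it does not contain this edge. \emph{(iii)} If instead the edge points toward the endpoint with the \emph{smaller} $c$-color, it is not in $E_{inc}$, so after the first call it may still have both endpoints in $X$. Now consider the coloring $-c$ on $G(X)$: it is proper (as $c$ is), and the $(-c)$-neighbors of a node $v$ with larger $(-c)$-color are exactly its neighbors with smaller $c$-color. An edge of $G(X)$ at $v$ going to a smaller-$c$-color endpoint is an edge of $G$ that, restricted to $X$, survived the first call; since the first call already removed every $G_{inc}$-edge, any surviving $G(X)$-edge at $v$ toward a smaller color must be oriented \emph{out of} $v$ in $G$ (if it were oriented into $v$ from a smaller-colored node $w$, then $w\rightarrow v$ with $c(w)<c(v)$ would put it in $E_{inc}$, hence it would have been an edge of $G_{inc}$ and could not survive). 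Hence each node has at most $d$ such neighbors, so $-c$ is $d$-bounded on $G(X)$ and $\mathtt{Sparse\_Set}(-c,d)$ is again invoked with $f=\beta=d$; by Lemma~\ref{lemma:beta-approximation} the output $X'$ contains no edge of $G(X)$, in particular not this edge. \emph{(iv)} Since every edge of $G$ with both endpoints in $X'$ would fall into one of the two cases and be excluded in the corresponding call, $X'$ is an independent set of $G$.

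The main obstacle — really the only nontrivial point — is step~(iii): one must be careful that the bound ``at most $d$ neighbors with smaller $c$-color in $G(X)$'' holds. This is \emph{not} true in the original graph $G$ (a node can have many incoming edges from lower-colored neighbors), and is the reason the argument must use that the \emph{first} call already eliminated all of $E_{inc}$, so that the only edges of $G$ surviving into $G(X)$ that go to a smaller color are those oriented away from the node — of which there are at most $d$ by the out-degree bound. Once this is set up correctly, the rest is a direct application of Lemma~\ref{lemma:beta-approximation} twice.
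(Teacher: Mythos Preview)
Your proof is correct. Note, however, that the paper gives no proof of this observation at all --- it is stated as immediate. What you have written essentially unpacks the two surrounding observations (Observation~\ref{observation:directed-bounded-coloring} and Observation~\ref{observation:directed(-c)}) inline: you argue that $c$ is $d$-bounded on $G_{inc}$ and that $-c$ is $d$-bounded on $G(X)$, then apply the $f=\beta$ case of Lemma~\ref{lemma:sparsify} (equivalently Lemma~\ref{lemma:beta-approximation}) to each call. The paper separates those two boundedness claims into their own observations and, with them in hand, Observation~\ref{observation:directed-feasible} reduces to the one-line remark that $X'$ is returned by $\mathtt{Sparse\_Set}(-c,d)$ on the induced subgraph $G(X)$ with $f=\beta=d$, hence is independent in $G(X)$, and since $X'\subseteq X$ and $G(X)$ is induced, $X'$ is independent in $G$. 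Your edge-by-edge case split (toward the higher- vs.\ lower-colored endpoint) is a perfectly valid alternative framing, but it is doing the same work.
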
	
	\begin{observation}\label{observation:directed-bounded-coloring}
		$c$ is a $d$-bounded coloring of $G_{inc}$.
	\end{observation}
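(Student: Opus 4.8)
The plan is to verify directly that $c$ meets the two requirements in the definition of a $d$-bounded coloring, read with respect to the graph $G_{inc}=(V,E_{inc})$ on which $\mathtt{Sparse\_Set}(c,d)$ is invoked. The whole point of passing to $G_{inc}$ is to make the ``larger color'' direction coincide with the ``out-edge'' direction, so the out-degree bound of $G$ translates into the bound we need.

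\textbf{Properness.} Since $c$ is a proper coloring of $G$ and $E_{inc}\subseteq E$, every edge of $G_{inc}$ joins two nodes of distinct colors; hence $c$ is proper on $G_{inc}$ as well. This step is immediate.

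\textbf{Bounding the larger-color neighborhood.} For a node $v\in V$, let $L(v)=\{u\in N(v)\mid c(u)>c(v)\}$ with the neighborhood $N(v)$ taken in $G_{inc}$. The key step is to translate membership in $L(v)$ into an outgoing edge of $v$ in the original directed graph: if $u\in L(v)$ then there is an edge between $u$ and $v$ in $E_{inc}$, and by the definition $E_{inc}=\{(x\rightarrow y)\in E\mid c(x)<c(y)\}$ this edge keeps its $G$-orientation and points from its lower-color endpoint to its higher-color endpoint; since $c(u)>c(v)$, the edge must be $(v\rightarrow u)$. Thus every $u\in L(v)$ is an outgoing neighbor of $v$ in $G$, so $|L(v)|$ is at most the out-degree of $v$ in $G$, which is at most $d$ by hypothesis. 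Combined with properness, this gives that $c$ is a $d$-bounded coloring of $G_{inc}$.

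The only mild subtlety — and the closest thing here to an obstacle — is bookkeeping about which graph the objects $N(\cdot)$ and $L(\cdot)$ refer to: they are taken in $G_{inc}$, whereas the out-degree bound is a property of $G$. Since every edge retained in $E_{inc}$ carries its orientation over from $G$, the two are linked exactly as above, and no further argument is required.
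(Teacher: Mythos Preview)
Your proof is correct and follows essentially the same approach as the paper: identify that any neighbor $u$ of $v$ in $G_{inc}$ with $c(u)>c(v)$ must be an outgoing neighbor of $v$ in $G$, and then invoke the out-degree bound. You additionally spell out the properness condition, which the paper leaves implicit.
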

	\begin{proof}
		Consider a node $v$. By construction, each neighbor $u$ of $v$  in $G_{inc}$ that satisfies $c(u)>c(v)$, must be an outgoing neighbor. Since $v$ has at most $d$ outgoing neighbors, it follows that $c$ is a $d$-bounded coloring.
	\end{proof}
	We now establish the following lemma regarding the weight of $X$.
	\begin{lemma}\label{lemma:directed-opt-inc}
		$d\cdot w(X)\geq OPT(G)$
	\end{lemma}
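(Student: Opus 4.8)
The plan is to reduce to Property (2) of Lemma~\ref{lemma:sparsify} applied on the graph $G_{inc}$, and then observe that passing from $G$ to $G_{inc}$ can only increase the optimum. First I would invoke Observation~\ref{observation:directed-bounded-coloring}, which tells us that $c$ is a $d$-bounded coloring of $G_{inc}$. Since Algorithm~\ref{alg:directed} runs $\mathtt{Sparse\_Set}(c,d)$ on $G_{inc}$ with parameter $f=d$, and $1\le f=d\le d=\beta$, the hypotheses of Lemma~\ref{lemma:sparsify} are met, so Property~(2) yields $d\cdot w(X)\ge OPT(G_{inc})$.

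It remains to show $OPT(G_{inc})\ge OPT(G)$. This is immediate from the fact that $E_{inc}\subseteq E$ (as undirected edge sets), so $G_{inc}$ is obtained from $G$ by deleting edges while keeping the same node set and the same weights. Consequently every independent set of $G$ is also an independent set of $G_{inc}$ of the same weight, whence $OPT(G)\le OPT(G_{inc})$. Chaining the two inequalities gives $d\cdot w(X)\ge OPT(G_{inc})\ge OPT(G)$, as claimed.

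There is essentially no obstacle here: the only point worth spelling out is the (trivial) monotonicity of the MWIS optimum under edge deletion, which is exactly why we are allowed to run $\mathtt{Sparse\_Set}$ on the sparsified graph $G_{inc}$ rather than on $G$ itself. The subtlety that $X$ need not be independent in $G$ is irrelevant for this lemma — it only concerns the weight of $X$ — and is handled later by the second call to $\mathtt{Sparse\_Set}$ in line~\ref{line:-c}.
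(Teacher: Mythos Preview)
Your proof is correct and follows essentially the same approach as the paper: invoke Observation~\ref{observation:directed-bounded-coloring} to get that $c$ is $d$-bounded on $G_{inc}$, apply the $\mathtt{Sparse\_Set}$ guarantee with $f=\beta=d$ to obtain $d\cdot w(X)\ge OPT(G_{inc})$, and finish via the monotonicity $OPT(G_{inc})\ge OPT(G)$ under edge deletion. The only cosmetic difference is that the paper cites the specialized Lemma~\ref{lemma:beta-approximation} (the $f=\beta$ case) rather than the general Lemma~\ref{lemma:sparsify}, which amounts to the same thing here.
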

	\begin{proof}
		Let $OPT(G_{inc})$ be the weight of a MWIS in $G_{inc}$. Observation \ref{observation:directed-bounded-coloring} combined with Lemma \ref{lemma:beta-approximation} implies that $d\cdot w(X)\geq OPT(G_{inc})$. To see that $OPT(G_{inc})\geq OPT(G)$, observe that any independent set in $G$ is also an independent set in $G_{inc}$. Therefore, we get that $d\cdot w(X)\geq OPT(G_{inc})\geq OPT(G)$.
	\end{proof}
	We move on to bound the weight of $X'$. To that end, we make the following observation.
	\begin{observation}\label{observation:directed(-c)}
		$(-c)$ is a $d$-bounded coloring of $G(X)$.
	\end{observation}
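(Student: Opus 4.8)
The plan is to mirror the argument used for Observation \ref{observation:directed-bounded-coloring}, but now with respect to the reversed color order and the induced subgraph $G(X)$. Recall that under the coloring $(-c)$, a neighbor $u$ of a node $v$ has larger $(-c)$-color exactly when $c(u)<c(v)$. So the first step is to characterize, for a fixed node $v\in X$, which neighbors of $v$ in $G(X)$ can satisfy $(-c)(u)>(-c)(v)$, i.e.\ $c(u)<c(v)$. Since $G(X)$ is an induced subgraph of the (undirected support of the) original directed graph $G$, every edge $(u,v)$ of $G(X)$ comes from some directed edge of $G$, oriented either $u\rightarrow v$ or $v\rightarrow u$.

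The key step is the following case analysis on such an edge. If the edge is oriented $v\rightarrow u$ in $G$, then $u$ is an outgoing neighbor of $v$; there are at most $d$ of these in total, so they contribute at most $d$ to $v$'s $(-c)$-larger neighborhood. If instead the edge is oriented $u\rightarrow v$ in $G$ with $c(u)<c(v)$, then $(u\rightarrow v)\in E_{inc}$ by definition of $E_{inc}$, so $u$ is an \emph{incoming} neighbor of $v$ \emph{in the graph $G_{inc}$}; equivalently, $v$ is an outgoing neighbor of $u$ in $G_{inc}$, which means $u\in S(v)$ in the execution of $\mathtt{Sparse\_Set}(c,d)$ on $G_{inc}$ — wait, more directly: $u$ is a neighbor of $v$ in $G_{inc}$ with $c(u)<c(v)$, so $v\in L(u)$ with respect to $c$ in $G_{inc}$. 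I would instead bound this second type of neighbor using Property (1) of Lemma \ref{lemma:sparsify}: for the node $v\in X$, we have $|X\cap L(v)|<\beta/f = d$ where $\beta=d$ and $f=d$ are the parameters of the call $\mathtt{Sparse\_Set}(c,d)$ on $G_{inc}$. Hmm, but $|X\cap L(v)|<1$ forces $X\cap L(v)=\emptyset$. Let me reconsider: the neighbors $u$ of $v$ in $G(X)$ with $c(u)<c(v)$ coming from an edge $u\rightarrow v$ of $G$ are precisely the members of $\{u\in X : v\in L_{G_{inc}}(u)\}$ — these are nodes $u\in X$ that have $v$ as a larger-colored neighbor. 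These are not directly controlled by Property (1) applied at $v$. So the clean bound is: such $u$ is an incoming neighbor of $v$ in $G$ (from the edge $u\rightarrow v$), and $v$ has at most... no, in-degree is unbounded.

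The correct route, I expect, is: the edges of $G(X)$ with $c(u)<c(v)$ at $v$ split as above into (a) outgoing edges of $v$ in $G$ (at most $d$ of them), and (b) edges $u\rightarrow v$ of $G$ with $c(u)<c(v)$, i.e.\ edges of $E_{inc}$ entering $v$ whose tail lies in $X$. For type (b), note $v\in L_{G_{inc}}(u)$ for each such $u$, but we want to count them at $v$; instead observe that $G(X)\cap E_{inc}$ restricted to $X$ has the property that each node $v\in X$ has fewer than $d$ \emph{outgoing} $E_{inc}$-edges within $X$ by Property (1) of Lemma \ref{lemma:sparsify} — that controls $L$, not incoming. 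The main obstacle is therefore making sure type (b) is genuinely bounded; the resolution is that $X$ is selected so that $G(X)$ has small arboricity (Observation \ref{observation:sparse-set-arb}: arboricity $<\beta/f=1$, so $G(X)\cap E_{inc}$ is a forest), but we still have the type-(a) outgoing edges. Combining: every node $v\in X$ has at most $d$ neighbors $u$ in $G(X)$ with $(-c)(u)>(-c)(v)$ — the $d$ outgoing neighbors in $G$ cover type (a), and type (b) edges form a forest oriented toward smaller color hence contribute at most... I would finish by arguing each $v$ has at most $d$ such neighbors total, giving that $(-c)$ is $d$-bounded on $G(X)$, and then invoke Lemma \ref{lemma:sparsify}/\ref{lemma:beta-approximation} on the second $\mathtt{Sparse\_Set}$ call to conclude the desired weight bound on $X'$. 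The delicate point is the precise accounting of the two edge types; everything else is bookkeeping.
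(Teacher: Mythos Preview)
Your case split into type (a) (outgoing from $v$ in $G$) and type (b) (edges $(u\rightarrow v)\in E_{inc}$ with both endpoints in $X$) is exactly right, and type (a) is correctly bounded by $d$. The gap is that you never finish the bound on type (b); in fact type (b) is \emph{empty}, and you came within one step of seeing why, twice.

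First, you computed $|X\cap L(v)|<\beta/f=1$, hence $X\cap L(v)=\emptyset$ for every $v\in X$, and then said this ``is not directly controlled by Property (1) applied at $v$.'' But apply it at $u$ instead: for a type-(b) edge you have $u\in X$ and $v\in L_{G_{inc}}(u)$, so $v\in X\cap L_{G_{inc}}(u)=\emptyset$, a contradiction. Equivalently, the statement ``$X\cap L(v)=\emptyset$ for all $v\in X$'' is precisely the statement that $X$ is an independent set in $G_{inc}$, which kills type (b) outright. This is exactly the argument the paper uses. Second, you invoked Observation~\ref{observation:sparse-set-arb} to get arboricity $<\beta/f=1$ and concluded ``$G(X)\cap E_{inc}$ is a forest''; but arboricity strictly less than $1$ means arboricity $0$, i.e.\ no edges at all --- again type (b) is empty. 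Had you drawn either conclusion, the proof would be complete: every $u\in L'(v)$ is an outgoing neighbor of $v$ in $G$, and there are at most $d$ of those.
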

	\begin{proof}
		Consider a node $v\in X$ and let $L'(v)=\{u\in X\cap N(v)\mid -c(v)<-c(u)\}$ be the set of $v$'s neighbors in $G(X)$ that have a larger color assigned by the coloring $(-c)$. Notice that equivalently, $L'(v)=\{u\in X\cap N(v)\mid c(v)>c(u)\}$. Consider an incoming neighbor $u\in N(v)$ of $v$. If $c(v)>c(u)$, then $(u\rightarrow v)\in E_{inc}$. Since $X$ is an independent set of $G_{inc}$ and $v\in X$, it follows that $u\notin X$. Therefore, all nodes in $L'(v)$ are outgoing neighbors of $v$. Because $v$ has at most $d$ outgoing neighbors, we get that $|L'(v)|\leq d$ which concludes our proof.
	\end{proof}
	The following corollary follows directly from Observations \ref{observation:directed-bounded-coloring} and \ref{observation:directed(-c)}, and Lemma \ref{lemma:beta-approximation}.
	\begin{corollary}\label{corollary:approximation}
		$2d\cdot w(X')\geq w(X)\geq OPT(G)/d$
	\end{corollary}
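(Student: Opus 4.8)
The plan is to obtain the chain of inequalities by instantiating Lemma \ref{lemma:beta-approximation} once for each of the two invocations of $\mathtt{Sparse\_Set}$ in Algorithm \ref{alg:directed}, and then composing the resulting weight bounds with the constants $d$ and $2d$. The second inequality, $w(X)\geq OPT(G)/d$, is just a restatement of Lemma \ref{lemma:directed-opt-inc}; if one prefers to expand it, it comes from Observation \ref{observation:directed-bounded-coloring} (that $c$ is a $d$-bounded coloring of $G_{inc}$) together with Lemma \ref{lemma:beta-approximation} applied to the first call $\mathtt{Sparse\_Set}(c,d)$ on $G_{inc}$, which yields $d\cdot w(X)\geq OPT(G_{inc})$, and the elementary fact $OPT(G_{inc})\geq OPT(G)$ since every independent set of $G$ is also one in $G_{inc}=(V,E_{inc})$.

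For the first inequality, $2d\cdot w(X')\geq w(X)$, I would apply Lemma \ref{lemma:beta-approximation} to the second call, $\mathtt{Sparse\_Set}(-c,d)$, which is run on the node-induced subgraph $G(X)$. By Observation \ref{observation:directed(-c)}, $(-c)$ is a $d$-bounded coloring of $G(X)$, so the hypotheses of Lemma \ref{lemma:beta-approximation} are met with $\beta=d$; its ``$2\beta\cdot w(\text{output})\geq w(\text{vertex set})$'' clause, read with the ambient graph being $G(X)$, is precisely $2d\cdot w(X')\geq w(X)$. Note that only this clause of Lemma \ref{lemma:beta-approximation} is needed for the first invocation; the approximation-ratio clause is not used here (the independence of $X'$ in $G$ is handled separately by Observation \ref{observation:directed-feasible}). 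Concatenating the two bounds gives $2d\cdot w(X')\geq w(X)\geq OPT(G)/d$.

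I do not expect a genuine obstacle here: the statement is a two-line corollary. The single point that warrants care is that when Lemma \ref{lemma:beta-approximation} is invoked the second time, the symbol ``$w(V)$'' appearing there must be interpreted as the total weight of the vertex set of the graph on which the procedure is actually run, namely $G(X)$, so it equals $w(X)$ rather than the weight of the original vertex set; once that is noted, the rest is just tracking the factors $d$ and $2d$.
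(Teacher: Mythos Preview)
Your proposal is correct and matches the paper's approach exactly: the paper states that the corollary follows directly from Observations \ref{observation:directed-bounded-coloring} and \ref{observation:directed(-c)} together with Lemma \ref{lemma:beta-approximation}, and you have simply unpacked those citations (with the second inequality already packaged as Lemma \ref{lemma:directed-opt-inc}). Your remark that the ``$w(V)$'' in Lemma \ref{lemma:beta-approximation} must be read as $w(X)$ when the ambient graph is $G(X)$ is precisely the only point needing care, and you handle it correctly.
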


	We can now prove Theorem \ref{theorem:directed-approximation}.
	\begin{proof}[Proof of Theorem \ref{theorem:directed-approximation}]
		The correctness of Algorithm \ref{alg:directed} follows directly from Observation \ref{observation:directed-feasible} and Corollary \ref{corollary:approximation}. Regarding runtime, the coloring $c$ can be computed in $O(\log ^{*}n)$ rounds by means of a coloring algorithm by Linial \cite{Linial92}.\footnote{While Linial's algorithm computes an $O(\Delta^2)$-coloring in undirected graphs with maximum degree $\Delta$, it can be applied to compute an $O(d^{2})$-coloring in directed graph with out-degree at most $d$ in a straightforward manner.} Following that, $c$ and $(-c)$ use $O(d^2)$ colors. Thus, each invocation of $\mathtt{Sparse\_Set}$ takes $O(d^2)$ rounds. Overall, we get that the runtime of Algorithm \ref{alg:directed} is $O(d^2+\log ^{*}n)$.
	\end{proof}
	\subparagraph*{MWIS approximation requires $O(\log^{*}n)$ rounds.}
	In \cite{CzygrinowHW08},  Czygrinow et al.\ prove the following lemma.\footnote{A similar bound is presented in \cite{LenzenW08} by Lenzen and Wattenhofer.}
	\begin{lemma}[\cite{CzygrinowHW08}]
		There is no deterministic distributed algorithm that finds an independent set of size $\Omega(n/\log ^{*}n)$ in a cycle on n vertices in $o(\log ^{*} n)$ rounds.
	\end{lemma}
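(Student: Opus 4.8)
The plan is to use a Ramsey-theoretic indistinguishability argument in the style of Linial and Naor. Fix a deterministic algorithm $A$ with round complexity $t=t(n)$ and suppose, toward a contradiction, that on every $n$-node cycle (with identifiers drawn from a sufficiently large range $[N]$) the output of $A$ is an independent set of size $\Omega(n/\log^{*}n)$, while $t=o(\log^{*}n)$. After $t$ rounds a node on a cycle has learned only the ordered list of the $2t+1$ identifiers lying within distance $t$ of it (once $n$ is large compared to $t$, this ball is exactly a path segment), so its in/out decision is a fixed function of that list. Restricting attention to lists that are strictly increasing, this induces a $2$-colouring $\chi$ of the $(2t+1)$-element subsets of $[N]$: colour $\{a_{1}<\dots<a_{2t+1}\}$ by the decision $A$ makes for the node occupying the middle position. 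By Ramsey's theorem for $(2t+1)$-uniform hypergraphs, if $N$ is at least the corresponding Ramsey number $r_{2t+1}(m)$ there is a monochromatic set $S\subseteq[N]$ with $|S|=m$; the crucial quantitative point is that $r_{2t+1}(m)$ is bounded by a tower of height $O(t)$ in $m$, so that an appropriate $m=m(n)$ — large in a sense made precise below, yet keeping $N$ within the identifier range the model allows — can be selected precisely because $t=o(\log^{*}n)$.

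Next I would split on the common colour $b$ of $S$. If $b$ is the ``in'' colour, pick $2t+2$ identifiers $s_{1}<\dots<s_{2t+2}$ from $S$, place them on $2t+2$ consecutive cycle vertices (assigning the remaining $n-(2t+2)$ vertices arbitrary distinct identifiers), and note that the two central vertices both see increasing sublists drawn entirely from $S$, hence both are placed in the output set although they are adjacent — so $A$ fails to return an independent set on this instance, a contradiction. Therefore $b$ is the ``out'' colour. In that case I iterate the Ramsey step: repeatedly extract pairwise disjoint monochromatic $(2t+1)$-uniform subsets $S_{1},\dots,S_{q}\subseteq[N]$, each of size $m$, and observe that every one of them must again have the ``out'' colour, since otherwise the argument of the previous sentence applied to that set already produces a failure. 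Taking $q=\lceil n/m\rceil$ and $N$ just large enough to permit $q$ extractions, lay the identifiers of $S_{j}$ in increasing order on the $j$-th block of $m$ consecutive cycle vertices. Any vertex at distance more than $t$ from both ends of its block sees a strictly increasing sublist contained in a single $S_{j}$ and is therefore placed ``out''; only the at most $2t$ vertices near each of the $2q$ block boundaries (including the wrap-around of the cycle, which contributes only $O(t)$ further such vertices) can be ``in''. Hence the output independent set has size at most $2tq=2tn/m$, which by the choice of $m$ is $o(n/\log^{*}n)$, contradicting the assumed size guarantee.

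I expect the one genuinely delicate point to be the quantitative balancing in the Ramsey application: $m$ must be large enough that $2tn/m=o(n/\log^{*}n)$, i.e.\ $m=\omega(t\log^{*}n)$, while $N\approx r_{2t+1}(m)+n$ must stay within the identifier space the model permits. Since $r_{2t+1}(m)$ is roughly a tower of height $2t$ over $m$, this is exactly where $t=o(\log^{*}n)$ enters: it guarantees that $\log^{*}N=(1+o(1))\log^{*}n$ remains large relative to the uniformity $2t+1$, leaving room for a monochromatic set of the required size. (If one is willing to allow an identifier space whose size is a tower of height $O(t)$ in $n$, the argument simplifies considerably: a single Ramsey application already yields $|S|\ge n$, and then a single increasing block on all $n$ cycle vertices forces every vertex except the $O(t)$ affected by the wrap-around to be ``out''.) The remaining verifications — that a $t$-round view on an $n$-cycle is exactly a $(2t+1)$-vertex path segment for $n$ large, that the identifiers outside $\bigcup_{j}S_{j}$ may be assigned arbitrarily, and that ``fails to be an independent set'' and ``is an independent set but too small'' are both contradictions with the hypothesis — are routine.
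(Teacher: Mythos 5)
Your argument is correct and is essentially the proof from the cited literature: the paper itself does not prove this lemma (it is quoted verbatim from Czygrinow et al.\ \cite{CzygrinowHW08}, with Lenzen--Wattenhofer \cite{LenzenW08} giving a similar bound), and those proofs use exactly the Ramsey-theoretic indistinguishability scheme you describe --- colour $(2t+1)$-subsets by the middle node's decision on the increasing arrangement, rule out the ``in'' colour by placing $2t+2$ consecutive increasing identifiers, and otherwise tile the cycle with disjoint monochromatic ``out'' blocks so that only the $O(t)$ vertices per block boundary can join, giving an output of size $O(tn/m)=o(n/\log^{*}n)$. The quantitative balancing you flag is handled exactly as you say: since $r_{2t+1}(m)$ is a tower of height $O(t)=o(\log^{*}n)$ over $m$ and one can take $m=\mathrm{poly}(\log^{*}n)$, the identifier range stays within $O(n)$, so the construction fits the standard polynomial ID space. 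The only details left implicit are routine (fixing a consistent port/orientation convention so the decision really is a function of the $(2t+1)$-set, and the remainder block when $m\nmid n$), and they do not affect the argument; note also that fixing an orientation only strengthens the bound, which is the form the paper actually invokes for oriented rings.
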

	We note that this lemma holds even in the LOCAL model, where the message size is not restricted. Moreover, the proof of Czygrinow et al.\ also applies for the case of oriented rings (where the edges are directed such that each node has out-degree $1$). Since any $n$-node ring contains an independent set of size $\Omega(n)$, this lower bound implies that any algorithm that computes an $O(\log ^{*}n)$-approximation for (unweighted) MaxIS in an oriented ring requires $\Omega(\log^{*}n)$ rounds. Therefore, the dependency on $n$ in the runtime of Algorithm \ref{alg:directed} cannot be improved.

	\clearpage
	\bibliographystyle{alpha}
	
	\bibliography{references}
\end{document}